\documentclass[journal]{IEEEtran}
%\IEEEoverridecommandlockouts
% The preceding line is only needed to identify funding in the first footnote. If that is unneeded, please comment it out.
% \usepackage{generic}
\usepackage{cite}
\usepackage{amsmath,amssymb,amsfonts}
\usepackage{algorithmic}
\usepackage{graphicx}
\usepackage{textcomp}
\usepackage{xcolor}
\usepackage{nomencl}
\usepackage{verbatim}
\usepackage{caption}
\usepackage{subcaption}
\usepackage{tabularray}
\usepackage{amsthm}
\usepackage{multirow}
\usepackage{epstopdf}
% \allowdisplaybreaks
\def\BibTeX{{\rm B\kern-.05em{\sc i\kern-.025em b}\kern-.08em
    T\kern-.1667em\lower.7ex\hbox{E}\kern-.125emX}}

\usepackage{etoolbox}
\newtheoremstyle{mystl}
  {0} % ABOVESPACE
  {0} % BELOWSPACE
  {\itshape}
  %{\normalfont} % BODYFONT
  {}
  %{\parindent} % INDENT (empty value is the same as 0pt)
  {\bfseries}  
  %{\itshape} % HEADFONT
  {.} % HEADPUNCT
  {5pt plus 1pt minus 1pt} % HEADSPACE
 {\thmname{#1} \thmnumber{#2}\ifblank{#3}{}{ (\thmnote{#3})}} % CUSTOM-HEAD-SPEC
\theoremstyle{mystl}

\newtheorem{thm}{Theorem}
\newtheorem{prop}{Proposition}
\newtheorem{asm}{Assumption}
\newtheorem{rem}{Remark}
\newtheorem{lem}{Lemma}

\begin{document}

\title{Performance-Barrier Event-Triggered\\ PDE Control of Traffic Flow 
%Congested ARZ Traffic PDEs\\
%\thanks{Identify applicable funding agency here. If none, delete this.}
}

\author{ Peihan Zhang$^{1}$, \IEEEmembership{Student Member, IEEE} Bhathiya Rathnayake$^{1}$, \IEEEmembership{Student Member, IEEE}, Mamadou Diagne, \IEEEmembership{Senior Member, IEEE}, and Miroslav Krstic, \IEEEmembership{Fellow, IEEE}
\thanks{$^{1}$The first two authors contributed equally to the development of this contribution. P.  Zhang, M. Diagne and M. Krstic  are with the Department of Mechanical and Aerospace Engineering, University of California San Diego, 9500 Gilman Dr, La Jolla, CA 92093. Email: \{pez004, mdiagne, mkrstic\}@ucsd.edu}
\thanks{B. Rathnayake is with the Department of Electrical and Computer Engineering, University of California San Diego, 9500 Gilman Dr, La Jolla, CA 92093. Email: brm222@ucsd.edu. Corresponding author: B. Rathnayake.} 
}

\maketitle

\begin{abstract}
For stabilizing stop-and-go oscillations in traffic flow by actuating a variable speed limit (VSL) at a downstream boundary of a freeway segment, we introduce event-triggered PDE backstepping designs employing the recent concept of performance-barrier event-triggered control (P-ETC). Our design is for linearized hyperbolic Aw-Rascle-Zhang (ARZ) PDEs governing traffic velocity and density. Compared to continuous feedback, ETC provides a piecewise-constant VSL commands---more likely to be obeyed by human drivers. Unlike the existing ``regular'' ETC (R-ETC), which enforces conservatively a strict decrease of a Lyapunov function, our performance-barrier (P-ETC) approach permits an increase, as long as the Lyapunov function remains below a performance barrier, resulting in fewer control updates than R-ETC. To relieve VSL from continuously monitoring the triggering function, we also develop periodic event-triggered  (PETC) and self-triggered  (STC) versions of both R-ETC and P-ETC. These are referred to as R/P-PETC and R/P-STC, respectively, and we show that they both guarantee Zeno-free behavior and exponential convergence in the spatial $L^2$ norm. With comparative simulations, we illustrate the benefits of the performance-barrier designs through traffic metrics (driver comfort, safety, travel time, fuel consumption). The proposed algorithms reduce discomfort nearly in half relative to driver behavior without VSL, while tripling the driver safety, measured by the average dwell time, relative to the R-ETC frequent-switching VSL schedule.
\end{abstract}

% \begin{IEEEkeywords}
% ARZ model, PDE backstepping control design, performance-barrier, event-triggered control, periodic event-triggered control, self-triggered control.
% \end{IEEEkeywords}

% \emph{Notation:}
% The notations used in this paper are listed below.
% \printnomenclature

% \nomenclature{$\rho (x,t)$}{Traffic density, the number of vehicles per unit length}
% \nomenclature{$v(x,t)$}{Traffic speed}
% \nomenclature{$q(x,t)$}{Traffic flux, the number of vehicles per unit time which cross a given point on the road}
% \nomenclature{${\rho}_m$}{Maximum traffic density, a constant}
% \nomenclature{$p(\rho)$}{Traffic pressure}
% \nomenclature{$v_f$}{Free flow or maximum velocity, a constant}
% \nomenclature{\tau}{Relaxation time related to driving behavior}
% \nomenclature{\gamma}{A coefficient, $\gamma \in \mathbb{R}^+$}
% \nomenclature{V(\rho)}{The equilibrium traffic speed profile, i.e., the traffic speed at steady state given a density}

%\tableofcontents

\section{Introduction}

\subsection{Boundary control of ARZ traffic model: an embodiment of  coupled  hyperbolic PDE systems}

Traffic congestion refers to the situation where the number of vehicles on the road exceeds its effective capacity.  This occurrence leads to major setbacks in economic development, primarily due to the time lost by drivers, unproductive fuel consumption, and excess of carbon dioxide (CO$_2$) emissions, among other contributing factors. In congested freeways, a common sight is the ``stop-and-go" phenomenon, where vehicles are frequently forced to come to a halt due to heavy traffic, leading to hazardous and uncomfortable driving conditions.  Various  macroscopic models have been conceived  to enhance understanding of traffic flow dynamics. The controlling  of traffic systems uses ramp metering to regulate the on-ramp flow rate by traffic light  and Varying Speed Limit (VSL) actuators.  Aw–Rascle–Zhang (ARZ)  \cite{awResurrection2000},  first-order Lighthill and Whitham and Richards (LWR)   \cite{whithamLinear2011} as well as the second-order Payne-Whitham (PW) model \cite{jModel1971} %and their rich extensions (see \cite{zhang2002non,jin2013multi} and the references therein) 
are useful models that have been proven to adequately serve control goals. The
ARZ model has the advantages of 1) successfully capturing the anisotropic dynamics of the traffic flow given the fact that drivers mainly react to up front traffic conditions; 2) being physically reasonable to avoid backward-propagating traffic; 3) reflecting accurately the stop-and-go-like instabilities.

%Our present study revolves around the resource-aware stabilization of  consisting of coupled nonlinear hyperbolic PDEs of traffic density and velocity on a road segment.  ARZ, it is worth noting that    

Recent control-oriented results that relate to   a rich set of ARZ traffic congestion models have shown promise in enhancing traffic management \cite{yu2022traffic}. These findings  expand on an early PDE backstepping control design \cite{Vazquez2011} for  $2\times 2$ linear hyperbolic systems in the canonical setting.  The stabilization of PDE model of traffic systems has seen other  advancements, particularly through the application of Lyapunov methods. Matrix inequality and gain conditions that ensure exponential stability when employing Proportional (P) or Proportional-Integral (PI) boundary feedback control laws resulting from a Lyapunov analysis are derived in \cite{zhang2017necessary} and \cite{terrand2019adding,zhangPI2019}, respectively. Studies have addressed the challenges posed by interesting traffic scenarios, which encompass interconnected highways \cite{zhang2021boundary},  integration of Adaptive Cruise Control-equipped (ACC-equipped) vehicles \cite{bekiaris2020pde}, traffic systems featuring Connected/Automated Vehicles (CAVs) \cite{qi2022delay}, or stabilization of moving shockwaves \cite{yuBilateral2019,bastin2019exponential}. From an optimal control perspective where minimizing the total traveling time is the objective function, PDE models of traffic systems  have led to several contributions \cite{bayen2004network,gugat2005optimal}.  Finally, exponentially stabilizing controllers for nonlinear hyperbolic traffic flow systems have recently been developed in \cite{karafyllis2018feedback,karafyllis2019feedback}.

\subsection{Sampled-data and event-based control of PDE systems }
Despite the rich literature on boundary control of the ARZ traffic model, it remains evident that the assumption of drivers responding promptly to a continuously updating advisory speed is not realistic. An immediate workaround is discretizing the continuous-time control law and implementing it as sampled-data control in a zero-order hold fashion. However, while a lower sampling rate of discretization is more realistic for drivers to adhere to the advisory speed, there is a caveat: the stability or convergence properties ensured by the continuous-time control may no longer be valid. As a result, it is crucial to establish  the theoretical maximum allowable sampling interval of sampling schedules that maintain the desired closed-loop system properties. This upper limit must be determined based on worst-case scenarios, regardless of how rare or infrequent they might be. Consequently, the sampling schedules usually need to be chosen conservatively. Event-triggered control (ETC) provides a systematic solution to tackle the conservativeness of sampled-data control by bringing feedback into control update tasks.  The control input is updated only when triggered by an appropriate event triggering mechanism based on system states and is held constant between events. This approach removes the need to confine the sampling period to a worst-case value, allowing for fewer control updates while preserving a satisfactory closed-loop system performance.

In recent times, progress have been made in the domain of sampled-data control and ETC for both parabolic and hyperbolic PDE systems. For parabolic PDEs, several key contributions in sampled-data control can be highlighted by the works of \cite{fridman2012robust,karafyllis2018sampled,katz2022sampled}. In the realm of ETC for parabolic PDEs, references include \cite{espitia2021event,katz2020boundary,rathnayake2021observer,rathnayake2022sampled,rathnayake2024observer,wang2022event}. Conversely, for hyperbolic PDEs, sampled-data control has been extensively studied in papers such as \cite{davo2018stability,karafyllis2017sampled,wang2022sampled}. The area of ETC of hyperbolic systems is well-covered by works like \cite{espitiaObserverbased2020,wang2022eventb,diagne2021event,espitiaTrafficFlowControl2022,espitiaEventBasedBoundaryControl2018}. Among these results, only \cite{diagne2021event} deals with nonlinear hyperbolic systems. Our  contribution advances several  early studies in the field including  \cite{espitia2020event, espitiaTrafficFlowControl2022,espitiaObserverbased2020}. The work of \cite{espitia2020event} elucidates the design of an ETC, utilizing varying speed limits (VSL) to suppress stop-and-go traffic oscillations. The study \cite{espitiaTrafficFlowControl2022} proposes an observer-based ETC that simultaneously stabilizes the traffic flow on two connected roads. Further, \cite{espitiaObserverbased2020} focus on ETC for linear 2 $\times$ 2 hyperbolic systems, which can be regarded as a generalization of the linearized ARZ model. The studies \cite{katz2020boundary,rathnayake2021observer,rathnayake2022sampled,rathnayake2024observer,wang2022event,wang2022eventb,espitia2020event,espitiaTrafficFlowControl2022,espitiaObserverbased2020,espitiaEventBasedBoundaryControl2018}, spanning both parabolic and hyperbolic PDEs, are based on dynamic event-triggering mechanisms first introduced in the seminal work \cite{girardDynamic2015} for systems described by ordinary differential equations. 

One limitation of ETC strategies is that they require continuous monitoring of triggering functions, impeding digital implementation.   We use the term \textit{continuous-time event-triggered control (CETC)} to refer to these strategies. One solution is to check the event-triggering function periodically. This approach is commonly referred to as \textit{periodic event-triggered control (PETC)} \cite{heemelsPeriodic2013}, where the triggering function is evaluated at regular time intervals. Although the triggering function is checked periodically, the control input is still updated aperiodically, coinciding with events. An alternative solution is \textit{self-triggered control (STC)} \cite{heemelsIntroduction2012}, which predicts the next event time at the current event time, thereby eliminating the need for continuous monitoring of event-triggering functions. Both PETC and STC maintain the resource efficiency of CETC, as control updates are made aperiodically and exclusively at event times. Additionally, these strategies are amenable to digital implementations. In the past few years, quite interesting studies have been conducted on both PETC \cite{heemelsPeriodic2013, wangPeriodic2020} and STC \cite{heemelsIntroduction2012, yiDynamic2019} for ODE systems. To the best of our knowledge, studies devoted to PETC and STC strategies for infinite-dimensional systems include \cite{rathnayake2023observer,rathnayake2023observerhjk,rathnayake2023prfmnce, wakaikiStability2022, wakaikiEventTriggered2020}. However, none of these studies address coupled hyperbolic PDEs like the ARZ model. 

\subsection{Results}
Leveraging the recently introduced \textit{performance-barrier based ETC (P-ETC)} for nonlinear ODEs \cite{ong2023performance} and its adaptation to boundary control of a class of parabolic PDEs \cite{rathnayake2023prfmnce}, our work applies P-ETC to variable speed limit (VSL) boundary control of the linearized inhomogeneous ARZ model. This  approach results in substantially longer intervals between events (\textit{dwell-times}) compared to the dynamic ETC strategies \cite{espitia2020event,espitiaEventBasedBoundaryControl2018}  previously applied to the linearized ARZ model.

The triggering mechanisms discussed in \cite{espitiaEventBasedBoundaryControl2018} and \cite{espitia2020event} enforce a monotonic decrease in the closed-loop system's Lyapunov function. We classify these strategies at a broader level as \textit{regular} ETC (R-ETC), distinguishing them from the P-ETC introduced in the present contribution. The monotonic decrease of the Lyapunov function is achieved by ensuring its time derivative remains strictly negative. This approach certifies that the Lyapunov function decreases faster than a specific exponentially decaying signal, which depends on initial data and is known as the \textit{performance-barrier}. Drawing on previous research \cite{ong2023performance} and \cite{rathnayake2023prfmnce}, allowing deviations from a monotonically decreasing Lyapunov function, while still adhering to the performance barrier, might prolong the duration between events. To enable this leeway in the Lyapunov function's behavior, we incorporate the so-called \textit{performance residual} into the event-triggering mechanism. This residual is defined as the difference between the performance barrier and the Lyapunov function. Consequently, by design, the P-ETC allows for longer dwell-times in any given state, compared to the R-ETC. Notably, this is achieved without inducing Zeno behavior in the closed-loop system, while still maintaining adherence to the performance barrier, leading to the exponential convergence of the system states to zero in the spatial $L^2$ norm. Since the triggering function requires continuous monitoring in order to detect events, we refer to this strategy specifically as P-CETC. For similar reasons, we refer to the strategies in \cite{espitiaEventBasedBoundaryControl2018} and \cite{espitia2020event} as R-CETC.

Building upon the techniques introduced in \cite{rathnayake2023observer} for parabolic PDEs, we further aim to circumvent the need for continuous monitoring of the triggering functions in the R-CETC and P-CETC. To achieve this, we extend these methods to PETC and STC, resulting in what we refer to as R- and P-PETC and R- and P-STC, respectively. Both R-PETC and P-PETC employ periodic event-triggering functions, which are established by deriving explicit upper bounds on the underlying continuous-time event-triggering functions. Since the triggering functions are evaluated periodically, Zeno behavior is inherently absent in both R-PETC and P-PETC. In the case of STC, we develop the R- and P-STC by designing state dependent functions with uniform and positive lower bounds, which when evaluated at the current control update time produces the waiting time until the next control update. These functions are designed via obtaining upper bounds on the variables that constitute the R- and P-CETC event-triggering functions. Both R-STC and P- STC are also inherently Zeno-free, since they maintain a uniform, positive lower bound for dwell-times. The R-PETC and R-STC force the Lyapunov function to strictly decrease along the closed-loop system solution, while the P-PETC and P-STC  permit occasional increases in the Lyapunov function, as long as it stays below the established performance barrier, resulting in longer dwell-times between events compared to their regular counterparts. All the introduced PETC and STC strategies guarantee the exponential convergence of the closed-loop system states to zero in the spatial $L^2$ norm.

The sparsity of P-ETC VSL updates, owing to longer dwell-times compared to R-ETC updates can lead to improved driving safety. Over a certain period, when the VSL is updated less frequently, fewer drivers are distracted by changes in the VSL while passing through the VSL zone after having already adjusted their speed once. Distracting drivers from tasks critical for safe driving to focus on a competing activity may result in insufficient or no attention being paid to essential driving activities, as noted in \cite{sheridan2004driver}. Safe driving involves maintaining a safe distance from the vehicle ahead, and in this context, a competing activity would involve reacting to the changing VSL. If the VSL changes frequently, drivers might either disregard the VSL suggestions, leading to stop-and-go oscillations and hence, an uncomfortable driving experience among other negative outcomes, or compromise their safety by focusing on frequent speed adjustments instead of maintaining safe distances between vehicles. However, P-ETC achieves three times longer average dwell times than R-ETC, thereby providing VSL schedules that drivers can adhere to without compromising safety, while also reducing discomfort nearly in half compared to driver behavior without VSL.    

\subsection{Contributions}
%To summarize, the contributions of this paper are as follows:\\
\noindent\textit{Major contributions:}
\begin{itemize}
\item Design of P-ETC for boundary control of the linearized ARZ model, leading to sparser control updates compared to the class of ETC strategies \cite{espitiaEventBasedBoundaryControl2018} and \cite{espitia2020event} applied to the linearized ARZ model. 
\item The first PETC and STC approach for coupled linear hyperbolic PDEs, specifically extending P-CETC to P-PETC and P-STC to avoid continuous monitoring of the P-CETC event-triggering function required for event detection. None of the prior works \cite{rathnayake2023observer,rathnayake2023observerhjk,rathnayake2023prfmnce, wakaikiStability2022, wakaikiEventTriggered2020} have dealt with PETC and STC of coupled PDEs.
\item Demonstration that P-ETC enables a trade-off between the level of safety and driver comfort in traffic management through the tuning of a parameter $c \geq 0$, referred to as the \textit{resource-aware parameter}.
\end{itemize}

\noindent \textit{Other contribution:}
\begin{itemize}
\item Extension of the R-CETC \cite{espitiaEventBasedBoundaryControl2018} and \cite{espitia2020event} to R-PETC and R-STC to avoid continuous monitoring of the R-CETC event-triggering function.
\end{itemize}

The interrelation among various technical and principal results in the paper are depicted in Fig. \ref{results_all}.

\begin{figure}
    \centering
\includegraphics[width=1\linewidth]{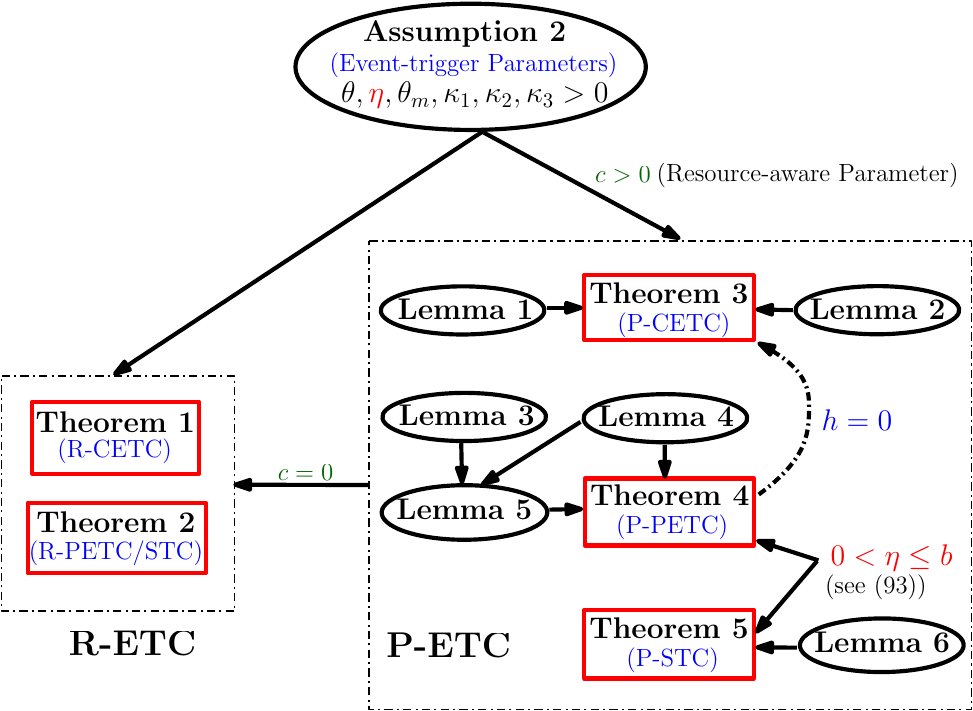}
    \caption{Interrelation among various technical and principal
results in the paper.}
    \label{results_all}
\end{figure}

\subsection{Notation}
$\mathbb{R}_{+}$ is the positive real line while $\mathbb{N}$ is the set of natural numbers.
Let $\alpha:[0,\ell] \times \mathbb{R}_{+} \rightarrow \mathbb{R}$ be given. 
$\alpha[t]$ denotes the profile of $\alpha$ at certain $t \geq 0$, i.e., $(\alpha[t])(x)=\alpha(x, t)$, for all $x \in[0,\ell]$. 
The set of all functions $g:[0,\ell] \rightarrow \mathbb{R}^n$ such that $\int_0^\ell g(x)^Tg(x) d x<\infty$ is denoted by $L^2\left([0,\ell], \mathbb{R}^n\right)$.
Given a topological set $S$, and an interval $I \subseteq \mathbb{R}$, the set $\mathcal{C}^0(I ; S)$ is the set of continuous functions $g: I \rightarrow S$.
Variables and functions related to \textit{regular} ETCs are denoted with a superscript $r$ while those of \textit{performance-barrier} ETCs are denoted with $p$.
% The tilde symbol  $\tilde {}$ is placed over R/P-PETC variables while the check symbol $\check {}$  is used for R/P-STC.

\subsection{Organization}
The rest of the paper is organized as follows: Section \ref{sec:ctn} introduces the inhomogeneous ARZ model alongside its continuous-time control and emulation. Section \ref{sec:regular} details the regular event-triggered control (R-ETC), consisting of preliminary R-CETC, and the newly developed R-PETC and R-STC. In Section \ref{sec:perf}, the performance-barrier event-triggered control (P-ETC), including P-CETC, P-PETC, and P-STC, is discussed. Simulations are presented in Section \ref{sec:sim}, followed by conclusions in Section \ref{sec:Conclusions}.

\section{Continuous-time Control and Emulation}
\label{sec:ctn}
In this section, we briefly present the ARZ model under continuous-time PDE backstepping control. This is followed by its emulation for ETC.

\subsection{Aw–Rascle–Zhang (ARZ) model}
The inhomogeneous ARZ model is a second-order nonlinear hyperbolic PDE system that describes the relationship between traffic density $\rho(x,t)$ and velocity $v(x,t)$ as given by 
\begin{align}
    \partial_t \rho+\partial_x(\rho v) & =0, \label{eq:ARZ-rho}\\
    \partial_t v+\left(v-\rho p^{\prime}(\rho)\right) \partial_x v & =\frac{V(\rho)-v}{\tau} \label{eq:ARZ-v}.
\end{align}
Here, the term $p(\rho)$ is the traffic pressure, an increasing function of density $\rho(x,t)$ given by
\begin{equation}
    p(\rho)=c_0\rho^\gamma,
\end{equation}
where $c_0, \gamma \in \mathbb{R}_{+}$. The term $\tau$ is the relaxation time related to the time scale of drivers' behavior adapting to the equilibrium density-velocity profile. The term $V(\rho)$ describes the velocity-density relationship at the equilibrium $(\rho^\star,v^\star)$, as given by Greenshield’s model \cite{greenshieldsSTUDY1935}
\begin{equation}
    V(\rho)=v_f\left(1-\Big(\frac{\rho}{\rho_m}\Big)^{\gamma}\right),
\end{equation}
where $v_f$ is the maximum velocity, $\rho_m$ is the maximum density and $v^{\star}=V\left(\rho^{\star}\right)$.

\begin{asm}[Boundary conditions of the ARZ model]
    We assume a constant traffic flux $q^{\star}=\rho^{\star} v^{\star}$ entering the domain from $x=0$, while a varying speed limit (VSL) is imposed at the outlet $x=\ell$, where $\ell>0$ is the road length. Therefore, the boundary conditions are
    \begin{align}
            \rho(0, t) & =\frac{q^{\star}}{v(0, t)}, \label{eq:bd-cond-ARZ-rho} \\
            v(\ell, t) & =U(t)+v^{\star} \label{eq:bd-cond-ARZ-v},
    \end{align}
    where $U(t)$ represents the variation from the steady-state velocity $v^{\star}$ and will be designed later. Additionally, we assume that drivers adhere to $v(\ell, t)$ as indicated on the VSL signs.
\end{asm}

The linearized ARZ model around the steady state $\left(\rho^{\star}, v^{\star}\right)$ with boundary conditions is given by
\begin{align}
    \partial_t \tilde{\rho}+v^{\star} \partial_x \tilde{\rho} & =-\rho^{\star} \partial_x \tilde{v}, \label{eq:linear-ARZ-rho}\\
    \partial_t \tilde{v}-\left(\rho^{\star} p^{\prime}\left(\rho^{\star}\right)-v^{\star}\right) \partial_x \tilde{v} & =\frac{\tilde{\rho} V^{\prime}\left(\rho^{\star}\right)-\tilde{v}}{\tau}  \label{eq:linear-ARZ-v}, \\
    \tilde{\rho}(0, t) & = -\frac{\rho^{\star}}{v^{\star}} \tilde{v}(0, t), \label{eq:bd-cond-linear-ARZ-rho}\\
    \tilde{v}(\ell, t) & = U(t) \label{eq:bd-cond-linear-ARZ-v},
\end{align}
where $(\tilde{\rho}(x, t), \tilde{v}(x, t))$ are the deviations from the equilibrium and are defined as $\tilde{\rho}(x, t)=\rho(x, t)-\rho^{\star}$, $\tilde{v}(x, t)=v(x, t)-v^{\star} $.
By following the transformations (see \cite{yuVarying2018}) 
\begin{align}
\bar{w}(x, t) & =\exp \left(\frac{c_1}{v^{\star}} x\right)\left(\frac{\gamma p^{\star}}{\rho^{\star}} \tilde{\rho}(x, t)+\tilde{v}(x, t)\right), \\
\bar{v}(x, t) & =\exp \left(\frac{c_2}{\gamma p^{\star}-v^{\star}} x\right) \tilde{v}(x, t),
\end{align}
system \eqref{eq:linear-ARZ-rho}-\eqref{eq:bd-cond-linear-ARZ-v} can be mapped to a first-order 2 $\times$ 2 hyperbolic system in $(\bar{w}, \bar{v})$ as
\begin{align}
\partial_t \bar{w}+v^{\star} \partial_x \bar{w} & =\bar{c}_1(x) \bar{v} , \label{eq:sys-wv-wbar}\\
\partial_t \bar{v}-\left(\gamma p^{\star}-v^{\star}\right) \partial_x \bar{v} & =\bar{c}_2(x) \bar{w} , \label{eq:sys-wv-vbar}\\
\bar{w}(0, t) & =-r_0 \bar{v}(0, t), \label{eq:sys-wv-bd-cond-w}\\
\bar{v}(\ell, t) & =r_1 U(t) \label{eq:sys-wv-bd-cond-v},
\end{align}
where
\begin{align}\label{cxx}
 \bar{c}_1(x)&=\exp \left(\frac{c_1}{v^{\star}} x-\frac{c_2}{\gamma p^{\star}-v^{\star}} x\right) c_2, \\
 \bar{c}_2(x)&=-\exp \left(\frac{c_2}{\gamma p^{\star}-v^{\star}} x-\frac{c_1}{v^{\star}} x\right) c_1,
\end{align}
with \begin{align}\label{para1}r_0  &=\frac{\gamma p^{\star}-v^{\star}}{v^{\star}},\quad  r_1 = \exp \left(\frac{c_2}{\gamma p^{\star}-v^{\star}} \ell\right),\\\label{para2}c_1 &=\frac{1}{\tau} \frac{v_f}{\rho_m} \frac{\rho^{\star}}{\gamma p^{\star}}, \quad  c_2 =\frac{1}{\tau}\left(\frac{v_f}{\rho_m} \frac{\rho^{\star}}{\gamma p^{\star}}-1\right).\end{align} The parameters $c_1$, $c_2$, and $r_0$ satisfy
    \begin{equation}\label{cndtns}
            c_1>\frac{1}{\tau}>0, \quad
            c_2=c_1-\frac{1}{\tau}>0, \quad
            r_0>0 ,
    \end{equation}
which represents the instability condition of  \eqref{eq:linear-ARZ-rho}-\eqref{eq:bd-cond-linear-ARZ-v} within congested regime \cite{yuVarying2018} . 
Our objective is to achieve exponential convergence of $(\bar{w},\bar{v})$ to zero in the spatial $L^2$ norm. 

\subsection{Continuous-time PDE Backstepping Control}

Consider the invertible backstepping transformation
\begin{align}
\alpha(x, t)=&\bar{w}(x, t)-\int_0^x K^{11}(x, \xi) \bar{w}(\xi, t) d \xi \nonumber \\
&-\int_0^x K^{12}(x, \xi) \bar{v}(\xi, t) d \xi , \label{eq:K1}\\
\beta(x, t)= & \bar{v}(x, t)-\int_0^x K^{21}(x, \xi) \bar{w}(\xi, t) d \xi \nonumber \\
& -\int_0^x K^{22}(x, \xi) \bar{v}(\xi, t) d \xi , \label{eq:K2}
\end{align}
where $K^{i j}(x, \xi), i, j=1,2$ are the kernels that evolve in the triangular domain $\mathcal{T}=\{(x, \xi): 0 \leq \xi \leq x \leq \ell\}$ and are governed by equations detailed in \cite{yuVarying2018}.
Then, adopting standard arguments in PDE backstepping, we can show that the transformation \eqref{eq:K1},\eqref{eq:K2}, and the continuous-time boundary control law $U(t)$ derived in \cite{yuVarying2018}, given by
\begin{equation}
\label{eq:U-wbar-vbar}
    \begin{aligned}
    % U(t)
    % = & \frac{1}{r_1} \bigg(\!\int_0^\ell \!\! K^{21}(\ell, \xi) \bar{w}(\xi, t) d \xi \\
    % & \qquad+ \int_0^\ell \!\! K^{22}(\ell, \xi) \bar{v}(\xi, t) d \xi\bigg),
    U(t)
    \!=\!  \frac{1}{r_1} \!\int_0^\ell  \!\! \left(  K^{21} (\ell, \xi) \bar{w}(\xi, t) 
    \!+\!   K^{22} (\ell, \xi) \bar{v}(\xi, t) \right) d \xi,
    \end{aligned}
\end{equation}
map the system \eqref{eq:sys-wv-wbar}-\eqref{cndtns} into the target $(\alpha, \beta)$-system:
\begin{align}
\partial_t \alpha+v^{\star} \partial_x \alpha & =0, \label{eq:sys-ab-a} \\
\partial_t \beta-\left(\gamma p^{\star}-v^{\star}\right) \partial_x \beta & =0, \label{eq:sys-ab-b}\\
\alpha(0, t) & =-r_0 \beta(0, t), \label{eq:sys-ab-bd-cond-a}\\
\beta(\ell, t) & =0 \label{eq:sys-ab-bd-cond-b}.
\end{align}

The inverse transformation of \eqref{eq:K1},\eqref{eq:K2} is given by:
\begin{align}
\bar{w}(x, t)=\alpha(x, t)+ & \int_0^x L^{11}(x, \xi) \alpha(\xi, t) d \xi \nonumber\\
& +\int_0^x L^{12}(x, \xi) \beta(\xi, t) d \xi, \label{eq:L1} \\
\bar{v}(x, t)=\beta(x, t)+ & \int_0^x L^{21}(x, \xi) \alpha(\xi, t) d \xi \nonumber \\
& +\int_0^x L^{22}(x, \xi) \beta(\xi, t) d \xi, \label{eq:L2}
\end{align}
where kernels $L^{i j}(x, \xi), i, j=1,2$ are a specific case of the general form of kernel equations detailed in \cite{Vazquez2011}. The input $U(t)$ can also be expressed in target system $(\alpha,\beta)$ states as
\begin{equation}
\label{eq:U-alpha-beta}
\begin{aligned}
    % U(t) = &\frac{1}{r_1}\int_0^\ell L^{21}(\ell, \xi) \alpha(t, \xi) d \xi\\
    % & +\frac{1}{r_1} \int_0^\ell L^{22}(\ell, \xi) \beta(t, \xi) d \xi.
    U(t) \!=\! \frac{1}{r_1} \! \int_0^\ell \!\! \left( L^{21}(\ell, \xi) \alpha(\xi,t) + L^{22}(\ell, \xi) \beta(\xi,t) \right) d \xi.
\end{aligned}
\end{equation}

\subsection{Emulation of the PDE Backstepping Control}
We aim to achieve exponential convergence of the states of the system \eqref{eq:sys-wv-wbar}-\eqref{cndtns} to zero by sampling the continuous-time controller $U (t)$ given by \eqref{eq:U-wbar-vbar} at a sequence of time instants $\{t_k\}_{k\in \mathbb N}$.
These time instants will be determined via several event triggers in subsequent sections.
The control input is held constant between two successive time instants and is updated when a certain condition is met.
We define the control input for $t \in\left[t_k, t_{k+1}\right), k \in \mathbb{N}$ as
\begin{equation}
\label{eq:Uk}
\begin{aligned}
    &U_k := U(t_k) \\& = \frac{1}{r_1}\int_0^\ell K^{21}(\ell, \xi) \bar{w}(\xi, t_k) d \xi +\frac{1}{r_1} \int_0^\ell K^{22}(\ell, \xi) \bar{v}(\xi,t_k) d \xi\\&= \frac{1}{r_1}\int_0^\ell L^{21}(\ell, \xi) \alpha(\xi,t_k) d \xi +\frac{1}{r_1} \int_0^\ell L^{22}(\ell, \xi) \beta(\xi,t_k) d \xi.
\end{aligned}
\end{equation}
As a result, the boundary conditions in \eqref{eq:bd-cond-ARZ-v},\eqref{eq:bd-cond-linear-ARZ-v},\eqref{eq:sys-wv-bd-cond-v} become $v(\ell, t) =U_k+v^{\star}$, $\tilde v(\ell, t) =U_k$, and
\begin{equation}
\label{eq:sys-wv-bd-cond-Uk}
    \bar{v}(\ell, t) =r_1 U_k.
\end{equation}
%respectively.
The actuation deviation $d(t)$ between the continuous-time control and its sampled counterpart, i.e., the input holding error, is defined as follows for $t \in\left[t_k, t_{k+1}\right), k \in \mathbb{N}$:
\begin{equation}
\begin{aligned}
\label{eq:d(t)}
d(t) := & U_k-U(t) \\
= &\frac{1}{r_1} \int_0^\ell L^{21}(\ell, \xi)\left(\alpha\left(\xi, t_k\right)-\alpha(\xi, t)\right) d \xi \\
& +\frac{1}{r_1} \int_0^\ell L^{22}(\ell, \xi)\left(\beta\left(\xi, t_k\right)-\beta(\xi, t)\right) d \xi .
\end{aligned}
\end{equation}
Note that we have expressed $d(t)$ in terms of the the target system states $(\alpha,\beta)$. Through backstepping transformation \eqref{eq:K1},\eqref{eq:K2}, the system \eqref{eq:sys-wv-wbar}-\eqref{eq:sys-wv-bd-cond-w},\eqref{cxx}-\eqref{cndtns},\eqref{eq:Uk},\eqref{eq:sys-wv-bd-cond-Uk} is mapped to the target system
\begin{align}
\alpha_t(x, t)+v^{\star} \alpha_x(x, t) & =0, \label{eq:sys-ab-d-a}\\
\beta_t(x, t)-\left(\gamma p^{\star}-v^{\star}\right) \beta_x(x, t) & =0, \label{eq:sys-ab-d-b}\\
\alpha(0, t) & =-r_0 \beta(0, t), \label{eq:sys-ab-d-bd-cond-a}\\
\beta(\ell, t) & =r_1 d(t) \label{eq:sys-ab-d-bd-cond-b},
\end{align}
for $t \in\left[t_k, t_{k+1}\right), k \in \mathbb{N}$.

Now we present the well-posedness of the closed-loop system \eqref{eq:sys-wv-wbar}-\eqref{eq:sys-wv-bd-cond-w},\eqref{cxx}-\eqref{cndtns},\eqref{eq:Uk},\eqref{eq:sys-wv-bd-cond-Uk} between two sampling instants.

\begin{prop}[Well-Posedness between control updates]
\label{prop:wellpose}
For given $\left(\bar w (\cdot, t_k), \bar v (\cdot, t_k)\right)^T \in L^2\left((0,\ell); \mathbb{R}^2\right)$, there exists a unique solution $(\bar w, \bar v)^T \in \mathcal{C}^0\left(\left[t_k, t_{k+1}\right] ; L^2\left((0,\ell) ; \mathbb{R}^2\right)\right)$  to the system \eqref{eq:sys-wv-wbar}-\eqref{eq:sys-wv-bd-cond-w},\eqref{cxx}-\eqref{cndtns},\eqref{eq:Uk},\eqref{eq:sys-wv-bd-cond-Uk}, between two time instants $t_k$ and $t_{k+1}$.
\end{prop}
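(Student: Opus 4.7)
The plan is to exploit the crucial fact that between two consecutive control updates the boundary input $\bar v(\ell,t)=r_1 U_k$ is a time-independent constant, so that \eqref{eq:sys-wv-wbar}--\eqref{eq:sys-wv-bd-cond-w},\eqref{cxx}--\eqref{cndtns},\eqref{eq:Uk},\eqref{eq:sys-wv-bd-cond-Uk} reduces to an autonomous linear $2\times 2$ hyperbolic system with bounded in-domain coupling and linear boundary conditions. For such a system a mild $L^{2}$-solution on $[t_k,t_{k+1}]$ can be constructed by the classical method of characteristics combined with a Banach fixed-point argument.

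First I would homogenize the $x=\ell$ boundary by writing $\bar v(x,t)=\hat v(x,t)+r_1 U_k$, which is admissible because $U_k$ does not depend on $t$ on the interval of interest. This produces a system for $(\bar w,\hat v)$ with homogeneous Dirichlet condition $\hat v(\ell,t)=0$, the same reflection condition at $x=0$ up to a constant inhomogeneity, and additional bounded source terms generated by $\bar c_1,\bar c_2$. Next, integrating along the characteristics $\dot x=v^{\star}$ for $\bar w$ and $\dot x=-(\gamma p^{\star}-v^{\star})$ for $\hat v$, I would express the solution at any $(x,t)$ as the value of the corresponding datum (either the initial profile at $t=t_k$ or the appropriate boundary trace) plus a Duhamel-type integral of the coupling terms along the characteristic arc. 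After using the boundary conditions $\bar w(0,t)=-r_0 \bar v(0,t)$ and $\hat v(\ell,t)=0$ to close the system, I obtain coupled Volterra-type integral equations for the profiles $(\bar w[t],\hat v[t])\in L^{2}((0,\ell);\mathbb{R}^2)$.

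Finally I would set up a fixed-point map $\mathcal{F}$ on $\mathcal{C}^0([t_k,t_{k+1}];L^{2}((0,\ell);\mathbb{R}^2))$ whose fixed point is the mild solution. Boundedness of $\bar c_1,\bar c_2$ on $[0,\ell]$ (evident from \eqref{cxx}), boundedness of the reflection coefficient $r_0$, and finiteness of the interval $[t_k,t_{k+1}]$ make $\mathcal{F}$ Lipschitz with constant proportional to the interval length; a standard power-of-Volterra argument then shows that some iterate $\mathcal{F}^N$ is a strict contraction on this space, yielding a unique fixed point and hence the desired solution $(\bar w,\bar v)^T\in\mathcal{C}^0([t_k,t_{k+1}];L^{2}((0,\ell);\mathbb{R}^2))$. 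The main technical obstacle I expect is carefully tracking the domains of definition of the two traces as a characteristic issued from one boundary reflects and re-enters from the other, since this determines where the Duhamel integrals act on the original initial data versus on already-reconstructed values along the reflected arcs; once this bookkeeping is handled, the contraction estimate, and hence existence and uniqueness, follow in a routine fashion.
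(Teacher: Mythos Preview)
Your plan is mathematically sound: homogenizing the constant boundary datum, integrating along the characteristics of the two transport equations, and closing via a Volterra-type fixed-point argument in $\mathcal{C}^0([t_k,t_{k+1}];L^2)$ is exactly the standard route to well-posedness for linear $2\times2$ hyperbolic systems with bounded in-domain coupling and linear boundary reflection. The bookkeeping of reflected characteristics you flag as the main obstacle is real but routine once the interval is finite and the coefficients are bounded.

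However, the paper does \emph{not} carry out any of this. Its ``proof'' consists solely of the remark following the proposition: the result is stated to be a straightforward application of Proposition~1 in \cite{espitiaObserverbased2020}, with the only difference being the scaling by the road length $\ell$. In other words, the paper outsources the entire argument to a prior reference, whereas you are reconstructing that argument from scratch. What your approach buys is a self-contained proof; what the paper's approach buys is brevity, at the cost of requiring the reader to consult the cited work. If you want to align with the paper, a one-line citation to that reference suffices; if you want a complete argument, your outline is the correct one.
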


\begin{rem} \rm
\label{rem:wellpose}
    This proposition is a straightforward application of Proposition 1 in \cite{espitiaObserverbased2020}, with the difference up to the scaling factor $\ell$, i.e., the road length.
    Throughout the paper, we establish the well-posedness of the closed-loop system by iteratively constructing the solution in the hybrid time domain $\mathbb{T}=\bigcup_{k=0}^{K-1}\left[t_k, t_{k+1}\right] \times\{k\}$ where $\mathbb{T} \subset \mathbb{R}_{\geq 0} \times \mathbb{N}$ and $K$ is possibly $\infty$ and/or $t_K=\infty$.
% \hfill $\square$
\end{rem}

\section{Regular Event-Triggered Control (R-ETC)}
\label{sec:regular}

This section presents the designs for regular ETC (R-ETC). These designs are associated with strictly decreasing Lyapunov functions, and are presented in three configurations: continuous-time event-triggered (R-CETC), periodic event-triggered (R-PETC), and self-triggered control (R-STC).

\subsection{Regular Continuous-time Event-triggered Control (R-CETC)}

The design of R-CETC is detailed in \cite{espitia2020event,espitiaEventBasedBoundaryControl2018}. However, the parameters in the triggering mechanism and the conditions for parameter selection are presented in a complex way. In this paper, we present the details of a more straightforward approach to the triggering mechanism and parameter choices. Since R-CETC serves as a foundation for R-PETC and R-STC, below we summarize the main results of R-CETC.

%The mechanism of R-CETC is shown in Fig. \ref{fig:R-CETC}.
% %\vaspace{-10pt}
\begin{comment}
\begin{figure}[htbp]
    \centering  \includegraphics[width=0.43\textwidth]{Figure/CETC.png}
    \caption{Mechanism of regular continuous-time event-triggered control (R-CETC) as applied to the ARZ model.}
    \label{fig:R-CETC}
\end{figure}
\end{comment}

Let $I^r=$ $\left\{t_0^r, t_1^r, t_2^r, \ldots\right\}$ denote the sequence of event-times associated with R-CETC, and it consists of two parts:
\begin{enumerate}
    \item An ETC input $U_k^r$
    \begin{equation}
        \label{eq:U-R-CETC}
            \begin{aligned}
                U_k^r := U(t_k^r),
                % = &\frac{1}{r_1}\bigg(\int_0^L L^{21}(L, \xi) \alpha(\xi, t_k^r) d \xi\\
                % & + \int_0^L L^{22}(L, \xi) \beta(\xi, t_k^r) d \xi\bigg)
            \end{aligned}
    \end{equation}
    for $t \in\left[t^r_k, t^r_{k+1}\right), k \in \mathbb{N},$ where $U(t)$ is given by \eqref{eq:U-wbar-vbar}. Then, the boundary condition \eqref{eq:sys-wv-bd-cond-Uk} becomes
    \begin{equation}
    \label{eq:R-CETC-bd}
        \bar{v}(\ell, t) =r_1 U_k^r.
    \end{equation}
    \item A continuous-time event-trigger determining event-times,
    \begin{equation}
    \label{eq:R-CETC-trigger-t}
        t^r_{k+1}=\inf \left\{t \in \mathbb{R}_{+} \mid t>t_k^r, \Gamma^r(t)>0 , k \in \mathbb{N} \right\},
    \end{equation}
    with $t_0^r=0$, where $\Gamma^r(t)$  is the triggering function defined as
        \begin{equation}
    \label{eq:R-CETC-trigger-func}
        \Gamma^r(t) :=  d^2(t) - \theta m^r(t).
    \end{equation}
The function $d(t)$ is given by \eqref{eq:d(t)} for $t \in \left[t^r_k, t^r_{k+1}\right), k \in \mathbb{N}$, and $m^r(t)$ satisfies the ODE
    \begin{equation}
    \label{eq:ODE-dot-m-r}
        \begin{aligned}
            \dot{m}^r(t)  =& -\eta m^r(t)- \theta_m d^2(t)+\kappa_1\Vert\alpha[t]\Vert^2+\kappa_2\Vert\beta[t]\Vert^2\\&+\kappa_3{\alpha}^2(\ell, t),
        \end{aligned}
    \end{equation}
    for $t \in \left(t^r_k, t^r_{k+1}\right), k \in \mathbb{N}$ with $m^r\left(t^r_0\right)=m^r(0)>0$ and $m^r\left(t_k^{r-}\right)=m^r\left(t_k^r\right)=m^r\left(t_k^{r+}\right)$. The parameters  $\theta, \eta,\theta_m,\kappa_1,\kappa_2,\kappa_3>0$ are event-trigger parameters to be appropriately chosen.  
\end{enumerate}
Below, we outline the conditions on event-trigger parameters that ensure the Zeno-free behavior and the exponential convergence of the closed-loop 
signals of the system \eqref{eq:sys-wv-wbar}-\eqref{eq:sys-wv-bd-cond-w},\eqref{cxx}-\eqref{cndtns},\eqref{eq:U-R-CETC}-\eqref{eq:ODE-dot-m-r} to zero in the spatial $L^2$ norm. 
% The arguments for parameter selection follow those in \cite{espitia2020event}, and hence are omitted.

\begin{asm}[Event-trigger parameter selection]
\label{asm:R-CETC-param}
The parameters $\theta,\eta>0$ are arbitrary design parameters, and $\kappa_1,\kappa_2,\kappa_3>0$ are chosen as
\begin{equation}\label{betas}
\kappa_{1}=\frac{\varepsilon_{1}}{\theta(1-\sigma)},\hspace{5pt}\kappa_{2}=\frac{\varepsilon_{2}}{\theta(1-\sigma)},\hspace{5pt}\kappa_{3}=\frac{\varepsilon_{3}}{\theta(1-\sigma)},
\end{equation}
where $\sigma\in(0,1)$ and 
\begin{align}
\begin{split}
\label{al1}\varepsilon_{1}&= 4\frac{(v^{\star})^2}{r_1^2}\int_0^\ell( \dot{L}^{21}(\ell,y))^2dy,
\end{split}\\
\label{al2}
\varepsilon_{2}&=4\frac{\left(\gamma p^{\star}-v^{\star}\right)^2}{r_1^2}\int_0^\ell( \dot{L}^{22}(\ell,y))^2dy,\\\label{gg3}
\varepsilon_{3}&=4\frac{(v^\star)^2}{r_1^2}(L^{21}(\ell,\ell))^2.
\end{align} The event-trigger parameter $\theta_m>0$ is chosen as 
\begin{equation}\label{vvbnml}
\theta_m = Cr_1^2r_0^2e^{\frac{\mu\ell}{\left(\gamma p^{\star}-v^{\star}\right)}},
\end{equation}
where $\mu>0$,
\begin{align}\label{CC}
    C>&\max\Bigg\{e^{\frac{\mu\ell}{v^\star}}\kappa_3, \frac{\max\{\kappa_1,\kappa_2\}r}{\mu}\Bigg\},
\end{align}
with $r$ defined as
\begin{equation}\label{rr}
    r := \frac{1}{\min\Big\{\frac{1}{v^\star}e^{-\frac{\mu\ell}{v^\star}},\frac{r_0^2}{\left(\gamma p^{\star}-v^{\star}\right)}\Big\}}.
\end{equation}
\end{asm}

Next we summarize the main results under R-CETC.

\begin{thm}
[Results under R-CETC]
\label{thm:R-CETC}
% Consider the R-CETC approach \eqref{eq:U-R-CETC}-\eqref{eq:Lyap-V} with appropriate choices for the event-trigger parameters under Assumption \ref{asm:R-CETC-param}, which generates a set of event-times $I^r= \left\{t_0^r, t_1^r, t_2^r, \ldots\right\}$ with $t_0^r = 0$. 
Let $I^r= \left\{t_0^r, t_1^r, t_2^r, \ldots\right\}$ with $t_0^r = 0$ be the set of event-times generated by the R-CETC approach \eqref{eq:U-R-CETC}-\eqref{eq:ODE-dot-m-r} with appropriate choices for the event-trigger parameters under Assumption \ref{asm:R-CETC-param}. 
Then, it holds that
    \begin{align}
        \Gamma^r(t) \leq 0, \forall t \in [0, \sup \left(I^r\right) ).
    \end{align}
As a result, the following results hold:
    % \begin{align}
    %     \Gamma^r(t) \leq 0, \forall t \in\left[t_k^r, t_{k+1}^r\right), k<k^{r *} \in \mathbb{N},
    % \end{align}
    % where
    % \begin{align}
    %     k^{r *}=\inf \left\{i \in \mathbb{N} \mid t_i^r=\sup \left(I^r\right)\right\}.
    % \end{align}
    % Given appropriate choices for the event-trigger parameters $\theta, \eta, \mu, \kappa_1 > 0$, $\sigma \in (0,1)$, $\delta \in (0,1)$, the following results hold:
\begin{enumerate}
    % \item[R1:]  For all $t>0$, the function $\Gamma^r(t)$ given by \eqref{eq:R-CETC-trigger-func} satisfies $\Gamma^r(t) \leq 0$ along the solution of \eqref{eq:sys-wv-wbar}-\eqref{eq:sys-wv-bd-cond-w}, \eqref{eq:d(t)}, \eqref{eq:U-R-CETC}-\eqref{eq:Lyap-V}.
     \item[R1:]  The set of event-times $I^r$ generates an increasing sequence. It holds that $t_{k+1}^r-t_k^r \geq \tau_d> 0, k \in \mathbb{N}$, where
    \begin{align}
    \label{eq:tau_d}
        {\tau_d} =\frac{1}{a} \ln \left(1+\frac{\sigma a}{(1-\sigma)(a+\theta\theta_m)}\right). 
    \end{align}
for $\sigma\in (0,1)$. Here,  $a>0$ is given by
    \begin{align}
        a & = 1+\varepsilon_0+\eta, \label{eq:a-def}
    \end{align}
    where
    \begin{align}
        \varepsilon_0 &= 4{(\gamma p^{\star}-v^{\star})^2} \big(L^{22}(\ell, \ell)\big)^2\label{eq:eps-2}.
    \end{align}
    Due to the existence of a uniform positive minimal dwell-time ${\tau_d}>0$, it follows that $t_k^r \rightarrow \infty$ as $k \rightarrow \infty$, thereby guaranteeing Zeno-free behavior.

    \item[R2:] For every $\left(\bar w (\cdot, 0), \bar v (\cdot, 0)\right)^T \in L^2\left((0,\ell); \mathbb{R}^2\right)$, there exists a unique solution $(\bar w, \bar v)^T \in \mathcal{C}^0\left(\mathbb{R}_+ ; L^2\left((0,\ell) ; \mathbb{R}^2\right)\right)$ to the system \eqref{eq:sys-wv-wbar}-\eqref{eq:sys-wv-bd-cond-w},\eqref{cxx}-\eqref{cndtns},\eqref{eq:U-R-CETC}-\eqref{eq:ODE-dot-m-r} for all $t>0$.
    
    \item[R3:] The dynamic variable $m^r(t)$ governed by \eqref{eq:ODE-dot-m-r} with $m^r(0)>0$ satisfies $m^r(t)>0$ for all $t > 0$.

    \item[R4:] Consider a Lyapunov candidate 
    \begin{align}
        \label{eq:Lyap-V-r}
        V^r(t) = V_1(t)+m^r(t),
    \end{align}
    where
\begin{equation}
    \label{eq:Lyap-V}
    \begin{aligned}
    & V_1(t):=\int_0^\ell\left(\frac{C}
    {v^{\star}} \alpha^2(x, t) e^{-\frac{\mu x}{v^{\star}}}\right. \\
    &\left.\qquad\qquad+\frac{Cr_0^2}{\gamma p^{\star}-v^{\star}} \beta^2(x, t) e^{\frac{\mu x}{\gamma p^{\star}-v^{\star}}}\right) d x,
    \end{aligned}
    \end{equation}
Then, it holds that
\begin{align}\label{Vr_dot1}
    \dot{V}^r(t) \leq -b^{\star }V^r(t),
    \end{align}
for all $t\in (t_k^r,t_{k+1}^r),k\in\mathbb{N}$, and 
\begin{align}
    \label{eq:Lyap-V-r-bd}
        V^r(t) \leq e^{-b^\star t} V_0,
    \end{align}
for all $t>0$, where $V_0 = V^r(0)$ and
    \begin{align}
    \label{eq:b*}
        b^\star :=  \min\big\{b, \eta\big\}>0,
    \end{align}
    with 
    \begin{align}\label{bbcfgj}
        b:=\mu-\frac{\max\{\kappa_1,\kappa_2\}r}{C}>0.
    \end{align}
See Assumption \ref{asm:R-CETC-param} for details on $\mu,\kappa_1,\kappa_2,r,C,\eta>0$. 
    \item[R5:] The closed-loop signal $\Vert\bar{w}[t]\Vert+\Vert\bar{v}[t]\Vert$ associated with the system \eqref{eq:sys-wv-wbar}-\eqref{eq:sys-wv-bd-cond-w},\eqref{cxx}-\eqref{cndtns},\eqref{eq:U-R-CETC}-\eqref{eq:ODE-dot-m-r}, exponentially converges to zero.
\end{enumerate}
\end{thm}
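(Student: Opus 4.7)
The plan is to establish R1--R5 in the natural order, treating R1 as the core technical hurdle and then exploiting it to chain the remaining claims. The first observation, that $\Gamma^r(t)\le 0$ on $[0,\sup(I^r))$, is immediate from the definition of $t^r_{k+1}$ in \eqref{eq:R-CETC-trigger-t}: on $(t^r_k,t^r_{k+1})$ the infimum has not been attained, so $\Gamma^r(t)\le 0$, and continuity of $d^2$ and $m^r$ closes the endpoints.

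For R1, I would quantify a dwell-time by studying $\phi(t):=d^2(t)/m^r(t)$ on $(t^r_k,t^r_{k+1})$. Since $d(t^r_k)=0$ we have $\phi(t^r_k)=0$, and an event can only occur when $\phi$ reaches $\theta$. The next event is bounded below by the time for the solution of a scalar comparison ODE $\dot\psi \le a + a\psi(\psi+1)$--like inequality to travel from $0$ to $\theta$. To derive such a bound, I differentiate $d(t)$ using \eqref{eq:d(t)} and the target-system PDEs \eqref{eq:sys-ab-d-a}--\eqref{eq:sys-ab-d-b}, integrate by parts (keeping $\xi$-derivatives of the kernels $L^{21},L^{22}$ exactly once), and use the boundary conditions \eqref{eq:sys-ab-d-bd-cond-a}--\eqref{eq:sys-ab-d-bd-cond-b}. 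The boundary term at $\xi=\ell$ involving $\beta(\ell,t)=r_1d(t)$ produces the $\varepsilon_0$ contribution in \eqref{eq:eps-2}, while the distributed terms produce $\kappa_1\|\alpha\|^2+\kappa_2\|\beta\|^2+\kappa_3\alpha^2(\ell,t)$ with the constants from \eqref{al1}--\eqref{gg3} obtained after applying Young's and Cauchy--Schwarz inequalities calibrated by $\sigma\in(0,1)$. Combining with $\dot m^r$ in \eqref{eq:ODE-dot-m-r}, the cross terms cancel, yielding $\dot\phi \le a(1+\phi)+\theta\theta_m(1+\phi)\phi/\theta$-type estimate from which \eqref{eq:tau_d} follows by separation of variables. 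This Riccati-style bookkeeping is the main obstacle---choosing the Young weights so that every $d^2,\|\alpha\|^2,\|\beta\|^2,\alpha^2(\ell,t)$ contribution on the right-hand side of $\dot{d}^2$ is exactly cancelled or dominated by the corresponding term in $\dot m^r$.

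With a uniform positive dwell-time in hand, R2 follows by iterating Proposition~\ref{prop:wellpose} over the hybrid domain of Remark~\ref{rem:wellpose}, since $t^r_k\to\infty$. For R3, I use $\Gamma^r(t)\le 0$, i.e.\ $d^2(t)\le\theta m^r(t)$, to rewrite \eqref{eq:ODE-dot-m-r} as $\dot m^r\ge -(\eta+\theta\theta_m)m^r$, so $m^r(t)\ge m^r(0)\exp(-(\eta+\theta\theta_m)t)>0$ on each $(t^r_k,t^r_{k+1})$, and continuity at events extends the bound to all $t>0$.

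For R4, I differentiate $V_1$ in \eqref{eq:Lyap-V} along \eqref{eq:sys-ab-d-a}--\eqref{eq:sys-ab-d-b}, performing integration by parts on the transport terms; the exponential weights with rate $\mu$ produce a dissipative term $-\mu V_1$, while the boundary contributions at $x=0$ use \eqref{eq:sys-ab-d-bd-cond-a} to couple $\alpha(0,t)$ and $\beta(0,t)$ through $r_0$, and at $x=\ell$ use \eqref{eq:sys-ab-d-bd-cond-b} to produce a $Cr_1^2r_0^2e^{\mu\ell/(\gamma p^\star-v^\star)} d^2(t)$ contribution that is exactly $\theta_m d^2(t)$ by \eqref{vvbnml}. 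The choice of $C$ in \eqref{CC} makes the $\alpha(\ell,t)^2$ boundary term nonpositive and bounds $\kappa_1\|\alpha\|^2+\kappa_2\|\beta\|^2$ by $(\mu-b)V_1$ via the weighted norm equivalence driven by \eqref{rr}. Adding $\dot m^r$ causes the $\kappa_1,\kappa_2,\kappa_3$ and $\theta_m d^2$ terms to cancel, leaving $\dot V^r\le -b V_1-\eta m^r\le -b^\star V^r$ with $b^\star=\min\{b,\eta\}$. Integrating yields \eqref{eq:Lyap-V-r-bd}. Finally, R5 follows from R4 and the norm equivalence: $V_1$ upper- and lower-bounds $\|\alpha\|^2+\|\beta\|^2$ up to constants depending on the exponential weights, and the invertible backstepping transformation together with its inverse \eqref{eq:L1}--\eqref{eq:L2} give constants $M_1,M_2>0$ with $M_1(\|\bar w\|+\|\bar v\|)\le \|\alpha\|+\|\beta\|\le M_2(\|\bar w\|+\|\bar v\|)$, so $\|\bar w[t]\|+\|\bar v[t]\|$ inherits the exponential decay at rate $b^\star/2$.
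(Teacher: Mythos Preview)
Your proposal is correct and follows essentially the same route as the paper. The paper packages the positivity of $m^r$ and the minimal dwell-time as two lemmas whose proofs are deferred to \cite{espitiaEventBasedBoundaryControl2018}, whereas you sketch these arguments directly via the ratio $\phi=d^2/m^r$ and the associated Riccati-type comparison; your Lyapunov computation for R4 (integration by parts on $V_1$, cancellation of the $x=0$ boundary terms through $r_0$, identification of the $\beta(\ell,t)$ contribution with $\theta_m d^2$ via \eqref{vvbnml}, and absorption of $\kappa_1\|\alpha\|^2+\kappa_2\|\beta\|^2$ into $V_1$ using \eqref{rr} and \eqref{CC}) matches the Appendix line by line, and R5 via the bounded invertibility of \eqref{eq:K1}--\eqref{eq:K2}, \eqref{eq:L1}--\eqref{eq:L2} is identical. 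One minor slip: the distributed terms in $\dot d^2$ carry the constants $\varepsilon_1,\varepsilon_2,\varepsilon_3$ from \eqref{al1}--\eqref{gg3}, not $\kappa_1,\kappa_2,\kappa_3$; the $\sigma$-calibration enters only when you match these against the $\kappa_i$ in $\dot m^r$ through \eqref{betas}.
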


See the Appendix for the proof.

\begin{rem}\rm
We refer to the signal \( e^{-b^\star t} V_0 \) in \eqref{eq:Lyap-V-r-bd} as the \textit{performance barrier}, which the Lyapunov function of the system must not violate. The estimate for the time derivative of \( V^r(t) \) provided in \eqref{Vr_dot1} indicates that R-CETC enforces a strict decrease in the Lyapunov function \( V^r(t) \) in \eqref{eq:Lyap-V-r} along system trajectories. However, this strict requirement limits R-CETC’s ability to achieve sparser control updates. Addressing this limitation is the objective of our design for performance-barrier event triggers, detailed in Section \ref{sec:perf}.
\end{rem}

\subsection{Regular Periodic Event-triggered Control (R-PETC)}
In this subsection, we the introduce R-PETC approach applied to the system \eqref{eq:sys-wv-wbar}-\eqref{cndtns}. Since the R-PETC design can be derived via the P-PETC approach detailed in Section \ref{subsec:P-PETC}, we will only present its structure here to prevent redundancy. %The  R-PETC closed-loop system is shown in Fig. \ref{fig:R-PETC}.
% %\vaspace{-10pt}
\begin{comment}
\begin{figure}[htbp]
    \centering
    \includegraphics[width=0.43\textwidth]{Figure/PETC.png}
    \caption{Mechanism of regular periodic event-triggered control (R-PETC) as applied to the ARZ model.}
    \label{fig:R-PETC}
\end{figure}
\end{comment}

Let $\tilde I^r=$ $\left\{\tilde t_0^r, \tilde t_1^r, \tilde t_2^r, \ldots\right\}$ denote the sequence of event-times associated with R-PETC, and let the event-trigger parameters $\theta, \eta,\theta_m,\kappa_1,\kappa_2,\kappa_3>0$ be selected as outlined in Assumption \ref{asm:R-CETC-param}.
The proposed R-PETC strategy consists of two parts:
\begin{enumerate}
    \item An ETC input $\tilde U_k^r$
    \begin{equation}
        \label{eq:U-R-PETC}
            \begin{aligned}
                \tilde U_k^r := U(\tilde t_k^r),
                % = &\frac{1}{r_1}\bigg(\int_0^L L^{21}(L, \xi) \alpha(\xi, \tilde t_k^r) d \xi\\
                % & + \int_0^L L^{22}(L, \xi) \beta(\xi, \tilde t_k^r) d \xi\bigg)
            \end{aligned}
    \end{equation}
    for $t \in\left[\tilde t^r_k, \tilde t^r_{k+1}\right), k \in \mathbb{N},$ where $U(t)$ is given by \eqref{eq:U-wbar-vbar}. Then, the boundary condition \eqref{eq:sys-wv-bd-cond-Uk} becomes
    \begin{equation}
    \label{eq:R-PETC-bd}
        \bar{v}(\ell, t) =r_1 \tilde U_k^r.
    \end{equation}
    \item A periodic event-trigger determining event-times
    \begin{equation}
    \label{eq:R-PETC-trigger-t}
    \begin{aligned}
        \tilde t^r_{k+1}=\inf \{t \in \mathbb{R}_{+} \mid  & t>\tilde t_k^r, \tilde \Gamma^r(t)>0 , t=nh, \\
        & h>0, n\in \mathbb{N},  k \in \mathbb{N} \},
    \end{aligned}
    \end{equation}
    with $\tilde t_0^r=0$. Here $h$ is the sampling period selected as
    \begin{align}
    \label{eq:h-def}
        0<h \leq \tau_d,
    \end{align}
    where $\tau_d$ is given by \eqref{eq:tau_d}-\eqref{eq:eps-2}, and $\tilde \Gamma^r(t)$ is the triggering function defined as
    \begin{equation}
    \label{eq:R-PETC-trigger-func}
    \tilde{\Gamma}^r(t):=(a+\theta\theta_m)e^{ah}d^{2}(t )-\theta\theta_m d^2(t )-\theta a m^p(t),
    \end{equation}
    where $a$ is defined in \eqref{eq:a-def}. Further, $d(t)$ is defined in \eqref{eq:d(t)}, and $m^r(t)$ satisfies the ODE given by \eqref{eq:ODE-dot-m-r},  along the solution of  \eqref{eq:sys-wv-wbar}-\eqref{eq:sys-wv-bd-cond-w},\eqref{cxx}-\eqref{cndtns},\eqref{eq:U-R-PETC}-\eqref{eq:R-PETC-trigger-func} for all $t \in\left[\tilde t^r_k, \tilde t^r_{k+1}\right), k \in \mathbb{N}$.
\end{enumerate}

The main difference between the continuous-time event-trigger \eqref{eq:R-CETC-trigger-t}-\eqref{eq:ODE-dot-m-r} and the periodic event-trigger \eqref{eq:R-PETC-trigger-t}-\eqref{eq:R-PETC-trigger-func} lies in that the triggering function $\Gamma^r(t)$ of R-CETC has to be monitored continuously while the triggering function $\tilde \Gamma^r(t)$ of R-PETC requires only periodic evaluations.

For brevity, we present the results under R-PETC alongside R-STC results in Theorem \ref{thm:R-PETC+R-STC} in the following subsection.

\subsection{Regular Self-triggered Control (R-STC)}
\label{subsec:R-STC}
In this subsection, we present the R-STC strategy for the system \eqref{eq:sys-wv-wbar}-\eqref{cndtns}. Since this method can be derived via the P-STC design in Section \ref{subsec:P-STC}, we will only present its structure here to prevent redundancy.
%The mechanism of R-STC is shown in Fig. \ref{fig:R-STC}.
% %\vaspace{-5pt}
\begin{comment}
  \begin{figure}[htbp]
    \centering
\includegraphics[width=0.43\textwidth]{Figure/STC.png}
    \caption{Mechanism of regular self-triggered control (R-STC) as applied to the ARZ model.}
    \label{fig:R-STC}
\end{figure}  
\end{comment}

Let $\check I^r=$ $\left\{\check t_0^r, \check t_1^r, \check t_2^r, \ldots\right\}$ denote the sequence of event-times associated with R-STC.
Let the event-trigger parameters $\theta, \eta,\theta_m,\kappa_1,\kappa_2,\kappa_3>0$ be selected as outlined in Assumption \ref{asm:R-CETC-param}. The proposed R-STC strategy consists of two parts:
\begin{enumerate}
    \item An ETC input $\check U_k^r$
    \begin{equation}
        \label{eq:U-R-STC}
            \begin{aligned}
                \check U_k^r := U(\check t_k^r),
                % = &\frac{1}{r_1}\bigg(\int_0^L L^{21}(L, \xi) \alpha(\xi, \check t_k^r) d \xi\\
                % & + \int_0^L L^{22}(L, \xi) \beta(\xi, \check t_k^r) d \xi\bigg)
            \end{aligned}
    \end{equation}
    for $t \in\left[\check t^r_k, \check t^r_{k+1}\right), k \in \mathbb{N}$ where $U(t)$ is given by \eqref{eq:U-wbar-vbar}. Then, the boundary condition \eqref{eq:sys-wv-bd-cond-Uk} becomes
    \begin{equation}
    \label{eq:R-STC-bd}
        \bar{v}(\ell, t) =r_1 \check U_k^r.
    \end{equation}
    \item A self-trigger determining event-times
    \begin{equation}
    \label{eq:R-STC-trigger-t} \check{t}_{k+1}^r=\check{t}_k^r+G^r\left(H(\check{t}_k^r), m^r\left(\check{t}_k^r\right)\right),
    \end{equation}
    with $\check t_0^r=0$ and $G^r(\cdot, \cdot)>0$ is a positively and uniformly lower-bounded function
    \begin{equation}
        \begin{aligned}
            \label{eq:R-STC-trigger-func}
            & G^r\left(H(t), m^r\left(t\right)\right) \\
            & :=\max\Bigg\{\tau_d,\frac{1}{\varrho^\star+\eta}\ln\bigg(\frac{\theta m^r(t)+\frac{\theta\theta_m H(t)}{\varrho^\star+\eta}}{H(t)+\frac{\theta\theta_mH(t)}{\varrho^\star+\eta}}\bigg)\Bigg\}.
        \end{aligned}
    \end{equation}
    Here, $\tau_d$ is R-CETC minimum dwell-time given by \eqref{eq:tau_d}-\eqref{eq:eps-2}, $m^r(t)$ satisfies the dynamics \eqref{eq:ODE-dot-m-r} along the solution of \eqref{eq:d(t)}-\eqref{eq:sys-ab-d-bd-cond-b} for $t \in\left(\check t^r_k, \check t^r_{k+1}\right), k \in \mathbb{N}$.
    The constant $\varrho^\star>0$ is defined as
\begin{align}\label{dxxxml}
    \varrho^\star := r_0^2r_1^2 e^{\frac{\mu\ell}{(\gamma p^{\star}-v^{\star})}}\varrho,
\end{align}
where 
\begin{align}\label{zzzmlw2e2}
    \varrho = \frac{4}{r_1^2}\max\Big\{v^{\star}\tilde {L}^{2 1}e^{\frac{\mu \ell}{v^{\star}}},\frac{(\gamma p^{\star}-v^{\star})\tilde {L}^{2 2}}{r_0^2}\Big\},
\end{align}
with 
 \begin{equation}\label{eq:tilde-L-21-L22}
        \tilde {L}^{2 1} \! = \!\!\! \int_0^\ell \!\! (L^{21}(\ell, \xi))^2 d \xi, \quad
        \tilde {L}^{2 2} \! = \!\!\! \int_0^\ell \!\! (L^{22}(\ell, \xi))^2 d \xi.        
\end{equation}
In \eqref{eq:R-STC-trigger-func}, $H(t)$ is defined as
\begin{align}\label{zmlsbnj}
\begin{split}
    H(t) := &3\varrho \int_{0}^{\ell}\Big(\frac{1}{v^{\star}}\alpha^2(x,t)e^{-\frac{\mu x}{v^{\star}}}\\&+\frac{r_0^2}{(\gamma p^{\star}-v^{\star})}\beta^2(x,t)e^{\frac{\mu x}{(\gamma p^{\star}-v^{\star})}}\Big)dx.
\end{split}
\end{align}
\end{enumerate}

R-STC distinguishes itself from both R-CETC and R-PETC because R-STC proactively determines the subsequent event-time based on the system state at the current event time, without monitoring any triggering function. We now state the results under R-STC and R-PETC, in the following theorem.

\begin{thm}[Results under R-STC (resp. R-PETC)]
\label{thm:R-PETC+R-STC}
    % Consider the R-STC approach \eqref{eq:U-R-STC}-\eqref{eq:R-STC-trigger-func} (resp. R-PETC approach \eqref{eq:U-R-PETC}-\eqref{eq:R-PETC-trigger-func}) with appropriate choices for the event-trigger parameters under Assumption \ref{asm:R-CETC-param}, which generates a set of increasing event-times $\check I^r=$ $\left\{\check t_0^r, \check t_1^r, \check t_2^r, \ldots\right\}$ with $\check t_0^r = 0$ (resp. $\tilde I^r=$ $\left\{\tilde t_0^r, \tilde t_1^r, \tilde t_2^r, \ldots\right\}$ with $\tilde t_0^r = 0$).
    Let  $\check I^r=$ $\left\{\check t_0^r, \check t_1^r, \check t_2^r, \ldots\right\}$ with $\check t_0^r = 0$ (resp. $\tilde I^r=$ $\left\{\tilde t_0^r, \tilde t_1^r, \tilde t_2^r, \ldots\right\}$ with $\tilde t_0^r = 0$) be the set of increasing event-times generated by the R-STC approach \eqref{eq:U-R-STC}-\eqref{zmlsbnj} (resp. R-PETC approach \eqref{eq:U-R-PETC}-\eqref{eq:R-PETC-trigger-func}) with appropriate choices for the event-trigger parameters under Assumption \ref{asm:R-CETC-param}.
    Then, the following results hold:
    \begin{enumerate}
    \item[R1:] For every $\left(\bar w (\cdot, 0), \bar v (\cdot, 0)\right)^T \in L^2\left((0,\ell); \mathbb{R}^2\right)$, there exists a unique solution $(\bar w, \bar v)^T \in \mathcal{C}^0\left(\mathbb{R}_+ ; L^2\left((0,\ell) ; \mathbb{R}^2\right)\right)$ to the R-STC closed-loop system \eqref{eq:sys-wv-wbar}-\eqref{eq:sys-wv-bd-cond-w},\eqref{cxx}-\eqref{cndtns},\eqref{eq:U-R-STC}-\eqref{zmlsbnj} (resp. R-PETC closed-loop system \eqref{eq:sys-wv-wbar}-\eqref{eq:sys-wv-bd-cond-w},\eqref{cxx}-\eqref{cndtns},\eqref{eq:U-R-PETC}-\eqref{eq:R-PETC-trigger-func}) for all $t>0$.

    \item[R2:] $\Gamma^r(t)$ given by \eqref{eq:R-CETC-trigger-func} satisfies $\Gamma^r(t) \leq 0$ for all $t>0$, along the R-STC (resp. R-PETC) closed-loop solution. 
    \item[R3:] The dynamic variable $m^r(t)$ governed by \eqref{eq:ODE-dot-m-r} with $m^r(0)>0$ satisfies $m^r(t)>0$ for all $t > 0$, along the R-STC (resp. R-PETC) closed-loop solution.
    \item[R4:] The Lyapunov candidate $V^r(t)$ given by \eqref{eq:Lyap-V-r},\eqref{eq:Lyap-V} satisfies \eqref{Vr_dot1},\eqref{eq:b*},\eqref{bbcfgj} for all $t\in (\check{t}_k^r,\check{t}_{k+1}^r),k\in\mathbb{N}$ (respect. $t\in (\tilde{t}_k^r,\tilde{t}_{k+1}^r),k\in\mathbb{N}$) and \eqref{eq:Lyap-V-r-bd}-\eqref{bbcfgj} for all $t>0$, along the R-STC (resp. R-PETC) closed-loop solution.

    \item[R5:] The closed-loop signal $\Vert\bar{w}[t]\Vert+\Vert\bar{v}[t]\Vert$ associated with the R-STC (resp. R-PETC) closed-loop system exponentially converges to zero.
\end{enumerate}
\end{thm}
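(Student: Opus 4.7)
The plan is to reduce both the R-PETC and R-STC arguments to the R-CETC machinery already established in Theorem~\ref{thm:R-CETC} by focusing on the single key assertion R2, namely that the continuous-time triggering function $\Gamma^r(t)$ defined in \eqref{eq:R-CETC-trigger-func} remains non-positive along the closed-loop trajectory for all $t>0$. Once R2 is in hand, R3 follows because the bound $d^2(t)\leq \theta m^r(t)$ substituted into \eqref{eq:ODE-dot-m-r} yields $\dot m^r \geq -(\eta+\theta_m\theta)m^r$, preserving the strict positivity of $m^r$ from positive initial data; R4 is obtained by repeating verbatim the Lyapunov computation from the R-CETC proof for $V^r = V_1+m^r$, which uses only $\Gamma^r\leq 0$ together with the unchanged parameter choices \eqref{betas}--\eqref{CC}; R5 is then immediate from R4 via the equivalence between $V_1$ and $\Vert\alpha[t]\Vert^2+\Vert\beta[t]\Vert^2$ combined with the invertible transformation \eqref{eq:L1},\eqref{eq:L2}. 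Well-posedness (R1) follows by the iterative hybrid-time construction of Remark~\ref{rem:wellpose}, applying Proposition~\ref{prop:wellpose} on each interval $[\check t_k^r,\check t_{k+1}^r)$ (resp.\ $[\tilde t_k^r,\tilde t_{k+1}^r)$), provided the event-times form an unbounded sequence, which is a direct consequence of the uniform positive dwell-time that accompanies R2.

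For R-PETC, I would first derive from \eqref{eq:d(t)} and the target-system dynamics \eqref{eq:sys-ab-d-a}--\eqref{eq:sys-ab-d-bd-cond-b} a differential inequality of the form $\tfrac{d}{dt}d^2(t)\leq a\, d^2(t)+\theta_m\Psi(t)$ on each inter-sample interval, where $a$ is given by \eqref{eq:a-def} and $\Psi(t)$ groups exactly the positive state terms appearing on the right-hand side of \eqref{eq:ODE-dot-m-r}. Coupling this inequality with the ODE for $m^r$ via a Gr\"onwall-type comparison on $[nh,(n+1)h]$, together with the hypothesis $\tilde\Gamma^r(nh)\leq 0$, should yield $d^2(t)\leq \theta m^r(t)$ throughout the period. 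The amplification factor $(a+\theta\theta_m)e^{ah}$ appearing in \eqref{eq:R-PETC-trigger-func} is precisely the worst-case growth predicted by the Gr\"onwall step, so the test is designed to close tightly; the restriction $h\leq \tau_d$ in \eqref{eq:h-def} guarantees the sampling period does not exceed the R-CETC minimum dwell-time, ensuring compatibility with the comparison. Zeno-freeness is automatic because events can only occur at multiples of $h>0$.

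For R-STC, I would carry out the analogous bounding at the level of the state rather than the input-holding error. Using \eqref{eq:d(t)} together with the kernel traces \eqref{eq:tilde-L-21-L22}, a Cauchy--Schwarz step produces an inter-event estimate of the type $d^2(t)\leq e^{\varrho^\star(t-\check t_k^r)}H(\check t_k^r)$, with $\varrho^\star$ and $H$ as in \eqref{dxxxml}--\eqref{zmlsbnj}; indeed $H(\check t_k^r)$ is constructed precisely as a majorant for the $L^2$-weighted contributions of $\alpha$ and $\beta$ appearing after propagation of the target system. Combining this upper bound with a lower bound on $m^r(t)$ extracted from \eqref{eq:ODE-dot-m-r} and solving $d^2(t)=\theta m^r(t)$ for the worst-case crossing time produces exactly the logarithmic expression inside the $\max$ in \eqref{eq:R-STC-trigger-func}; the outer $\max$ with $\tau_d$ then enforces a uniform positive dwell-time, ruling out Zeno behavior.

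The main obstacle is the apparent circularity in the R-STC step: the bound $d^2(t)\leq e^{\varrho^\star(t-\check t_k^r)}H(\check t_k^r)$ on the upcoming interval relies on propagating state information frozen at $\check t_k^r$, yet the validity of such propagation depends in turn on $\Gamma^r(t)\leq 0$ throughout the interval, which is what the self-trigger is supposed to enforce. I would break this circularity by induction on $k$: assuming R1--R5 hold on $[0,\check t_k^r]$, the Lyapunov decay furnished by R4 supplies an \emph{a priori} bound on $\Vert\alpha[\check t_k^r]\Vert^2+\Vert\beta[\check t_k^r]\Vert^2$ that certifies $H(\check t_k^r)$ as a legitimate majorant on $[\check t_k^r,\check t_{k+1}^r)$; the definition of $G^r$ then forces $\Gamma^r\leq 0$ on this interval, closing the induction and completing the proof.
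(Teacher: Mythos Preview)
Your plan is essentially the same as the paper's, which defers the proof to the P-PETC and P-STC arguments (Lemmas~\ref{lem:P-PETC-d}--\ref{prop:P-PETC-Gamma} and Lemma~\ref{lem:P-STC-dvm}, specialized to $c=0$). Your identification of the key step---showing $\Gamma^r(t)\leq 0$ along the PETC/STC trajectories and then inheriting R3--R5 verbatim from the R-CETC computation---is correct, and your descriptions of the PETC comparison (coupling $d^2$ with $m^r$ and reading off the amplification factor $(a+\theta\theta_m)e^{ah}$) and the STC bound $d^2(t)\leq H(\check t_k^r)e^{\varrho^\star(t-\check t_k^r)}$ match the paper's arguments closely.

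The one point to correct is your ``main obstacle.'' There is no circularity in the R-STC step. The estimate $d^2(t)\leq H(\check t_k^r)e^{\varrho^\star(t-\check t_k^r)}$ is obtained directly from the target-system dynamics \eqref{eq:sys-ab-d-a}--\eqref{eq:sys-ab-d-bd-cond-b} via the auxiliary functional $\bar V(t)$ in \eqref{dcmlspt}: differentiating $\bar V$ and using only the PDEs plus the boundary condition $\beta(\ell,t)=r_1 d(t)$ gives $\dot{\bar V}\leq \varrho^\star d^2$, and the Cauchy--Schwarz bound $d^2\leq \varrho\bar V(\check t_k^r)+\varrho\bar V(t)$ from \eqref{eq:d(t)} then closes a self-contained Gr\"onwall inequality for $\bar V$. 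At no stage does this require $\Gamma^r\leq 0$; the bound holds on any inter-event interval regardless of the triggering condition. Consequently your proposed induction on $k$, invoking R4 to ``certify'' $H(\check t_k^r)$, is unnecessary---$H(\check t_k^r)$ is simply $3\varrho\bar V(\check t_k^r)$, a quantity computable from the state at $\check t_k^r$ with no a~priori Lyapunov decay needed. Once you drop this spurious induction, the R-STC argument becomes a straightforward comparison of the explicit upper bound on $d^2$ against the explicit lower bound on $\theta m^r$ from \eqref{hjbbv211} (with $c=0$), exactly as you outline.
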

The proof of Theorem \ref{thm:R-PETC+R-STC} follows arguments similar to those in the proofs of Theorem \ref{thm:P-PETC} in Section \ref{subsec:P-PETC} and Theorem \ref{thm:P-STC} in Section \ref{subsec:P-STC}, and is therefore omitted.

\section{Performance-barrier Event-triggered Control (P-ETC)}
\label{sec:perf}
In this section, we discuss the design of performance-barrier ETC (P-ETC) under continuous-time event-triggered (P-CETC), periodic event-triggered (P-PETC), and self-triggered (P-STC) control. By introducing a \textit{performance residual}--the difference between the performance barrier and the Lyapunov function--into the triggering mechanism, we allow the Lyapunov function to deviate from a monotonic decrease while adhering to the performance barrier.

\subsection{Performance-barrier Continuous-time Event-triggered Control (P-CETC)}
Let $I^p=\left\{t_0^p, t_1^p, t_2^p, \ldots\right\}$ denote the sequence of event-times associated with P-CETC. Let the event-trigger parameters be selected as in Assumption \ref{asm:R-CETC-param}, and $c>0$ be an additional design parameter. The proposed P-CETC strategy consists of two parts:
\begin{enumerate}
    \item An ETC input $U_k^p$
    \begin{equation}
        \label{eq:U-P-CETC}
            \begin{aligned}
                U_k^p := U(t_k^p),
                % = &\frac{1}{r_1}\bigg(\int_0^L L^{21}(L, \xi) \alpha(\xi, t_k^p) d \xi\\
                % & + \int_0^L L^{22}(L, \xi) \beta(\xi, t_k^p) d \xi\bigg)
            \end{aligned}
    \end{equation}
    for $t \in\left[t^p_k, t^p_{k+1}\right), k \in \mathbb{N},$ where $U(t)$ is given by \eqref{eq:U-wbar-vbar}. Then, the boundary condition \eqref{eq:sys-wv-bd-cond-Uk} becomes
    \begin{equation}
    \label{eq:P-CETC-bd}
        \bar{v}(\ell, t) =r_1 U_k^p.
    \end{equation}
    \item A continuous-time event-trigger determining event-times
    \begin{equation}
    \label{eq:P-CETC-trigger-t}
        t^p_{k+1}=\inf \left\{t \in \mathbb{R}_{+} \mid t>t_k^p, \Gamma^p(t)>0 , k \in \mathbb{N} \right\},
    \end{equation}
    with $t_0^p=0$. The triggering function $\Gamma^p(t)$ is defined as
    \begin{equation}
    \label{eq:P-CETC-trigger-func}
    \begin{aligned}
        \Gamma^p(t) := 
         d^2(t) - \theta m^p(t)-\frac{c}{\theta_m}W^p(t),
    \end{aligned} 
    \end{equation}
    where $d(t)$ is given by \eqref{eq:d(t)}, $m^p(t)$ satisfies the ODE
    \begin{equation}
    \label{eq:ODE-dot-m-p}
        \begin{aligned}
            \dot{m}^p(t)= & -\eta m^p(t)- \theta_md^2(t)+\kappa_1\Vert\alpha[t]\Vert^2+\kappa_2\Vert\beta[t]\Vert^2\\&+\kappa_3{\alpha}^2(\ell, t)+cW^p(t),
        \end{aligned}
    \end{equation}
    for $t \in\left(t^p_k, t^p_{k+1}\right), k \in \mathbb{N}$ with $m^p\left(t^p_0\right)=m^p(0)>0,$ and $m^p\left(t_k^{p-}\right)=m^p\left(t_k^p\right)=m^p\left(t_k^{p+}\right)$. Here, $W^p(t)$ is the \textit{performance residual}, defined as the
difference between the value of the performance-barrier $e^{-b^\star t} V_0^p$ and the Lyapunov function
        \begin{equation}
        \label{eq:W-p}
            W^p(t):=e^{-b^\star t} V_0^p-V^p(t),
        \end{equation}
    where $b^\star$ is given by \eqref{eq:b*}, and $V^p(t)$ is the Lyapunov candidate defined as
    \begin{equation}
    \label{eq:Lyap-V-p}
        V^p(t) = V_1(t)+m^p(t),
    \end{equation}
    with $V_1(t)$ given by \eqref{eq:Lyap-V} and
    \begin{equation}
    \label{eq:Lyap-V0}
        V_0=V_0^p=V^p(0)=V^r(0).
    \end{equation}

    % Here $b^\star$ is given by \eqref{eq:b*} and $m^p(t)$ satisfies the ODE
    % \begin{equation}
    % \label{eq:ODE-dot-m-p}
    %     \begin{aligned}
    %         \dot{m}^p(t)= & -\eta m^p(t) + M d^2(t) -\sigma \mu V(t) \\
    %         & -\kappa_1 {\alpha}^2(L, t)- c\left(e^{-b^\star t} V_0^p-V^p(t)\right)
    %     \end{aligned}
    % \end{equation}
    % for $t \in\left(t^p_k, t^p_{k+1}\right), k \in \mathbb{N}$ with $m^p\left(t^p_0\right)=m^p(0)<0$ and $m^p\left(t_k^{p-}\right)=m^p\left(t_k^p\right)=m^p\left(t_k^{p+}\right)$.
    % $d(t)$ satisfies \eqref{eq:d(t)} for $t \in\left[t^p_k, t^p_{k+1}\right), k \in \mathbb{N}$. 
    % And $V^p(t)$ is the Lyapunov candidate defined as
    % \begin{equation}
    % \label{eq:Lyap-V-p}
    %     V^p(t) = V(t)-m^p(t),
    % \end{equation}
    % with 
    % \begin{equation}
    % \label{eq:Lyap-V0}
    %     V_0=V_0^p=V^p(0)=V^r(0).
    % \end{equation}
\end{enumerate}

Next we present Lemma \ref{lem:P-CETC-m} and Lemma \ref{lem:P-CETC-t}, required for proving the main results of P-CETC in Theorem \ref{thm:P-CETC}.

\begin{lem}
\label{lem:P-CETC-m} Under the P-CETC event-trigger \eqref{eq:P-CETC-trigger-t}-\eqref{eq:Lyap-V0}, it holds that the triggering function $\Gamma^p(t)$ satisfies $\Gamma^p(t)\leq 0$, and as a result, the dynamic variable $m^p(t)$ governed by \eqref{eq:ODE-dot-m-p} with $m^p(0)=m^r(0)>0$ satisfies $m^p(t)>0$ for all $t\in [0, \sup \left(I^p\right) )$.
\end{lem}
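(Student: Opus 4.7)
The plan is to establish the two claims by an induction on event intervals coupled with a single comparison-lemma argument. The first claim, $\Gamma^p(t)\le 0$ on $[0,\sup I^p)$, is essentially built into the event rule \eqref{eq:P-CETC-trigger-t} on each open interval $(t_k^p,t_{k+1}^p)$; what must be verified separately is only the value right after each event time. The second claim, $m^p(t)>0$, then follows from a differential inequality obtained by substituting the first into the ODE \eqref{eq:ODE-dot-m-p}, and crucially does not require any sign information on $W^p$.

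For the first claim, at $t=0=t_0^p$ the control has just been refreshed, so $d(0)=U_0^p-U(0)=0$, and from \eqref{eq:W-p},\eqref{eq:Lyap-V0} we have $W^p(0)=V_0^p-V^p(0)=0$; consequently $\Gamma^p(0)=-\theta m^p(0)<0$. For $k\ge 0$ with $t_{k+1}^p<\infty$, the PDE solution $(\alpha,\beta)$ is continuous in $L^2$ by Proposition \ref{prop:wellpose}, hence so are $V_1$, $V^p$, $W^p$, and $m^p$ (the latter being absent of jumps in \eqref{eq:ODE-dot-m-p}). Taking left limits in \eqref{eq:P-CETC-trigger-func} and using the infimum definition of $t_{k+1}^p$ gives $\Gamma^p(t_{k+1}^{p-})=0$, i.e.
\begin{equation*}
    \theta\, m^p(t_{k+1}^p) + \tfrac{c}{\theta_m}\, W^p(t_{k+1}^p) \;=\; \big(d(t_{k+1}^{p-})\big)^2 \;\ge\; 0.
\end{equation*}
After the update, $d(t_{k+1}^{p+})=0$, so
\begin{equation*}
    \Gamma^p(t_{k+1}^{p+}) \;=\; -\theta\, m^p(t_{k+1}^p) - \tfrac{c}{\theta_m}\, W^p(t_{k+1}^p) \;=\; -\big(d(t_{k+1}^{p-})\big)^2 \;\le\; 0.
\end{equation*}
Combined with $\Gamma^p\le 0$ on each open interval $(t_k^p,t_{k+1}^p)$, induction on $k$ yields $\Gamma^p(t)\le 0$ on all of $[0,\sup I^p)$.

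For the second claim, the inequality $\Gamma^p\le 0$ rewrites as $-\theta_m d^2 \ge -\theta\theta_m m^p - c W^p$. Substituting into \eqref{eq:ODE-dot-m-p} cancels the $cW^p$ contributions and leaves
\begin{equation*}
    \dot m^p(t) \;\ge\; -(\eta+\theta\theta_m)\, m^p(t) + \kappa_1\|\alpha[t]\|^2+\kappa_2\|\beta[t]\|^2+\kappa_3\alpha^2(\ell,t) \;\ge\; -(\eta+\theta\theta_m)\, m^p(t).
\end{equation*}
The standard comparison lemma applied on each interval, together with the continuity of $m^p$ across events, yields the uniform lower bound $m^p(t) \ge m^p(0)\, e^{-(\eta+\theta\theta_m)t} > 0$ for all $t\in[0,\sup I^p)$.

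The step requiring the most care is the post-jump value of $\Gamma^p$ at each event, since a naive attempt would demand $W^p \ge 0$ separately, and nonnegativity of the performance residual is the content of later results rather than something available here. The pivotal observation is that the left-limit identity $\Gamma^p(t_{k+1}^{p-})=0$ automatically supplies $\theta m^p + \tfrac{c}{\theta_m} W^p = d^2 \ge 0$ at every trigger instant, which makes $\Gamma^p(t_{k+1}^{p+})$ manifestly nonpositive irrespective of the sign of $W^p$ individually; the same cancellation of the $cW^p$ terms is what makes the comparison argument self-contained, so no separate performance-barrier bound is needed inside this lemma.
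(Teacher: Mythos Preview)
Your proof is correct and follows the standard structure for this class of results: induction over event intervals for $\Gamma^p\le 0$, followed by a comparison-lemma bound for $m^p$. The paper omits its own proof and defers to Lemma~1 of \cite{rathnayake2023prfmnce}, whose argument proceeds in essentially the same way; your observation that the $cW^p$ terms cancel in the differential inequality for $m^p$---so that no sign information on the performance residual is required inside this lemma---is precisely the point that distinguishes the P-CETC case from the R-CETC analog (Lemma~\ref{lem1}) and matches the intended reasoning.

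One minor refinement: your use of $\Gamma^p(t_{k+1}^{p-})=0$ is stronger than you need. The triggering rule and continuity only guarantee $\Gamma^p(t_{k+1}^{p-})\le 0$ directly, i.e., $\theta m^p(t_{k+1}^p)+\tfrac{c}{\theta_m}W^p(t_{k+1}^p)\ge \big(d(t_{k+1}^{p-})\big)^2\ge 0$, which already yields $\Gamma^p(t_{k+1}^{p+})\le -\big(d(t_{k+1}^{p-})\big)^2\le 0$. This avoids having to argue that the infimum in \eqref{eq:P-CETC-trigger-t} is attained with equality.
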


The proof follows steps similar to those in Lemma 1 of \cite{rathnayake2023prfmnce} and is therefore omitted.

\begin{lem}
\label{lem:P-CETC-t}
    Assume that an event has occurred at $t=t^\star \geq 0$ under P-CETC \eqref{eq:U-P-CETC}-\eqref{eq:Lyap-V0}. If the next event time $t=t^p$ generated by P-CETC is finite, then the next event time $t=t^r$ generated by R-CETC \eqref{eq:U-R-CETC}-\eqref{eq:ODE-dot-m-r} satisfies $t^r \leq t^p$, provided that $m^r\left(t^\star\right)=m^p\left(t^\star\right)>0$ and $e^{-b^\star t} V_0 \geq V^p(t)$ for all $t \in\left[t^\star, t^r\right]$. The equality holds if $e^{-b^\star t} V_0=V^p(t)$ for all $t \in\left[t^\star, t^r=t^p\right]$.
\end{lem}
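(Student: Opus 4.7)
The plan exploits the fact that on $[t^\star,\min(t^r,t^p)]$ both scenarios operate under the same constant input $U(t^\star)$, so the open-loop signals $\bar w,\bar v,\alpha,\beta$, and hence $d(t)$, are identical under P-CETC and R-CETC on this interval; the only quantities that can differ are the scalar dynamic variables $m^p$ and $m^r$.

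First I would introduce the difference $\Delta(t):=m^p(t)-m^r(t)$ and subtract \eqref{eq:ODE-dot-m-r} from \eqref{eq:ODE-dot-m-p}. Because the terms $-\theta_m d^2(t)$, $\kappa_1\|\alpha[t]\|^2$, $\kappa_2\|\beta[t]\|^2$, and $\kappa_3\alpha^2(\ell,t)$ all cancel, $\Delta$ satisfies the scalar linear ODE
\begin{equation*}
\dot\Delta(t)=-\eta\,\Delta(t)+c\,W^p(t),\qquad \Delta(t^\star)=0,
\end{equation*}
where the initial condition uses the hypothesis $m^r(t^\star)=m^p(t^\star)$. Variation of constants yields $\Delta(t)=c\int_{t^\star}^{t}e^{-\eta(t-s)}W^p(s)\,ds$. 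The hypothesis $e^{-b^\star s}V_0\geq V^p(s)$ on $[t^\star,t^r]$ is exactly $W^p(s)\geq 0$, so $\Delta(t)\geq 0$, i.e.\ $m^p(t)\geq m^r(t)$, throughout $[t^\star,\min(t^r,t^p)]\subseteq[t^\star,t^r]$.

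Next I would combine this with the definitions \eqref{eq:R-CETC-trigger-func} and \eqref{eq:P-CETC-trigger-func}. Since $d(t)$ is common to both triggers on the shared interval,
\begin{equation*}
\Gamma^r(t)-\Gamma^p(t)=\theta\,\Delta(t)+\frac{c}{\theta_m}W^p(t)\geq 0,
\end{equation*}
so the R-CETC triggering function pointwise dominates its P-CETC counterpart. Both $\Gamma^r,\Gamma^p$ are continuous and start from a non-positive value just after $t^\star$ (the P-CETC reset enforces $d(t^\star{+})=0$, making $\Gamma^r(t^\star{+})\leq 0$ and $\Gamma^p(t^\star{+})\leq 0$), so $\Gamma^r$ must reach zero from below no later than $\Gamma^p$, giving $t^r\leq t^p$. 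For the equality statement, the stronger hypothesis $e^{-b^\star t}V_0=V^p(t)$ on $[t^\star,t^p=t^r]$ means $W^p\equiv 0$, forcing $\Delta\equiv 0$ and therefore $\Gamma^r\equiv\Gamma^p$ on the interval, so the two triggers necessarily fire at the same instant.

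The main delicacy is the ordering step that converts the pointwise inequality $\Gamma^r\geq\Gamma^p$ on $[t^\star,\min(t^r,t^p)]$ into the temporal ordering $t^r\leq t^p$. The argument I would use is a short proof by contradiction: if $t^p<t^r$, then $\min(t^r,t^p)=t^p$, so the comparison holds on $[t^\star,t^p]$; continuity of $\Gamma^p$ and its infimum definition force $\Gamma^p(t^p)=0$, hence $\Gamma^r(t^p)\geq 0$. Combined with the requirement $\Gamma^r(t)\leq 0$ for $t<t^r$ (from the infimum definition of $t^r$), this yields $\Gamma^r(t^p)=0$, and the standard event-detection convention (the first instant at which $\Gamma$ reaches zero from below is the event time) identifies this borderline case with $t^r=t^p$, once again producing $t^r\leq t^p$.
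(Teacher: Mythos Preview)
Your argument is correct and is exactly the comparison approach the paper intends (the paper omits the proof, pointing to Lemma~2 of \cite{rathnayake2023prfmnce}): subtract the two $m$-dynamics, use $W^p\geq 0$ to get $m^p\geq m^r$ via the variation-of-constants formula, whence $\Gamma^r\geq\Gamma^p$ pointwise and therefore $t^r\leq t^p$, with equality when $W^p\equiv 0$.

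One small tightening of your closing contradiction step: you restrict the comparison $\Gamma^r\geq\Gamma^p$ to $[t^\star,\min(t^r,t^p)]$ and then appeal to an ``event-detection convention'' at the borderline $\Gamma^r(t^p)=0$, but that convention is not part of the trigger definitions (which use strict inequality). The clean fix is to note that the hypothesis $W^p\geq 0$ is assumed on all of $[t^\star,t^r]$, and the $\Delta$-ODE (hence $\Delta\geq 0$) holds on that same interval because both triggering rules are being compared along the common held-input trajectory; thus $\Gamma^r\geq\Gamma^p$ on $[t^\star,t^r]$. Now if $t^p<t^r$, the infimum definition of $t^p$ supplies $s\in(t^p,t^r)$ with $\Gamma^p(s)>0$, giving $\Gamma^r(s)>0$ with $s<t^r$, which contradicts the infimum definition of $t^r$ directly.
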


The proof follows steps similar to those in Lemma 2 of \cite{rathnayake2023prfmnce} and is therefore omitted.

Now we state the main results of P-CETC below.

\begin{thm}[Results under P-CETC]
    \label{thm:P-CETC}
    % Consider the P-CETC approach \eqref{eq:U-P-CETC}-\eqref{eq:Lyap-V0} with appropriate choices for the event-trigger parameters under Assumption \ref{asm:R-CETC-param}, and $c>0$, which generates a set of event-times $I^p= \left\{t_0^p, t_1^p, t_2^p, \ldots\right\}$ with $t_0^p = 0$.
    Let $I^p= \left\{t_0^p, t_1^p, t_2^p, \ldots\right\}$ with $t_0^p = 0$ be the set of event-times generated by the P-CETC approach \eqref{eq:U-P-CETC}-\eqref{eq:Lyap-V0} with appropriate choices for the event-trigger parameters under Assumption \ref{asm:R-CETC-param}, and $c>0$. 
    Then, it holds that
    \begin{align}
        \Gamma^p(t) \leq 0, \forall t \in [0, \sup \left(I^p\right) ).
    \end{align}
    As a result, the following results hold:
    \begin{enumerate}
 \item[R1:] The set of event-times $I^p$ generates an increasing sequence.
        It holds that $t_{k+1}^p-t_k^p \geq \tau_d> 0, k \in \mathbb{N}$, where the minimal dwell-time $\tau_d$ is given by \eqref{eq:tau_d}-\eqref{eq:eps-2}. 
        Since ${\tau_d}>0$, as $k \rightarrow \infty$, it follows that $t_k^p \rightarrow \infty$, thereby guaranteeing Zeno-free behavior.

        \item[R2:] For every $\left(\bar w (\cdot, 0), \bar v (\cdot, 0)\right)^T \in L^2\left((0,\ell); \mathbb{R}^2\right)$, there exists a unique solution $(\bar w, \bar v)^T \in \mathcal{C}^0\left(\mathbb{R}_+ ; L^2\left((0,\ell) ; \mathbb{R}^2\right)\right)$ to the system \eqref{eq:sys-wv-wbar}-\eqref{eq:sys-wv-bd-cond-w},\eqref{cxx}-\eqref{cndtns},\eqref{eq:U-P-CETC}-\eqref{eq:Lyap-V0} for all $t>0$.
     
        \item[R3:] The dynamic variable $m^p(t)$ governed by \eqref{eq:ODE-dot-m-p} with $m^p(0)>0$ satisfies $m^p(t)>0$ for all $t > 0$.

    \item[R4:] The Lyapunov candidate $V^p(t)$ given by \eqref{eq:Lyap-V-p},\eqref{eq:Lyap-V0} satisfies 
\begin{equation}\label{zzzbnmkfb}
    \dot{V}^p(t) \leq -b^{\star} V^p(t)+c\big(e^{-b^{\star}t}V_0-V^p(t)\big),
\end{equation}
for all $t\in (t_k^p,t_{k+1}^{p}),j\in\mathbb{N}$, and
    \begin{equation}
    \label{eq:Lyap-V-p<ebt}
        V^p(t) \leq e^{-b^\star t} V_0,
    \end{equation}
    for all $t>0$, where $b^\star$ is given by \eqref{eq:b*}.

    \item[R5:] The closed-loop signal $\Vert\bar{w}[t]\Vert+\Vert\bar{v}[t]\Vert$ associated with the system \eqref{eq:sys-wv-wbar}-\eqref{eq:sys-wv-bd-cond-w},\eqref{cxx}-\eqref{cndtns},\eqref{eq:U-P-CETC}-\eqref{eq:Lyap-V0}, exponentially converges to zero.
    \end{enumerate}
\end{thm}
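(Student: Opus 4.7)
The plan is to follow the blueprint of Theorem \ref{thm:R-CETC}, replacing the strict Lyapunov decrease by a decay controlled by the performance residual $W^p(t)$. The two preparatory lemmas do most of the heavy lifting: Lemma \ref{lem:P-CETC-m} yields $\Gamma^p(t)\le 0$ and positivity of $m^p$ throughout $[0,\sup(I^p))$, while Lemma \ref{lem:P-CETC-t} shows that each P-CETC interevent time dominates the corresponding R-CETC interevent time. These two facts effectively reduce everything except the Lyapunov analysis to bookkeeping.

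First I would invoke Lemma \ref{lem:P-CETC-m}, which immediately settles the opening bound on $\Gamma^p(t)$ and result R3. Result R1 then follows from Lemma \ref{lem:P-CETC-t}: the next P-CETC trigger after any event can never precede the next R-CETC trigger, so the R-CETC minimum dwell-time $\tau_d$ given by \eqref{eq:tau_d}-\eqref{eq:eps-2} lower-bounds every P-CETC interevent time, ruling out Zeno. For R2, I would iterate Proposition \ref{prop:wellpose} over the partition $\{t_k^p\}$, using $\tau_d>0$ to guarantee $t_k^p\to\infty$ and to concatenate the local solutions into a unique global one on $\mathbb{R}_+$.

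The main technical step is R4. On each open interval $(t_k^p,t_{k+1}^p)$ I would compute $\dot V^p(t)=\dot V_1(t)+\dot m^p(t)$. The bound on $\dot V_1(t)$ reuses the integration-by-parts calculation underpinning Theorem \ref{thm:R-CETC} in the Appendix: it produces nonnegative terms in $\|\alpha[t]\|^2$, $\|\beta[t]\|^2$, $\alpha^2(\ell,t)$ and $d^2(t)$ together with a $\mu$-weighted dissipation, and the weights $\kappa_1,\kappa_2,\kappa_3$ in \eqref{betas}-\eqref{gg3} and $\theta_m$ in \eqref{vvbnml}-\eqref{rr} are engineered so that summing $\dot V_1$ with $\dot m^p$ from \eqref{eq:ODE-dot-m-p} cancels those terms against their counterparts. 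Invoking $\Gamma^p(t)\le 0$ to absorb the residual $d^2(t)$ contribution against $\theta m^p(t)+(c/\theta_m)W^p(t)$ yields exactly \eqref{zzzbnmkfb},
\begin{equation*}
\dot V^p(t)\le -b^\star V^p(t)+c W^p(t).
\end{equation*}
To deduce \eqref{eq:Lyap-V-p<ebt} I would run a comparison argument on $W^p(t)$ itself: differentiating $W^p(t)=e^{-b^\star t}V_0-V^p(t)$ and using the estimate above gives $\dot W^p(t)\ge -(b^\star+c)W^p(t)$ on each open interval; since $V^p$ (and hence $W^p$) is continuous across events---the $L^2$ states depend continuously on time and $m^p$ is explicitly forced continuous at $t_k^p$---and $W^p(0)=0$ by \eqref{eq:Lyap-V0}, Gronwall's lemma forces $W^p(t)\ge 0$ for all $t>0$, that is, $V^p(t)\le e^{-b^\star t}V_0$.

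Result R5 is then a routine consequence of the inverse backstepping transformation \eqref{eq:L1}-\eqref{eq:L2}: $V_1(t)$ controls $\|\alpha[t]\|^2+\|\beta[t]\|^2$ from below up to a constant, which in turn controls $\|\bar w[t]\|^2+\|\bar v[t]\|^2$ after undoing \eqref{eq:L1}-\eqref{eq:L2}, so $\|\bar w[t]\|+\|\bar v[t]\|$ decays exponentially at rate $b^\star/2$. The principal obstacle will be the bookkeeping in the $\dot V_1(t)$ computation---in particular tracking how the boundary contribution $\beta(\ell,t)=r_1 d(t)$ from \eqref{eq:sys-ab-d-bd-cond-b} propagates through the integration by parts and couples with $\alpha(0,t)=-r_0\beta(0,t)$---to verify that the constants $\varepsilon_0,\varepsilon_1,\varepsilon_2,\varepsilon_3$ and $\theta_m$ set in Assumption \ref{asm:R-CETC-param} deliver exactly the cancellation claimed above. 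Once that is in place, the performance-barrier extension adds only the benign term $cW^p(t)$, and the remainder of the argument is mechanical.
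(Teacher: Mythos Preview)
Your overall architecture matches the paper's, but there is a logical ordering gap. Lemma~\ref{lem:P-CETC-t} carries the explicit hypothesis $e^{-b^\star t}V_0 \ge V^p(t)$ on $[t^\star,t^r]$, i.e., $W^p(t)\ge 0$. You invoke this lemma to obtain R1 \emph{before} establishing $W^p(t)\ge 0$, which you only derive later in your R4 step. The paper avoids this circularity by first running the Lyapunov computation on $[0,\sup(I^p))$: from Lemma~\ref{lem:P-CETC-m} it has $m^p>0$ there, derives $\dot V^p\le -b^\star V^p + cW^p$ on each open interval, then shows $\dot W^p\ge -(b^\star+c)W^p$ with $W^p(0)=0$, so $W^p\ge 0$ on $[0,\sup(I^p))$. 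Only after this is the hypothesis of Lemma~\ref{lem:P-CETC-t} verified, yielding R1 and hence $\sup(I^p)=\infty$, after which R3 and R4 are extended to all $t>0$. Your claim that Lemma~\ref{lem:P-CETC-m} ``immediately settles\ldots R3'' has the same defect: that lemma only gives $m^p>0$ on $[0,\sup(I^p))$; the extension to all $t>0$ requires R1 first.

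A smaller technical point: in the $\dot V^p$ computation you say $\Gamma^p\le 0$ is used ``to absorb the residual $d^2(t)$ contribution''. In fact the choice $\theta_m = Cr_0^2r_1^2e^{\mu\ell/(\gamma p^\star - v^\star)}$ in \eqref{vvbnml} makes the boundary $d^2$ contribution from $\dot V_1$ (arising from $\beta(\ell,t)=r_1 d(t)$) cancel \emph{exactly} against the $-\theta_m d^2$ term in $\dot m^p$; the inequality $\Gamma^p\le 0$ is not used at all in obtaining $\dot V^p\le -b^\star V^p + cW^p$. Likewise the $\kappa_i$ terms from $\dot m^p$ are not ``canceled'' against anything in $\dot V_1$---they are positive extras that are absorbed into the $-\mu V_1$ dissipation via the norm bound $\|\alpha\|^2+\|\beta\|^2\le (r/C)V_1$ and the choice of $C$ in \eqref{CC}. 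None of this changes the final inequality, but your description of the mechanism is off.
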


\begin{proof}[\rm \textbf{Proof}]

% The result R1 follows from Lemmas \ref{lem:P-CETC-m} and \ref{lem:P-CETC-t}, whereas R2 is obtained from Remark \ref{rem:wellpose}. R3 is given by Lemma \ref{lem:P-CETC-m}.

    % R4:
    % This equals to proving $W^p(t) \geq 0$.
    As a result of Lemma \ref{lem:P-CETC-m}, it holds that $\Gamma^p(t) \leq 0$, and consequently, $m^p(t)>0$ for $t \in\left[0, \sup \left(I^p\right)\right)$.
    Consider the time period $t \in\left[0, \sup \left(I^p\right)\right)$. By selecting the event-trigger parameters as outlined in Assumption \ref{asm:R-CETC-param} and $c>0$, we can show that $V^p(t)$ satisfies
        \begin{equation}
        \label{eq:Lyap-V-p-dot}
            \dot{V}^p(t) \leq-b^\star V^p(t)+c W^p(t),
            % \left(e^{-b^\star t} V_0^p-V^p(t)\right)
        \end{equation}
        for $t \in\left(t_k^p, t_{k+1}^p\right)$ (the proof follows similar arguments to those of Theorem \ref{thm:R-CETC} provided in the Appendix).
        % Define
        % \begin{equation}
        % \label{eq:W-p}
        %     W^p(t):=e^{-b^\star t} V_0^p-V^p(t),
        % \end{equation}
        The time derivative of $W^p(t)$ satisfies
        \begin{equation}
            \begin{aligned}
                \dot{W}^p(t)
                 =-b^\star e^{-b^\star t} V_0^p-\dot{V}^p(t) 
                 \geq-\left(b^\star+c\right) W^p(t),
            \end{aligned}
        \end{equation}
        for $t \in\left(t_k^p, t_{k+1}^p\right)$. Then, noting that $W^p(t)$ is continuous and $W^p(0)=0$, we can obtain that
        % \begin{align}
        % W^p(t) & \geq e^{-\left(b^\star+c\right)\left(t-t_k^p\right)} W^p\left(t_k^p\right) \nonumber \\
        % & \geq e^{\left.-\left(b^\star+c\right)\right)\left(t-t_k^p\right)} \times \prod_{i=1}^{i=k} e^{-\left(b^\star+c\right)\left(t_i^p-t_{i-1}^p\right)} W^p(0) \nonumber \\
        % & \geq e^{\left.-\left(b^\star+c\right)\right) t} W^p(0)=0 . \label{eq:W-p-proof}
        % \end{align}
\begin{equation}
        \label{eq:W-p-proof}
        \begin{aligned}
        W^p(t) & \geq e^{-\left(b^\star+c\right)\left(t-t_k^p\right)} W^p\left(t_k^p\right) \\
        & \geq e^{\left.-\left(b^\star+c\right)\right)\left(t-t_k^p\right)} \times \prod_{i=1}^{i=k} e^{-\left(b^\star+c\right)\left(t_i^p-t_{i-1}^p\right)} W^p(0) \\
        & \geq e^{-(b^\star+c)t}W^p(0)=0,
        \end{aligned}
\end{equation}
        for all $t \in\left[0, \sup \left(I^p\right)\right)$, i.e., $e^{-b^\star t } V_0^p \geq V^p(t)$ for all $t \in\left[0, \sup \left(I^p\right)\right)$.
    This result satisfies the assumption made in Lemma \ref{lem:P-CETC-t}, from which we obtain R1.
    
    R2 is obtained recalling Proposition \ref{prop:wellpose} and Remark \ref{rem:wellpose}.
    Since R1 establishes that the system is Zeno-free, i.e., $\sup \left(I^p\right)=\infty$, we have $m^p(t)>0$ for all $t>0$ as stated in R3, $V^p(t) \leq e^{-b^\star t} V_0$ for all $t>0$ as stated in R4. Finally, we can obtain the exponential convergence of the closed-loop signals of \eqref{eq:sys-wv-wbar}-\eqref{eq:sys-wv-bd-cond-w},\eqref{cxx}-\eqref{cndtns},\eqref{eq:U-P-CETC}-\eqref{eq:Lyap-V0} to zero as stated in R5 by following classical arguments involving the bounded invertibility of the backstepping transformations \eqref{eq:K1},\eqref{eq:K2},\eqref{eq:L1},\eqref{eq:L2}.
\end{proof}

\begin{rem}\rm
   As observed from \eqref{zzzbnmkfb}, we have 
\[
\dot{V}^p(t) \leq -b^\star V^p(t) + cW^p(t),
\]
where \(W^p(t) := e^{-b^\star t} V_0^p - V^p(t)\). This indicates that the time derivative of the Lyapunov function does not necessarily need to be negative at all times. If the performance residual \(W^p(t)\) is large---meaning that the Lyapunov function \(V^p(t)\) is significantly below the performance barrier \(e^{-b^\star t} V_0^p\)---there is a reduced risk of violating the performance barrier. In such cases, \(\dot{V}^p(t)\) can be positive, allowing \(V^p(t)\) to increase. Conversely, if \(W^p(t)\) is relatively small---meaning the Lyapunov function \(V^p(t)\) is approaching the performance barrier \(e^{-b^\star t} V_0^p\)---the risk of breaching the performance barrier increases. In such situations, \(\dot{V}^p(t)\) must be negative to ensure that the Lyapunov function remains below the threshold.
\end{rem}

\begin{rem}\rm
    According to the property R4 of Theorem \ref{thm:P-CETC}, in the absence of external disturbances, the performance barrier associated with the Lyapunov function remains intact and cannot be breached. However, when disturbances are present, this barrier may be exceeded. Specifically, as time progresses to larger values, the performance residual \( W^p(t) = e^{-b^\star t} V_0 - V^p(t) \) becomes increasingly sensitive to disturbances. To enhance the robustness of the P-CETC framework against such disturbances, the triggering function can be modified as follows:
\begin{equation}
    \Gamma^p(t) := d^2(t)-\theta m^p(t)-\frac{c}{\theta_m} \max\big\{0,W^p(t)\big\},
\end{equation}
where \( m^p(t) \) satisfies the differential equation:
\begin{equation}
\begin{split}
    \dot{m}^p(t) = &  -\eta m^p(t)- \theta_md^2(t)+\kappa_1\Vert\alpha[t]\Vert^2+\kappa_2\Vert\beta[t]\Vert^2 \\
    & +\kappa_3{\alpha}^2(\ell, t) + c \max\big\{0, W^p(t)\big\},
\end{split}
\end{equation}
for \( t \in (t_k^p, t_{k+1}^p), j \in \mathbb{N} \), with the initial and continuity conditions \( m^p(t^p_0) = m^r(t^r_0) > 0 \) and \( m^p(t_k^{p-}) = m^p(t_k^p) = m^p(t_k^{p+}) \).

This adjustment introduces a safeguard mechanism: if a disturbance causes the Lyapunov function to exceed the performance barrier (i.e., \( W^p(t) < 0 \)), control transitions from P-CETC to R-CETC. Rather than being seen as a limitation, this transition is an essential safety feature of the design. During disturbances, infrequent control updates can be detrimental, and a higher frequency of updates is preferred to reduce the duration for which the plant operates in open-loop conditions. The proposed modification ensures this by transferring control to R-CETC whenever disturbances are substantial enough to breach the performance barrier. Since R-CETC enforces a strict decrease in the Lyapunov function, it may even restore the Lyapunov function below the performance barrier, thereby allowing the system to revert to P-CETC operation.

\end{rem}

\subsection{Performance-barrier Periodic Event-triggered Control (P-PETC)}
\label{subsec:P-PETC}

This section introduces the P-PETC approach, derived from the P-CETC scheme. This approach is realized by redesigning the triggering function, $\Gamma^p(t)$, of P-CETC to $\tilde{\Gamma}^p(t)$, allowing for periodic evaluation while ensuring that $\Gamma^p(t)$ remains non-positive and the dynamic variable $m^p(t)$ remains positive along the P-PETC closed-loop system solution.

Let $\tilde I^p=\left\{\tilde t_0^p, \tilde t_1^p, 
\tilde t_2^p, \ldots\right\}$ denote the sequence of event-times associated with P-PETC. Let the event-trigger parameters $\theta,\theta_m,\kappa_1,\kappa_2,\kappa_3>0$ be selected as outlined in Assumption \ref{asm:R-CETC-param}, let $\eta>0$ be chosen later, and $c>0$ be a design parameter. The proposed P-PETC strategy consists of two parts:
\begin{enumerate}
    \item [1)] An ETC input $\tilde U_k^p$
    \begin{equation}
    \label{eq:U-P-PETC}
         \begin{aligned}
            \tilde U_k^{p}:=U(\tilde t_k^{p}),
            % = &\frac{1}{r_1}\int_0^L L^{21}(L, \xi) \alpha(\xi, \tilde t_k^p) d \xi\\
            % & +\frac{1}{r_1} \int_0^L L^{22}(L, \xi) \beta(\xi, \tilde t_k^p) d \xi
        \end{aligned}
    \end{equation}
    for $t \in\left[\tilde t^p_k, \tilde t^p_{k+1}\right), k \in \mathbb{N},$ where $U(t)$ is given by \eqref{eq:U-wbar-vbar}. Then, the boundary condition \eqref{eq:sys-wv-bd-cond-Uk} becomes
    \begin{equation}
    \label{eq:P-PETC-bd}
        \bar{v}(\ell, t) =r_1 \tilde U_k^p.
    \end{equation}
    \item [2)] A periodic event-trigger determining the event-times 
    \begin{equation}
    \label{eq:P-PETC-trigger-t}
        \begin{aligned}
        \tilde t^p_{k+1}=\inf \{t \in \mathbb{R}_{+} \mid  & t>\tilde t_k^p, \tilde \Gamma^p(t)>0 , t=nh, \\
        & h>0, n\in \mathbb{N},  k \in \mathbb{N} \},
        \end{aligned}
    \end{equation}
    with $\tilde t_0^p=0,$ $h$ is the sampling period satisfying \eqref{eq:h-def},
    and the triggering function $\tilde \Gamma^p(t)$ defined as
\begin{equation}\label{eq:P-PETC-trigger-func}
\begin{split}
\tilde{\Gamma}^p(t ):=&
(a+\theta\theta_m)e^{ah}d^{2}(t )-\theta\theta_m d^2(t )-\theta a m^p(t)\\&-\frac{ac}{\theta_m}e^{-ch}W^p(t),
\end{split}
\end{equation}
   where $a$ is given by \eqref{eq:a-def}, $d(t)$ is given by \eqref{eq:d(t)} for $t \in\left[\tilde{t}_k^p, \tilde{t}_{k+1}^p\right),k \in \mathbb{N}$, $m^p(t)$ is governed by \eqref{eq:ODE-dot-m-p}  along the solution of \eqref{eq:sys-ab-d-a}-\eqref{eq:sys-ab-d-bd-cond-b} for $t \in\left(\tilde{t}_k^p, \tilde{t}_{k+1}^p\right),k \in \mathbb{N}$, and  $W^p(t)$ is the performance residual given by \eqref{eq:W-p}-\eqref{eq:Lyap-V0}.
\end{enumerate}

Next we present Lemmas \ref{lem:P-PETC-d}-\ref{prop:P-PETC-Gamma}, required for proving the main results of P-PETC in Theorem \ref{thm:P-PETC}.

\begin{lem}
\label{lem:P-PETC-d}
    Consider the P-PETC approach \eqref{eq:U-P-PETC}-\eqref{eq:P-PETC-trigger-func}.
    % which generates an increasing set of event-times $\tilde{I}^p=\left\{\tilde{t}_k^p\right\}_{k \in \mathbb{N}}$ with $\tilde{t}_0^p=0$. 
    For $d(t)$ given by \eqref{eq:d(t)}, it holds that
    \begin{align}
    \label{eq:dot-d(t)-est}
     (\dot{d}(t))^2\leq&\varepsilon_0d^2(t)+\varepsilon_1\Vert\alpha[t]\Vert^2+\varepsilon_2\Vert\beta[t]\Vert^2+\varepsilon_3\alpha^2(1,t),
    \end{align}
    along the solution of \eqref{eq:sys-ab-d-a}-\eqref{eq:sys-ab-d-bd-cond-b} for all $t \in$ $(n h,(n+1) h)$ and any $n \in\left[\tilde{t}_k^p / h, \tilde{t}_{k+1}^p / h\right) \cap \mathbb{N}$. Here $\varepsilon_0,\varepsilon_1,\varepsilon_2,\varepsilon_3>0$ are give by \eqref{eq:eps-2},\eqref{al1}-\eqref{gg3}, respectively.
\end{lem}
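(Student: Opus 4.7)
\smallskip
\noindent\textbf{Proof plan.} The plan is to differentiate $d(t)$ in \eqref{eq:d(t)}, substitute the target-system dynamics \eqref{eq:sys-ab-d-a}--\eqref{eq:sys-ab-d-bd-cond-b}, integrate by parts in the spatial variable $\xi$, and then square the resulting expression using only elementary inequalities. Since $t_k$ is fixed between events, only the running values $\alpha(\xi,t),\beta(\xi,t)$ contribute to $\dot d(t)$, giving
\[
\dot d(t)=-\tfrac{1}{r_1}\!\!\int_0^\ell \! L^{21}(\ell,\xi)\alpha_t(\xi,t)\,d\xi-\tfrac{1}{r_1}\!\!\int_0^\ell \! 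L^{22}(\ell,\xi)\beta_t(\xi,t)\,d\xi.
\]
Using \eqref{eq:sys-ab-d-a}--\eqref{eq:sys-ab-d-b} to replace $\alpha_t$ and $\beta_t$ by spatial derivatives and integrating by parts in $\xi$ transfers the derivative onto the kernel slices $L^{21}(\ell,\cdot)$ and $L^{22}(\ell,\cdot)$, producing boundary terms at $\xi=0$ and $\xi=\ell$ together with the two interior integrals in $\dot L^{21}(\ell,\xi)$ and $\dot L^{22}(\ell,\xi)$ that appear in \eqref{al1}--\eqref{al2}.

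Next I would evaluate the boundary terms at $\xi=\ell$ using the emulated boundary condition \eqref{eq:sys-ab-d-bd-cond-b}, $\beta(\ell,t)=r_1 d(t)$, which produces a term proportional to $L^{22}(\ell,\ell)d(t)$ responsible for the constant $\varepsilon_0$ in \eqref{eq:eps-2}, and a term proportional to $L^{21}(\ell,\ell)\alpha(\ell,t)$ responsible for $\varepsilon_3$ in \eqref{gg3}. At $\xi=0$ I would substitute the boundary condition \eqref{eq:sys-ab-d-bd-cond-a}, $\alpha(0,t)=-r_0\beta(0,t)$, leaving a combined $\beta(0,t)$ boundary term whose coefficient vanishes by virtue of the corresponding kernel boundary relation from \cite{Vazquez2011} satisfied by $L^{21}(\ell,0)$ and $L^{22}(\ell,0)$; this is the step I expect to be the main technical check, since it is where the kernel equations are used and without it a spurious $\beta^2(0,t)$ would appear in the estimate.

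With the $\xi=0$ boundary term eliminated, $\dot d(t)$ is a sum of four contributions: two boundary pieces in $d(t)$ and $\alpha(\ell,t)$, and two integral pieces in $\dot L^{21}(\ell,\cdot)\alpha(\cdot,t)$ and $\dot L^{22}(\ell,\cdot)\beta(\cdot,t)$. Squaring and applying $(a_1+a_2+a_3+a_4)^2\le 4\sum a_i^2$, followed by the Cauchy--Schwarz inequality on the two integral terms, yields exactly the bound \eqref{eq:dot-d(t)-est} with the constants $\varepsilon_0,\varepsilon_1,\varepsilon_2,\varepsilon_3$ identified in \eqref{eq:eps-2} and \eqref{al1}--\eqref{gg3}; the factor $4$ in each $\varepsilon_i$ is the trace of this elementary squaring inequality. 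Finally, the time restriction to $t\in(nh,(n+1)h)$ with $n\in[\tilde t_k^p/h,\tilde t_{k+1}^p/h)\cap\mathbb{N}$ is precisely the regularity window on which the classical $C^1$-in-time solution of \eqref{eq:sys-ab-d-a}--\eqref{eq:sys-ab-d-bd-cond-b} exists (no event and no sampling jump occurs there), so $\dot d(t)$ is well defined and the pointwise bound holds on that open interval.
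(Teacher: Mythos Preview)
Your proposal is correct and follows exactly the standard derivation: differentiate \eqref{eq:d(t)}, substitute \eqref{eq:sys-ab-d-a}--\eqref{eq:sys-ab-d-b}, integrate by parts, use the kernel boundary relation at $\xi=0$ together with \eqref{eq:sys-ab-d-bd-cond-a} to kill the $\beta(0,t)$ term, and finish with the four-term squaring inequality and Cauchy--Schwarz. The paper itself does not prove this lemma at all---it simply states that it ``is the direct result from Lemma~1 in \cite{espitia2020event} and thus the proof omitted''---so your write-up is in fact more detailed than what the paper provides, and matches the argument in the cited reference.
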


Lemma \ref{lem:P-PETC-d} is the direct result from Lemma 1 in \cite{espitia2020event} and thus the proof omitted.

\begin{lem}
\label{lem:P-PETC-eta}
    Consider the P-PETC approach \eqref{eq:U-P-PETC}-\eqref{eq:P-PETC-trigger-func}.
    % which generates an increasing set of event-times $\tilde{I}^p=\left\{\tilde{t}_k^p\right\}_{k \in \mathbb{N}}$ with $\tilde{t}_0^p=0$.
    Let parameter $\eta>0$ be chosen such that
    \begin{equation}
    \label{eq:eta=b*}
        \eta \leq b,
    \end{equation}
    where $b$ is given by \eqref{bbcfgj}. Then, the residual $W^p(t)$ given by \eqref{eq:W-p}-\eqref{eq:Lyap-V0} satisfies
    \begin{equation}
    \label{eq:Wp(nh)}
        W^p(t) \geq  e^{-\left(b^\star+c\right)(t-n h)} W^p(n h), \text{ with }b^\star = \eta,
    \end{equation}
    along the solution of \eqref{eq:sys-ab-d-a}-\eqref{eq:sys-ab-d-bd-cond-b},\eqref{eq:ODE-dot-m-p} for all $t \in[n h,(n+1) h)$ and any $n \in\left[\tilde{t}_k^p / h, \tilde{t}_{k+1}^p / h\right) \cap \mathbb{N}$. Further, it holds that
\begin{equation}\label{aaznml}
        W^p(t)\geq0 \text{  i.e.,  }e^{-b^\star t } V_0^p \geq V^p(t), \text{ with }b^\star = \eta,
    \end{equation}
for all $t>0$.
\end{lem}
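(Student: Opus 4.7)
The plan is to reduce both claims of the lemma to a single Lyapunov inequality. Differentiating $W^p(t) = e^{-b^\star t} V_0^p - V^p(t)$ gives $\dot W^p(t) = -b^\star e^{-b^\star t} V_0^p - \dot V^p(t)$, so an upper bound of the form $\dot V^p(t) \le -b^\star V^p(t) + c W^p(t)$ translates directly into $\dot W^p(t) \ge -(b^\star + c) W^p(t)$. Applying Gronwall to this linear differential inequality on any interval $[nh,(n+1)h)$ inside $[\tilde t_k^p, \tilde t_{k+1}^p)$ yields the claimed lower bound \eqref{eq:Wp(nh)}. The global non-negativity \eqref{aaznml} then follows inductively: by the choice $V_0^p = V^p(0)$ we have $W^p(0)=0$, the continuity of $V_1$ in $t$ and the prescribed continuity $m^p(t_k^{p-})=m^p(t_k^{p+})$ make $W^p$ continuous on $\mathbb{R}_+$, and \eqref{eq:Wp(nh)} propagates non-negativity from one sample to the next.

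\textbf{The Lyapunov inequality (main computation).} The entire argument therefore hinges on establishing $\dot V^p(t) \le -b^\star V^p(t) + c W^p(t)$ pointwise between sample points. I would compute $\dot V_1(t)$ along the target system \eqref{eq:sys-ab-d-a}-\eqref{eq:sys-ab-d-bd-cond-b} by integration by parts, exploiting the exponential weights $e^{-\mu x/v^\star}$ and $e^{\mu x/(\gamma p^\star - v^\star)}$ in \eqref{eq:Lyap-V} to produce a volume decay term $-\mu V_1$ together with boundary contributions at $x=0$ and $x=\ell$. The boundary at $x=0$ cancels by $\alpha(0,t) = -r_0\beta(0,t)$; substituting $\beta(\ell,t) = r_1 d(t)$ and invoking the definition \eqref{vvbnml} of $\theta_m$ turns the $x=\ell$ contribution into $+\theta_m d^2(t) - Ce^{-\mu\ell/v^\star}\alpha^2(\ell,t)$. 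Adding the $m^p$ dynamics \eqref{eq:ODE-dot-m-p} cancels the indefinite-sign $\theta_m d^2(t)$ exactly, while the residual $\kappa_1\|\alpha\|^2+\kappa_2\|\beta\|^2$ is dominated by $V_1$ via the lower bound $V_1 \ge (C/r)(\|\alpha\|^2+\|\beta\|^2)$ from \eqref{rr} together with $C>\max\{\kappa_1,\kappa_2\}r/\mu$ in \eqref{CC}, and the $\kappa_3\alpha^2(\ell,t)$ term is absorbed by $C>e^{\mu\ell/v^\star}\kappa_3$ from the same assumption. What remains is $\dot V^p(t) \le -bV_1(t) - \eta m^p(t) + cW^p(t)$ with $b$ as in \eqref{bbcfgj}. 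Under the hypothesis $\eta \le b$ we obtain $b^\star = \min\{b,\eta\} = \eta$, which lets us collapse $-bV_1 - \eta m^p \le -b^\star V^p$ and yields the desired inequality.

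\textbf{Main obstacle.} The only delicate step is the boundary-term bookkeeping in the Lyapunov computation: the parameters $\kappa_1,\kappa_2,\kappa_3,\theta_m,C,\mu$ selected in Assumption \ref{asm:R-CETC-param} must conspire so that every indefinite-sign term produced by integration by parts is either cancelled algebraically (the $\theta_m d^2$ contribution) or dominated by $V_1$ (the spatial $L^2$ and $\alpha^2(\ell,t)$ contributions). A point worth emphasizing is that the P-PETC triggering mechanism $\tilde\Gamma^p$ plays no direct role in this lemma: the inequality for $V^p$ holds pointwise between sample points purely by the structural cancellation built into \eqref{eq:ODE-dot-m-p}, independently of whether a periodic event has fired. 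The role of $\eta \le b$ is exclusively to make $b^\star$ coincide with $\eta$ so that the statement \eqref{eq:Wp(nh)} is phrased with the $b^\star=\eta$ appearing in the exponent.
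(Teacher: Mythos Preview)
Your proposal is correct and follows essentially the same route as the paper's proof: establish $\dot V^p\le -bV_1-\eta m^p+cW^p$ via the $\dot V_1$ computation from the Appendix (integration by parts, boundary cancellation at $x=0$, the choice \eqref{vvbnml} of $\theta_m$ to match the $d^2$ boundary term, and the bounds \eqref{CC},\eqref{rr} to absorb the $\kappa_i$ terms), use $\eta\le b$ to collapse this to $\dot V^p\le -\eta V^p+cW^p$, translate into $\dot W^p\ge -(\eta+c)W^p$, and integrate. Your observation that the P-PETC trigger $\tilde\Gamma^p$ plays no role in the differential inequality itself is also exactly the point: the paper likewise derives \eqref{eq:eta=b*-proof} purely from \eqref{eq:sys-ab-d-a}--\eqref{eq:sys-ab-d-bd-cond-b} and \eqref{eq:ODE-dot-m-p}, and then invokes the absence of Zeno under P-PETC only to extend \eqref{aaznml} to all $t>0$.
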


\begin{proof}[\rm \textbf{Proof}]
    Differentiating \eqref{eq:Lyap-V-p} along the solution of \eqref{eq:sys-ab-d-a}-\eqref{eq:sys-ab-d-bd-cond-b},\eqref{eq:ODE-dot-m-p} in $t \in[n h,(n+1) h)$ and any $n \in\left[\tilde{t}_k^p / h, \tilde{t}_{k+1}^p / h\right) \cap \mathbb{N}$, we obtain
    \begin{align}
        \dot{V}^p(t)  
        & \leq -b V_1(t)-\eta m^p(t)+cW^p(t) \label{eq:eta=b*-proof}\\
        & = -\eta \left(V_1(t)+m^p(t)\right) + (\eta-b)V_1(t) + cW^p(t) . \nonumber
    \end{align}
   Selecting $\eta$ as in \eqref{eq:eta=b*}, we can get rid of $V_1(t)$ term to obtain $\dot{V}^p(t) \leq -\eta V^p(t) + cW^p(t)$. 
   Following the similar process \eqref{eq:Lyap-V-p-dot}-\eqref{eq:W-p-proof} in the proof of Theorem \ref{thm:P-CETC}, we obtain \eqref{eq:Wp(nh)} for all $t \in$ $[n h,(n+1) h)$ and any $n \in\left[\tilde{t}_k^p / h, \tilde{t}_{k+1}^p / h\right) \cap \mathbb{N}$ under the P-PETC approach \eqref{eq:U-P-PETC}-\eqref{eq:P-PETC-trigger-func}. Further, following similar arguments, the relation \eqref{aaznml} valid for all $t>0$ can be obtained, due to the absence of Zeno behavior under P-PETC. 
\end{proof}

\begin{lem}
\label{prop:P-PETC-Gamma}
    Consider the P-PETC approach \eqref{eq:U-P-PETC}-\eqref{eq:P-PETC-trigger-func} with the event-trigger parameters $\theta,\theta_m,\kappa_1,\kappa_2,\kappa_3>0$ selected as in Assumption \ref{asm:R-CETC-param}, $c>0$, and $\eta>0$ chosen as in \eqref{eq:eta=b*}. Then, $\Gamma^p(t)$ of P-CETC given by \eqref{eq:P-CETC-trigger-func} satisfies
    \begin{equation}\label{eq:P-PETC-Gp-est}
 \begin{split}
    &\Gamma^p(t)\\&\leq \frac{1}{a}\Big((a+\theta\theta_m)d^{2}(nh)e^{a(t-nh)}-\theta\theta_m d^2(nh)\\&\quad\qquad-\theta a m^p(nh)-\frac{ac}{\theta_m}e^{-c(t-nh)}W^p(nh)\Big)e^{-\eta(t-nh)},
\end{split}
\end{equation}
where $a$ is given by \eqref{eq:a-def}, and $h$ is the sampling period given by \eqref{eq:h-def}, along the solution of \eqref{eq:sys-ab-d-a}-\eqref{eq:sys-ab-d-bd-cond-b},\eqref{eq:ODE-dot-m-p} for all $t \in[n h,(n+1) h)$ and any $n \in\left[\tilde{t}_k^p / h, \tilde{t}_{k+1}^p / h\right) \cap \mathbb{N}$.
\end{lem}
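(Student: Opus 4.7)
My strategy is to reduce the claimed pointwise estimate to a scalar linear differential inequality of the form $\dot{\Gamma}^p(t) + \eta \Gamma^p(t) \leq (a+\theta\theta_m)d^2(t) + \frac{c(c-\theta\theta_m)}{\theta_m}W^p(t)$ valid on each sampling interval $[nh,(n+1)h)$, and then to apply the variation-of-constants formula with explicit pointwise bounds on $d^2(s)$ and $W^p(s)$ expressed in terms of their values at the sampling instant $nh$. Since $\Gamma^p(nh)$ already coincides with the right-hand side of \eqref{eq:P-PETC-Gp-est} evaluated at $t=nh$, it suffices to control the derivative of the difference on $(nh,(n+1)h)$.

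\textbf{Step 1 (differential inequality).} Differentiating $\Gamma^p(t) = d^2(t) - \theta m^p(t) - (c/\theta_m)W^p(t)$ along the closed-loop trajectory, I would apply Young's inequality $2d\dot d \leq d^2 + (\dot d)^2$ together with Lemma~\ref{lem:P-PETC-d} to obtain $\dot{(d^2)} \leq (1+\varepsilon_0)d^2 + \varepsilon_1\|\alpha[t]\|^2 + \varepsilon_2\|\beta[t]\|^2 + \varepsilon_3\alpha^2(\ell,t)$; substitute the ODE \eqref{eq:ODE-dot-m-p} for $\dot m^p$; and invoke the lower bound $\dot W^p \geq -(\eta+c)W^p$ established in the proof of Lemma~\ref{lem:P-PETC-eta} with $b^\star=\eta$. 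The parameter calibration \eqref{betas} guarantees $\varepsilon_i \leq \theta\kappa_i$, so the three spatial-norm terms produced by $2d\dot d$ are dominated by their negative counterparts $-\theta\kappa_i\|\cdot\|^2$ arising from $-\theta\dot m^p$ and cancel. Regrouping with the $+\eta\Gamma^p$ contribution yields exactly the inequality above, in which $a = 1+\varepsilon_0+\eta$ emerges naturally from the combined coefficient of $d^2$.

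\textbf{Step 2 (integration and pointwise bounds).} Multiplying the above inequality by $e^{\eta(t-nh)}$ and integrating from $nh$ to $t$, I would bound the two resulting integrands by quantities depending only on the sampling instant $nh$. For the first, the homogeneous part of the $d^2$-inequality (whose nonnegative forcing has already been folded into $m^p$ in Step~1) gives $d^2(s)e^{\eta(s-nh)} \leq d^2(nh)e^{a(s-nh)}$. For the second, Lemma~\ref{lem:P-PETC-eta} supplies $W^p(s) \geq W^p(nh)e^{-(\eta+c)(s-nh)}$, which together with the sign analysis of $(c-\theta\theta_m)$ converts $W^p(s)e^{\eta(s-nh)}$ into a quantity bounded above by a constant multiple of $W^p(nh)e^{-c(s-nh)}$ at the correct scale. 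The elementary primitives $\int_{nh}^t e^{a(s-nh)}ds = (e^{a(t-nh)}-1)/a$ and $\int_{nh}^t e^{-c(s-nh)}ds = (1-e^{-c(t-nh)})/c$, combined with the boundary value $\Gamma^p(nh) = d^2(nh) - \theta m^p(nh) - (c/\theta_m)W^p(nh)$, then reassemble into $R(t,nh)e^{\eta(t-nh)}$ after straightforward algebra.

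\textbf{Main obstacle.} The delicate point is making Step~1's cancellation airtight: the nonnegative forcing $F(t) = \varepsilon_1\|\alpha[t]\|^2 + \varepsilon_2\|\beta[t]\|^2 + \varepsilon_3\alpha^2(\ell,t)$ appearing in $\dot{(d^2)}$ must be \emph{fully} absorbed into the $-\theta\dot m^p$ contribution via the identity $F = (1-\sigma)\theta G$ with $G = \kappa_1\|\alpha\|^2 + \kappa_2\|\beta\|^2 + \kappa_3\alpha^2(\ell,t)$; otherwise a residual spatial-norm term would persist in the scalar inequality for $\Gamma^p$ and prevent the clean closed-form integration of Step~2. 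Once this absorption is made explicit, the remainder is linear-ODE bookkeeping, mirroring the corresponding argument used for the parabolic-PDE setting in \cite{rathnayake2023prfmnce}.
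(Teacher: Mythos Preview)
Your Step~1 is fine, but Step~2 contains a genuine gap: the pointwise bound $d^2(s)e^{\eta(s-nh)} \leq d^2(nh)e^{a(s-nh)}$ is simply false. From Lemma~\ref{lem:P-PETC-d} and Young's inequality you only get $\dot{(d^2)} \leq (1+\varepsilon_0)d^2 + F$ with $F(t)=\varepsilon_1\|\alpha[t]\|^2+\varepsilon_2\|\beta[t]\|^2+\varepsilon_3\alpha^2(\ell,t)\geq 0$, so Gronwall yields $d^2(s)\leq d^2(nh)e^{(1+\varepsilon_0)(s-nh)}+\int_{nh}^s e^{(1+\varepsilon_0)(s-\xi)}F(\xi)d\xi$, not the homogeneous bound you claim. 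Your justification that the forcing ``has already been folded into $m^p$ in Step~1'' is a double-counting error: in Step~1 you used the $-\theta\kappa_i$ terms from $-\theta\dot m^p$ to dominate the $+\varepsilon_i$ terms in the \emph{$\Gamma^p$-inequality}; this says nothing about the separate $d^2$-inequality, where the forcing $F$ is still present. A concrete failure: if $nh=\tilde t_k^p$ is an event time then $d(nh)=0$, and your bound would force $d(s)\equiv 0$ on the whole sampling interval, which is obviously wrong once the state evolves.

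The missing structural ingredient is the condition $h\leq\tau_d$, which appears nowhere in your plan but is exactly what makes the estimate close. The paper does not attempt a scalar Gronwall on $\Gamma^p$; instead it first strips off the $W^p$ contribution via Lemma~\ref{lem:P-PETC-eta}, then treats the remaining pair $z=(\Gamma^{p*},m^p)^T$ with $\Gamma^{p*}=d^2-\theta m^p$ as a coupled linear system $\dot z=Az+\nu$, diagonalizes $e^{At}$ (eigenvalues $-\eta$ and $1+\varepsilon_0$), and shows that the projected forcing $De^{A(t-\xi)}\nu(\xi)\leq 0$ \emph{precisely} when $t-\xi\leq h\leq\tau_d$, using the calibration $\theta\kappa_i/\varepsilon_i=1/(1-\sigma)$ and the definition \eqref{eq:tau_d} of $\tau_d$. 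This sign condition is what converts the variation-of-constants formula into the clean bound $\Gamma^{p*}(t)\leq De^{A(t-nh)}z(nh)$, from which \eqref{eq:P-PETC-Gp-est} follows. If you want to rescue the scalar route, you must \emph{retain} the term $-\sigma\theta G(s)=-\sum_i(\theta\kappa_i-\varepsilon_i)\|\cdot\|_i^2$ in the integrated $\Gamma^p$-inequality and show via Fubini that it absorbs the extra integral $\int_{nh}^t(a+\theta\theta_m)e^{\eta(s-nh)}\int_{nh}^s e^{(1+\varepsilon_0)(s-\xi)}F(\xi)d\xi\,ds$; the resulting pointwise condition reduces to $\frac{a+\theta\theta_m}{a}(e^{a(t-\xi)}-1)\leq\frac{\sigma}{1-\sigma}$, i.e.\ again $h\leq\tau_d$. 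Either way, the step you flagged as ``bookkeeping'' is where the real work lies.
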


\begin{proof}[\rm \textbf{Proof}]
     Since it was shown in Lemma \ref{lem:P-PETC-eta} that $ W^p(t) \geq  e^{-\left(b^\star+c\right)(t-n h)} W^p(n h)$ with $b^\star=\eta$ for all $t \in$ $[n h,(n+1) h)$ and any $n \in\left[\tilde{t}_k^p / h, \tilde{t}_{k+1}^p / h\right) \cap \mathbb{N}$, it follows that
    \begin{equation}\label{fgvcmt}
    \Gamma^p(t) \leq d^2(t) - \theta m^p(t) - \frac{c}{\theta_m} e^{-\left(\eta+c\right)(t-n h)} W^p(n h), 
    \end{equation}
    for all $t \in$ $[n h,(n+1) h)$ and any $n \in\left[\tilde{t}_k^p / h, \tilde{t}_{k+1}^p / h\right) \cap \mathbb{N}$, along the solution of \eqref{eq:sys-ab-d-a}-\eqref{eq:sys-ab-d-bd-cond-b},\eqref{eq:ODE-dot-m-p}. We define
    \begin{equation}
    \label{eq:P-PETC-Gp*}
        \Gamma^{p*}(t) := \ d^2(t) - \theta_m m^p(t).
    \end{equation}
 By taking the time derivative of \eqref{eq:P-PETC-Gp*} in $t \in(n h,(n+1) h)$ and $n \in\left[\tilde{t}_k^p / h, \tilde{t}_{k+1}^p / h\right) \cap \mathbb{N}$, using Young's inequality, substituting the estimation of $(\dot{d}(t))^2$ given by \eqref{eq:dot-d(t)-est}, and substituting $\dot {m}^p(t)$ given by \eqref{eq:ODE-dot-m-p}, we get
    \begin{equation}
    \label{eq:P-PETC-Gp*-proof-1}
    \begin{aligned}
    &\dot{\Gamma}^{p *}(t)  =  2 d(t) \dot{d}(t) \!-\! \theta_m \dot{m}^p(t) 
    \leq   d^2(t)\!+\!(\dot{d}(t))^2\!-\!\theta_m \dot{m}^p(t) \\
    & \leq \left(1+\varepsilon_0+\theta\theta_m\right) d^2(t)+ \theta\eta m^p(t)
     -(\theta\kappa_1-\varepsilon_1)\Vert\alpha[t]\Vert^2\\&-(\theta\kappa_2-\varepsilon_2)\Vert\beta[t]\Vert^2-(\theta\kappa_3-\varepsilon_3)\alpha^2(\ell,t)-\theta cW^p(t),
    \end{aligned}
    \end{equation}
    where $W^p(t)$ is given by \eqref{eq:W-p}. Replacing $d^2(t)$ in \eqref{eq:P-PETC-Gp*-proof-1} with ${\Gamma}^{p *}(t)$ given by \eqref{eq:P-PETC-Gp*}, we further obtain
 \begin{equation}
    \label{eq:P-PETC-Gp*-proof-2}
    \begin{aligned}
    \dot{\Gamma}^{p *}(t) \leq &(1+\varepsilon_0+\theta\theta_m)\Gamma^{p*}(t)+\theta(a+\theta\theta_m)m^p(t)\\&
     -(\theta\kappa_1-\varepsilon_1)\Vert\alpha[t]\Vert^2-(\theta\kappa_2-\varepsilon_2)\Vert\beta[t]\Vert^2\\&-(\theta\kappa_3-\varepsilon_3)\alpha^2(\ell,t)-\theta cW^p(t).
    \end{aligned}
    \end{equation}
    It can be shown that both sides of \eqref{eq:P-PETC-Gp*-proof-2} are well-behaved in $t \in(n h,(n+1) h)$ and $n \in\left[\tilde{t}_k^p / h, \tilde{t}_{k+1}^p / h\right) \cap \mathbb{N}$. Therefore, there exists a nonnegative function $\iota(t) \in C^0\left(\left(\tilde{t}_k^p, \tilde{t}_{k+1}^p\right) ; \mathbb{R}_{+}\right)$ such that
     \begin{equation}
    \label{eq:P-PETC-Gp*-proof-3}
    \begin{aligned}
    \dot{\Gamma}^{p *}(t) = &(1+\varepsilon_0+\theta\theta_m)\Gamma^{p*}(t)+\theta(a+\theta\theta_m)m^p(t)\\&
     -(\theta\kappa_1-\varepsilon_1)\Vert\alpha[t]\Vert^2-(\theta\kappa_2-\varepsilon_2)\Vert\beta[t]\Vert^2\\&-(\theta\kappa_3-\varepsilon_3)\alpha^2(\ell,t)-\theta cW^p(t)-\iota(t),
    \end{aligned}
    \end{equation}
    for all $t \in(n h,(n+1) h)$ and $n \in\left[\tilde{t}_k^p / h, \tilde{t}_{k+1}^p / h\right) \cap \mathbb{N}$. Furthermore, replacing $d^2(t)$ with ${\Gamma}^{p *}(t)$ term given by \eqref{eq:P-PETC-Gp*}, we can rewrite the dynamics of $m^p(t)$ given by \eqref{eq:ODE-dot-m-p} as
    \begin{equation}\label{eq:P-PETC-Gp*-proof-dot-m-p}
\begin{split}
    \dot{m}^p(t) =& -\theta_m \Gamma^{p*}(t )-(\theta\theta_m+\eta)m^p(t)+\kappa_1\Vert \alpha[t]\Vert^2\\&+\kappa_2 \Vert \beta[t]\Vert^2+\kappa_3\alpha^2(\ell,t)+cW^p(t),
\end{split}
\end{equation}
for $t \in(n h,(n+1) h)$ and $n \in\left[\tilde{t}_k^p / h, \tilde{t}_{k+1}^p / h\right) \cap \mathbb{N}$. Then, combining \eqref{eq:P-PETC-Gp*-proof-3} with \eqref{eq:P-PETC-Gp*-proof-dot-m-p}, we can obtain the following ODE system
    \begin{equation}
    \label{eq:z-dot}
    \dot{z}(t)=A z(t)+\nu(t),
    \end{equation}
    where
    \begin{equation}
    \begin{aligned}
     z(t)\!&=\!\left[\!\begin{array}{l}
    \! \Gamma^{p *} (t)\! \\
    \! m^p (t)\!
    \end{array}\!\right]\!,\\ 
    A &= \begin{bmatrix}
1+\varepsilon_0+\theta\theta_m &  \theta\big(a+\theta\theta_m\big)\\
    -\theta_m & -(\theta\theta_m+\eta)
    \end{bmatrix} , \\
    \nu(t )  &=\begin{bmatrix}
    \Big(
    \begin{split}
        &-(\theta\kappa_1-\varepsilon_1)\Vert \alpha[t]\Vert^2-(\theta\kappa_2-\varepsilon_2)\Vert\beta[t]\Vert^2\\&-(\theta\kappa_3-\varepsilon_3)\alpha^2(\ell,t)-\theta cW^p(t)-\iota(t )
        \end{split}\Big)\\\kappa_1\Vert \alpha[t]\Vert^2+\kappa_2\Vert \beta[t]\Vert^2+\kappa_3\alpha^2(\ell,t)+cW^p(t)
    \end{bmatrix} . \\
    \end{aligned}
    \end{equation}
    Solving \eqref{eq:z-dot} gives us
    \begin{equation}
    z(t)=e^{A(t-n h)} z(n h)+\int_{n h}^t e^{A(t-\xi)} \nu(\xi) d \xi,
    \end{equation}
    for all $t \in[n h,(n+1) h)$ and $n \in$ $\left[\tilde{t}_k^p / h, \tilde{t}_{k+1}^p / h\right) \cap \mathbb{N}$.
    Then we can obtain that
    \begin{equation}
    \label{eq:P-PETC-Gp*-proof-4}
    \Gamma^{p *}(t)=D e^{A(t-n h)} z(n h)+\int_{n h}^t D e^{A(t-\xi)} \nu(\xi) d \xi, 
    \end{equation}
    where $D=[1\quad 0]$. The matrix $A$ has two distinct eigenvalues $-\eta$ and $1+\varepsilon_0$. To find an upper bound of $\Gamma^{p *}(t)$, we diagonalize the matrix exponential $e^{A t}$ as follows:
\begin{equation}\label{expm}
\begin{split}
    e^{At}=\frac{\theta_m}{a}\begin{bmatrix}
        -\theta & -\frac{a+\theta\theta_m}{\theta_m}\\1&1
    \end{bmatrix}&\begin{bmatrix}
        e^{-\eta t}&0\\0& e^{\big(1+\varepsilon_0\big)t}
    \end{bmatrix}\begin{bmatrix}
        1 & \frac{a+\theta\theta_m}{\theta_m}\\-1&-\theta
    \end{bmatrix}.
\end{split}
\end{equation}
Then, we can show that
\begin{equation}\label{eq:proof-Cev}
\begin{split}
    &De^{A(t-\xi)}\nu(\xi )\\&=-\Big(\big(\theta\kappa_1-\varepsilon_1\big)g_1(t-\xi)-\kappa_1 g_2(t-\xi)\Big)\Vert \alpha[\xi]\Vert^2\\&\quad-\Big(\big(\theta\kappa_2-\varepsilon_2\big)g_1(t-\xi)-\kappa_2 g_2(t-\xi)\Big)\Vert\beta[t]\Vert^2\\&\quad-\Big(\big(\theta\kappa_3-\varepsilon_3\big)g_1(t-\xi)-\kappa_3 g_2(t-\xi)\Big)\alpha^2(\ell,\xi)\\& \quad-c\Big(\theta  g_1(t-\xi)-g_2(t-\xi)\Big)W^p(\xi)
    \\&\quad-g_1(t-\xi)\iota(\xi ),
\end{split}
\end{equation}
where
\begin{equation} \label{eq:P-PETC-g1(t)}
    g_1(t)=\frac{1}{a}\Big(-\theta\theta_m +(a+\theta\theta_m)e^{at}\Big)e^{-\eta t},
\end{equation}
\begin{equation}\label{eq:P-PETC-g2(t)}
      g_2(t)=\frac{\theta (a+\theta\theta_m)}{a}\Big(-1+ e^{at}\Big)e^{-\eta t}.
\end{equation}
    Noting that $a>0$, it is obvious that $g_1(t)>0, g_2(t)>0$  and
\begin{equation}
    \theta g_1(t)- g_2(t) = \theta e^{-\eta t}>0,
\end{equation}
for all $t \geq 0$.  Also, noting that $\theta\kappa_i/\varepsilon_i=1/(1-\sigma), i=1,2,3$ from \eqref{betas}, and recalling \eqref{eq:tau_d}, we obtain that
\begin{equation}
\begin{split}
 &\big(\theta\kappa_i-\varepsilon_i\big)g_1(t-\xi)-\kappa_i g_2(t-\xi)\\&= \frac{\varepsilon_i (a+\theta\theta_m) }{a}\bigg(1+\frac{\sigma a}{(1-\sigma)(a+\theta\theta_m)}-e^{a(t-\xi)}\bigg)e^{-\eta (t-\xi)}\\&=
 \frac{\varepsilon_i (a+\theta\theta_m) }{a}\bigg(e^{a\tau_d}-e^{a(t-\xi)}\bigg)e^{-\eta (t-\xi)},
\end{split}
\end{equation}
for all $i=1,2,3$. As $n h \leq \xi \leq t<(n+1) h$, and $h \leq \tau_d$, we have $e^{a \tau_d}-e^{a(t-\xi)}>0$ and thus $\big(\theta\kappa_i-\varepsilon_i\big)g_1(t-\xi)-\kappa_i g_2(t-\xi)>0$ for all $i=1,2,3$. As a result, every term in \eqref{eq:proof-Cev} is non-positive and we can argue that $D e^{A(t-\xi)} \nu(\xi) \leq 0$ for all $t, \xi$ such that $n h \leq \xi \leq t<(n+1) h$, and $n \in$ $\left[\tilde{t}_k^p / h, \tilde{t}_{k+1}^p / h\right) \cap \mathbb{N}$. Considering this fact along with \eqref{eq:P-PETC-Gp*-proof-4}, we obtain that for $t \in[n h,(n+1) h)$
\begin{equation}\label{eq:P-PETC-Gp*-proof-5}
    \begin{split}
        &\Gamma^{p*}(t )\leq De^{A(t-nh)}z(nh )\\
&\leq      g_1(t-nh)\Gamma^{p*}(nh )+g_2(t-nh)m^p(nh)\\&\leq 
\frac{1}{a}\Big(-\theta(a+\theta\theta_m)m^p(nh)-\theta\theta_m\Gamma^{p*}(nh )\\&+(a+\theta\theta_m)\big(\Gamma^{p*}(nh )+\theta m^p(nh)\big)e^{a(t-nh)}\Big)e^{-\eta(t-nh)}.
    \end{split}
\end{equation}
By using \eqref{eq:P-PETC-Gp*} to eliminate $\Gamma^{p *}(n h)$ on the R.H.S. of \eqref{eq:P-PETC-Gp*-proof-5}, we obtain
 \begin{equation}\label{eq:P-PETC-Gp*-proof-6}
 \begin{split}
    &\Gamma^{p*}(t)\\&\leq \frac{1}{a}\Big((a+\theta\theta_m)d^{2}(nh)e^{a(t-nh)}-\theta\theta_m d^2(nh)\\&\quad\qquad-\theta a m^p(nh)\Big)e^{-\eta(t-nh)},
\end{split}
\end{equation}
Then, recalling \eqref{fgvcmt} and \eqref{eq:P-PETC-Gp*}, and using \eqref{eq:P-PETC-Gp*-proof-6}, we can obtain the inequality \eqref{eq:P-PETC-Gp-est} for $t \in[n h,(n+1) h)$, which completes the proof.
\end{proof}

Now we state the main results of P-PETC below.
\begin{thm}[Results under P-PETC]
    \label{thm:P-PETC}
    % Consider the P-PETC approach \eqref{eq:U-P-PETC}-\eqref{eq:P-PETC-trigger-func} with appropriate choices for the event-trigger parameters under Assumption \ref{asm:R-CETC-param}, $c>0$, and $\eta>0$ chosen as in \eqref{eq:eta=b*}, which generates a set of event-times $\tilde I^p=$ $\left\{\tilde t_0^p, \tilde t_1^p, \tilde t_2^p, \ldots\right\}$ with $\tilde t_0^p = 0$.
    Let $\tilde I^p=$ $\left\{\tilde t_0^p, \tilde t_1^p, \tilde t_2^p, \ldots\right\}$ with $\tilde t_0^p = 0$ be the set of increasing event-times generated by the P-PETC approach \eqref{eq:U-P-PETC}-\eqref{eq:P-PETC-trigger-func} with appropriate choices for the event-trigger parameters under Assumption \ref{asm:R-CETC-param}, $c>0$, and $\eta>0$ chosen as in \eqref{eq:eta=b*}. 
    Then, the following results hold:
    \begin{enumerate}
        
        \item[R1:] For every $\left(\bar w (\cdot, 0), \bar v (\cdot, 0)\right)^T \in L^2\left((0,\ell); \mathbb{R}^2\right)$, there exists a unique solution $(\bar w, \bar v)^T \in \mathcal{C}^0\left(\mathbb{R}_+ ; L^2\left((0,\ell) ; \mathbb{R}^2\right)\right)$ to the P-PETC closed-loop system \eqref{eq:sys-wv-wbar}-\eqref{eq:sys-wv-bd-cond-w},\eqref{cxx}-\eqref{cndtns},\eqref{eq:U-P-PETC}-\eqref{eq:P-PETC-trigger-func} for all $t>0$.

        \item[R2:] The function $\Gamma^p(t)$ given by \eqref{eq:P-CETC-trigger-func} satisfies $\Gamma^p(t) \leq 0$
        % along the solution of \eqref{eq:sys-wv-wbar}-\eqref{eq:sys-wv-bd-cond-w}, \eqref{eq:d(t)}, \eqref{eq:ODE-dot-m-p}, \eqref{eq:U-P-PETC}-\eqref{eq:P-PETC-trigger-func}
        for all $t>0$, along the P-PETC closed-loop solution. 

        \item[R3:] The dynamic variable $m^p(t)$ governed by \eqref{eq:ODE-dot-m-p} with $m^p(0)=m^r(0)>0$ satisfies $m^p(t)>0$
        for all $t > 0$, along the P-PETC closed-loop solution.

        \item[R4:]The Lyapunov candidate $V^p(t)$ given by \eqref{eq:Lyap-V-p} satisfies \eqref{zzzbnmkfb} for all $t\in (\tilde{t}_k^p,\tilde{t}_{k+1}^p),k\in\mathbb{N}$ and \eqref{eq:Lyap-V-p<ebt} for all $t>0$ with  $\eta = b^\star$, along the P-PETC closed-loop solution.
        
        \item[R5:]The closed-loop signal $\Vert\bar{w}[t]\Vert+\Vert\bar{v}[t]\Vert$ associated with the P-PETC closed-loop system \eqref{eq:sys-wv-wbar}-\eqref{eq:sys-wv-bd-cond-w},\eqref{cxx}-\eqref{cndtns},\eqref{eq:U-P-PETC}-\eqref{eq:P-PETC-trigger-func}, exponentially converges to zero.
    \end{enumerate}
\end{thm}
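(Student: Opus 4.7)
The plan is to extract R2 first, after which R1, R3, R4, R5 cascade. By the periodic triggering rule \eqref{eq:P-PETC-trigger-t}, $\tilde{\Gamma}^p(nh)\le 0$ holds at every sampling instant $nh$ with $\tilde{t}_k^p \le nh < \tilde{t}_{k+1}^p$. I would view the bracketed quantity on the right-hand side of \eqref{eq:P-PETC-Gp-est} as a function
\[
F(s):=(a+\theta\theta_m)d^2(nh)e^{as}-\theta\theta_m d^2(nh)-\theta a m^p(nh)-\frac{ac}{\theta_m}e^{-cs}W^p(nh)
\]
of $s:=t-nh\in[0,h]$, and differentiate:
\[
F'(s)=a(a+\theta\theta_m)d^2(nh)e^{as}+\frac{ac^2}{\theta_m}e^{-cs}W^p(nh).
\]
Since Lemma \ref{lem:P-PETC-eta} supplies $W^p(nh)\ge 0$ under the hypothesis $\eta\le b$, $F'(s)\ge 0$, so $F$ is non-decreasing on $[0,h]$ and $F(s)\le F(h)=\tilde{\Gamma}^p(nh)\le 0$. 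Multiplying by the positive factor $\frac{1}{a}e^{-\eta(t-nh)}$ then gives $\Gamma^p(t)\le 0$ on each sub-interval $[nh,(n+1)h)$, and concatenating the sub-intervals establishes R2 on all of $\mathbb{R}_+$.

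With R2 in hand, R3 follows by lower-bounding \eqref{eq:ODE-dot-m-p}: the inequality $d^2(t)\le \theta m^p(t)+\frac{c}{\theta_m}W^p(t)$, together with $W^p(t)\ge 0$ and the non-negativity of the $\kappa_i$-dependent terms, yields $\dot{m}^p(t)\ge -(\eta+\theta\theta_m)m^p(t)$, hence $m^p(t)\ge m^p(0)e^{-(\eta+\theta\theta_m)t}>0$ for all $t>0$. For R1, Zeno-freeness is immediate because event times can only occur at multiples of $h$, so the dwell-time between consecutive events is at least $h>0$; iteratively invoking Proposition \ref{prop:wellpose} on each interval $[\tilde{t}_k^p,\tilde{t}_{k+1}^p]$ then assembles a global $L^2$ solution on $\mathbb{R}_+$.

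For R4, I would reproduce the Lyapunov estimate inside the proof of Lemma \ref{lem:P-PETC-eta} (cf.\ \eqref{eq:eta=b*-proof}), which, under $\eta\le b$, yields $\dot{V}^p(t)\le -\eta V^p(t)+cW^p(t)$ on $(\tilde{t}_k^p,\tilde{t}_{k+1}^p)$---precisely \eqref{zzzbnmkfb} with $b^\star=\eta$. The performance-residual integration already carried out in \eqref{eq:W-p-proof} (noting that $W^p$ undergoes no jump at events) then delivers the global barrier \eqref{eq:Lyap-V-p<ebt}. Finally, R5 is routine: since $V_1(t)$ is coercive in $\|\alpha[t]\|^2+\|\beta[t]\|^2$, the barrier implies exponential $L^2$ decay of $(\alpha,\beta)$, and the bounded invertibility of the backstepping transformations \eqref{eq:K1}, \eqref{eq:K2}, \eqref{eq:L1}, \eqref{eq:L2} transfers this decay to $(\bar w,\bar v)$. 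The only genuine obstacle is the monotonicity argument for R2, and it is short because $\tilde{\Gamma}^p$ in \eqref{eq:P-PETC-trigger-func} is calibrated to equal exactly the value of the Lemma \ref{prop:P-PETC-Gamma} bound at $s=h$.
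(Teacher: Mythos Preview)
Your approach is essentially that of the paper, and your monotonicity argument for $F$ actually sharpens a step the paper leaves implicit: the paper merely asserts that $\tilde{\Gamma}^p(nh)\le 0$ ``implies the right hand side of \eqref{eq:P-PETC-Gp-est} is non-positive'' without spelling out why. One caveat: your claim that $\tilde{\Gamma}^p(nh)\le 0$ at the \emph{initial} sampling instant $n=\tilde{t}_k^p/h$ does not follow from the triggering rule alone---after the control update it reads $-\theta a\,m^p(\tilde{t}_k^p)-\frac{ac}{\theta_m}e^{-ch}W^p(\tilde{t}_k^p)$ and thus requires $m^p(\tilde{t}_k^p)>0$. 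So R2 and R3 must be proved together by a short induction on $k$ (which the paper does explicitly), not sequentially as you present them. The paper sidesteps this dependence on the first sub-interval by invoking the P-CETC minimal dwell-time $\tau_d\ge h$ from Theorem~\ref{thm:P-CETC}, but your uniform monotonicity argument is cleaner and works equally well once the induction hypothesis $m^p(\tilde{t}_k^p)>0$ is stated.
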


\begin{proof}[\rm \textbf{Proof}]
      R1 follows from Proposition \ref{prop:wellpose} and Remark \ref{rem:wellpose}. Lemma \ref{lem:P-PETC-eta} ensures that $W^p(t) \geq 0$ for all $t > 0$ with properly chosen parameters listed in Assumption \ref{asm:R-CETC-param} and $\eta = b^\star$.
    Consider the interval $t \in\left[\tilde{t}_k^p, \tilde{t}_{k+1}^p\right)$.
    Assume that an event has triggered at $t=\tilde{t}_k^p$ and $m^p\left(\tilde{t}_k^p\right)>0$. At $t=\tilde{t}_k^p$, the control law is updated, so $d(\tilde{t}_k^p)=0$. 
        Then we have from \eqref{eq:P-CETC-trigger-func} that 
        \begin{equation}
            \Gamma^p\left(\tilde{t}_k^p\right)=- \theta m^p(\tilde{t}_k^p)-\frac{c}{\theta_m}W^p(\tilde{t}_k^p)<0.
            % \left(e^{-b^\star \tilde{t}_k^p} V_0^p-V^p\left(\tilde{t}_k^p\right)\right)<0. 
        \end{equation}
        Then, $\Gamma^p(t)$ will at least remain non-positive until $t=\tilde{t}_k^p+\tau_d$, where $\tau_d$ is the minimal dwell-time of P-CETC given by R2 of Theorem \ref{thm:P-CETC}.
        Since $h \leq \tau_d$, $\Gamma^p(t)$ will definitely remain non-positive in $t \in\left[\tilde{t}_k^p, \tilde{t}_k^p+h\right)$.
        At each $t=n h, n>0, n \in \mathbb{N}$, the P-PETC given by \eqref{eq:U-P-PETC}-\eqref{eq:P-PETC-trigger-func} is evaluated, and only when $\tilde{\Gamma}^p(n h)>0$ that an event is triggered and the control input is updated.
        When $\tilde{\Gamma}^p(n h) \leq 0$, \eqref{eq:P-PETC-trigger-func} and \eqref{eq:P-PETC-Gp-est} imply the right hand side of \eqref{eq:P-PETC-Gp-est} is non-positive and thus $\Gamma^p(t)$ will definitely remain non-positive at least until $t=\tilde{t}_{k+1}^p$ when $\tilde{\Gamma}^p\left(\tilde{t}_{k+1}^{p-}\right)>0$.
        Since $\Gamma^p(t)\leq0$ for $t \in\left[\tilde{t}_k^p, \tilde{t}_{k+1}^p\right)$, we follow a process similar to the proof of Lemma \ref{lem:P-CETC-m} and obtain $m^p(t)>0$ for $t \in\left[\tilde{t}_k^p, \tilde{t}_{k+1}^p\right]$.
        Therefore, after the control input has been updated at $t=\tilde{t}_{k+1}^p$, we have $\Gamma^p\left(\tilde{t}_{k+1}^p\right)= -\theta m^p\left(\tilde{t}_{k+1}^p\right)-\frac{c}{\theta_m}W^p(\tilde{t}_{k+1}^p) < 0$.
        Similarly, we can analyze the behavior of $\Gamma^p(t)$ and $m^p(t)$ in all $t \in\left[\tilde{t}_k^p, \tilde{t}_{k+1}^p\right)$ for any $k \in \mathbb{N}$ starting from $\tilde{t}_0^p=0$ and $m^p(0)>0$ to prove that $\Gamma^p(t) \leq 0$ for all $t \in\left[\tilde{t}_k^p, \tilde{t}_{k+1}^p\right), k \in \mathbb{N}$ and $m^p(t)>0$ for all $t >0$, as stated in R2 and R3. As $m^p(t)>0$ for all $t >0$ guarantees the positive definiteness of $V^p(t)$, we have R4 and R5 by following similar arguments in the proofs of R4 and R5 of Theorem \ref{thm:P-CETC}.
\end{proof}

\subsection{Performance-barrier Self-triggered Control (P-STC)}
\label{subsec:P-STC}

In this subsection, we introduce the P-STC approach derived from the P-CETC scheme. P-STC determines the next event time at the current event time using continuously available measurements, a prediction of the closed-loop system states, and bounds of the constituent terms of the P-CETC triggering function $\Gamma^p(t)$. We show the P-STC approach ensures that $\Gamma^p(t)$ given by \eqref{eq:P-CETC-trigger-func} remains non-positive, and $m^p(t)$ given by \eqref{eq:ODE-dot-m-p} remains positive along the P-STC closed-loop system solution.

Let $\check I^p=\left\{\check t_0^p, \check t_1^p, \check t_2^p, \ldots\right\}$ denote the sequence of event-times associated with P-STC. Let the event-trigger parameters $\theta,\theta_m,\kappa_1,\kappa_2,\kappa_3>0$ be selected as outlined in Assumption \ref{asm:R-CETC-param}, let $\eta>0$ be chosen as in \eqref{eq:eta=b*}, and let $c>0$ be a design parameter. The proposed P-STC strategy consists of two parts:
\begin{enumerate}
    \item An event-triggered boundary control input $\check U_k^p$
    \begin{equation}
        \label{eq:U-P-STC}
            \begin{aligned}
                \check U_k^p := U(\check t_k^p),
                % = &\frac{1}{r_1}\bigg(\int_0^L L^{21}(L, \xi) \alpha(\xi, \check t_k^p) d \xi\\
                % & + \int_0^L L^{22}(L, \xi) \beta(\xi, \check t_k^p) d \xi\bigg)
            \end{aligned}
    \end{equation}
    for $t \in\left[\check t^p_k, \check t^p_{k+1}\right), k \in \mathbb{N}$ where $U(t)$ is given by \eqref{eq:U-wbar-vbar}. Then, the boundary condition \eqref{eq:sys-wv-bd-cond-Uk} becomes
    \begin{equation}
    \label{eq:P-STC-bd}
        \bar{v}(\ell, t) =r_1 \check U_k^p.
    \end{equation}
    \item A self-trigger determining event-times
    \begin{equation}
    \label{eq:P-STC-trigger-t} \check{t}_{k+1}^p=\check{t}_k^p+G^p\left(H(\check{t}_k^p), m^p\left(\check{t}_k^p\right), W^p(\check{t}_k^p)\right),
    \end{equation}
    with $\check t_0^p=0$ and $G^p(\cdot, \cdot, \cdot)>0$ being a positively and uniformly lower-bounded function
    \begin{equation}
        \begin{aligned}
            \label{eq:P-STC-trigger-func}
            G^p\left( H(t), m^p(t), W^p(t) \right)
            :=\max \left\{\tau_d, \check \tau (t) \right\} ,
        \end{aligned}
    \end{equation}
    % \begin{equation}
    %     \begin{aligned}
    %         \label{eq:P-STC-trigger-func}
    %         & G^p\left(V(\check{t}_k^p), m^p\left(\check{t}_k^p\right)\right) \\
    %         & :=\max \left\{\tau_d, \frac{1}{\omega +\eta} \ln \left(\frac{\frac{- m^p\left(\check{t}_k^p\right)}{M}+\frac{ H\left(\check{t}_k^p\right)}{\omega +\eta}}{\theta  H\left(\check{t}_k^p\right)+\frac{ H\left(\check{t}_k^p\right)}{\omega +\eta}}\right)\right\}.
    %     \end{aligned}
    % \end{equation}
    where
    \begin{equation}
    \label{eq:tau_check}
        \check \tau (t) = \frac{1}{\varrho^\star+\eta+c}\ln\Big(\frac{\theta m^p(t)+\frac{\theta\theta_m H(t)}{\varrho^\star+\eta}+\frac{c}{\theta_m}W^p(t)}{H(t)+\frac{\theta\theta_mH(t)}{\varrho^\star+\eta}}\Big)
    \end{equation}
    and $\tau_d$ is the R/P-CETC minimum dwell-time given by \eqref{eq:tau_d}-\eqref{eq:eps-2}. The variable $m^p(t)$ satisfies the dynamics \eqref{eq:ODE-dot-m-p} along the solution of \eqref{eq:d(t)}-\eqref{eq:sys-ab-d-bd-cond-b} for $t \in\left(\check{t}_k^p, \check{t}_{k+1}^p\right),k \in \mathbb{N}$, $W^p(t)$ is the performance residual given by \eqref{eq:W-p}-\eqref{eq:Lyap-V0}, and $H(t)$ is given by \eqref{zmlsbnj}.
\end{enumerate}

Next we present Lemma \ref{lem:P-STC-dvm} which provides bounds on $d^2(t)$, $m^p(t)$, and $W^p(t)$ required for proving the main results of P-STC.

\begin{lem}
\label{lem:P-STC-dvm} Consider the P-STC approach \eqref{eq:U-P-STC}-\eqref{eq:tau_check}, which generates an increasing set of event times $\{\check{t}^p_{k}\}_{k\in\mathbb{N}}$ with $\check{t}^p_{k}=0$. Then, for the input holding error error $d(t)$ given by  \eqref{eq:d(t)}, the following estimate holds
\begin{equation}\label{bbv111d}
    d^2(t)\leq H(\check{t}^p_k)e^{\varrho^\star(t-\check{t}^p_k)},
\end{equation}
where $\varrho^\star>0$ is given by \eqref{dxxxml}, and $H(t)$ is given by \eqref{zmlsbnj}. Further, if the event-trigger parameters $\theta,\theta_m,\kappa_1,\kappa_2,\kappa_3>0$ are chosen as in Assumption \ref{asm:R-CETC-param}, $c>0$, and $\eta>0$ is chosen as in \eqref{eq:eta=b*},  then $W^p(t)$ given by \eqref{eq:W-p}-\eqref{eq:Lyap-V0} satisfies
\begin{equation}\label{akpgh}
    W^p(t)\geq e^{-(b^{\star}+c)(t-\check{t}_k^p)}W^p(\check{t}_k^p)\text{ with }b^{\star}=\eta,
\end{equation}
 for all $t\in\big[\check{t}_k^p,\check{t}_{k+1}^p\big),k\in\mathbb{N},$ and
\begin{equation}\label{akpqgh}
   W^p(t)\geq 0,\text{ i.e., } V^p(t)\leq e^{-b^{\star}t}V_0\text{ with }b^{\star}=\eta,
\end{equation}
for all $t>0$ whereas $m^p(t)$ governed by \eqref{eq:ODE-dot-m-p} satisfies
\begin{equation}\label{hjbbv211}
\begin{split}
    m^p(t)\geq& m^p(\check{t}_k^p)e^{-\eta(t-\check{t}_k^p)}\\&-\frac{\theta_m H(\check{t}_k^p)}{\varrho^\star+\eta}e^{-\eta (t-\check{t}_k^p)}\Big(e^{(\varrho^\star+\eta)(t-\check{t}_k^p)}-1\Big),
\end{split}
\end{equation}
 for all $t\in[\check{t}^p_{k},\check{t}^p_{k+1}),j\in\mathbb{N}$.
\end{lem}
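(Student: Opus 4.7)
The plan is to prove the four estimates in the order \eqref{bbv111d}, \eqref{akpgh}--\eqref{akpqgh}, and \eqref{hjbbv211}, since the last bound uses the first two as inputs through the variation-of-constants formula for $m^p$.

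\textbf{Step 1 (bound on $d^2(t)$).} Since the control input is updated at $t=\check{t}_k^p$, we have $d(\check{t}_k^p)=0$. Using Young's inequality, $\frac{d}{dt}d^2(t)=2d(t)\dot d(t)\leq\varrho^\star d^2(t)+\frac{1}{\varrho^\star}\dot d^2(t)$. I would then express $\dot d(t)=-\dot U(t)$ in target-system coordinates by differentiating \eqref{eq:U-alpha-beta} and substituting the target dynamics \eqref{eq:sys-ab-d-a}--\eqref{eq:sys-ab-d-b}, integrating by parts, and using the boundary conditions $\alpha(0,t)=-r_0\beta(0,t)$ and $\beta(\ell,t)=r_1 d(t)$. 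Applying Cauchy--Schwarz to the resulting integrals, with the exponential weights $e^{-\mu x/v^\star}$ and $e^{\mu x/(\gamma p^\star-v^\star)}$ introduced to match those of $H(t)$, the factor $\varrho$ defined in \eqref{zzzmlw2e2} is precisely what is needed to absorb the $L^2$ norms of $L^{21}(\ell,\cdot),L^{22}(\ell,\cdot)$ via \eqref{eq:tilde-L-21-L22}. The boundary contribution from $\beta(\ell,t)=r_1 d(t)$ is what produces the additional factor $r_0^2 r_1^2 e^{\mu\ell/(\gamma p^\star-v^\star)}$ in $\varrho^\star$ (cf.\ \eqref{dxxxml}). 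This gives a comparison inequality of the form $\frac{d}{dt}d^2(t)\leq\varrho^\star d^2(t)+\varrho^\star H(t)$, and combining with the uniform-in-interval comparison $H(t)\leq H(\check{t}_k^p)e^{\varrho^\star(t-\check{t}_k^p)}$ (obtained analogously by differentiating $H$ via \eqref{zmlsbnj} and the same integration-by-parts calculation), a Gronwall argument with $d^2(\check{t}_k^p)=0$ yields \eqref{bbv111d}.

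\textbf{Step 2 (performance residual).} This is a direct adaptation of Lemma \ref{lem:P-PETC-eta}. Differentiating $V^p(t)$ along the closed loop and using the choice $\eta\leq b$ from \eqref{eq:eta=b*} to dominate the $V_1$-contribution exactly as in \eqref{eq:eta=b*-proof}, I obtain $\dot V^p(t)\leq -\eta V^p(t)+cW^p(t)$. Consequently, $\dot W^p(t)=-b^\star e^{-b^\star t}V_0-\dot V^p(t)\geq-(b^\star+c)W^p(t)$ with $b^\star=\eta$. Integrating on $[\check{t}_k^p,t)$ gives \eqref{akpgh}. Iterating across event times (noting $V^p$, hence $W^p$, is continuous at each $\check{t}_k^p$ because $m^p$ is continuous there), together with $W^p(0)=0$, produces \eqref{akpqgh} valid for all $t>0$.

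\textbf{Step 3 (lower bound on $m^p$).} Treating \eqref{eq:ODE-dot-m-p} as a linear ODE in $m^p$ with inhomogeneous forcing and applying variation of constants on $[\check{t}_k^p,t)$, I get
\begin{equation*}
\begin{aligned}
m^p(t) = &\; m^p(\check{t}_k^p)e^{-\eta(t-\check{t}_k^p)}\\
&+\int_{\check{t}_k^p}^t e^{-\eta(t-s)}\bigl(-\theta_m d^2(s)+\kappa_1\|\alpha[s]\|^2\\
&\qquad+\kappa_2\|\beta[s]\|^2+\kappa_3\alpha^2(\ell,s)+cW^p(s)\bigr)ds.
\end{aligned}
\end{equation*}
Dropping the four nonnegative terms (the $\kappa_i$-contributions are manifestly nonnegative, and $cW^p(s)\geq0$ by Step 2) yields a lower bound involving only the $-\theta_m d^2(s)$ term. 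Substituting the Step 1 bound \eqref{bbv111d} gives
\begin{equation*}
m^p(t)\geq m^p(\check{t}_k^p)e^{-\eta(t-\check{t}_k^p)}-\theta_m H(\check{t}_k^p)\int_{\check{t}_k^p}^t e^{-\eta(t-s)}e^{\varrho^\star(s-\check{t}_k^p)}\,ds,
\end{equation*}
and elementary evaluation of the integral produces \eqref{hjbbv211}.

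\textbf{Main obstacle.} The technical crux is Step 1: one must choose the exponential spatial weights and the constants $\varrho,\varrho^\star$ so that after integration by parts in $\dot U(t)$ and the use of $\beta(\ell,t)=r_1 d(t)$, the resulting comparison inequality for $d^2(t)$ closes with rate $\varrho^\star$ and forcing proportional to $H$. The definitions \eqref{dxxxml}--\eqref{eq:tilde-L-21-L22} are reverse-engineered precisely to make this Gronwall argument balance. Steps 2 and 3 are then essentially book-keeping.
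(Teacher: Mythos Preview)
Your Steps 2 and 3 are fine and match the paper: once \eqref{bbv111d} and \eqref{akpqgh} are in hand, the variation-of-constants lower bound on $m^p$ is exactly what the paper does, and the $W^p$ argument is the same as Lemma \ref{lem:P-PETC-eta}.

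Step 1, however, has a genuine gap. When you differentiate $U(t)$ and integrate by parts, the boundary terms are not only $\beta(\ell,t)=r_1 d(t)$: you also pick up $\alpha(\ell,t)$ (see Lemma \ref{lem:P-PETC-d}, where the $\varepsilon_3\alpha^2(\ell,t)$ term appears in $(\dot d)^2$). This pointwise trace cannot be absorbed into $H(t)$, which contains only weighted $L^2$ norms. Your sketch never addresses this term. A second symptom that the route is off: $\varrho$ in \eqref{zzzmlw2e2} is built from $\tilde L^{21},\tilde L^{22}=\int_0^\ell (L^{2i}(\ell,\xi))^2d\xi$, i.e., the kernel values themselves, whereas Cauchy--Schwarz applied to the integrals in $\dot d$ would produce $\int_0^\ell(\partial_\xi L^{2i}(\ell,\xi))^2d\xi$ (the $\varepsilon_1,\varepsilon_2$ of \eqref{al1}--\eqref{al2}). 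Finally, your plan to bound $H(t)$ independently and then feed it into a Gronwall on $d^2$ is circular: $\dot H$ itself contains $+d^2(t)$ (through the $\beta(\ell,t)$ boundary term), so you cannot get $H(t)\le H(\check t_k^p)e^{\varrho^\star(t-\check t_k^p)}$ without already controlling $d^2$.

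The paper avoids all of this by never differentiating $d$. Instead it bounds $d^2(t)$ \emph{algebraically} via Cauchy--Schwarz applied directly to \eqref{eq:d(t)}, obtaining $d^2(t)\le \varrho\bar V(\check t_k^p)+\varrho\bar V(t)$ with $\bar V:=H/(3\varrho)$; this is where $\tilde L^{21},\tilde L^{22}$ enter and why no boundary trace shows up. Substituting into $\dot{\bar V}\le r_0^2r_1^2e^{\mu\ell/(\gamma p^\star-v^\star)}d^2(t)$ (here the $-\alpha^2(\ell,t)$ term has the good sign and is simply dropped) gives a single linear differential inequality $\dot{\bar V}\le\varrho^\star\bar V(\check t_k^p)+\varrho^\star\bar V(t)$, whose solution yields $\bar V(t)\le 2\bar V(\check t_k^p)e^{\varrho^\star(t-\check t_k^p)}$ and hence $d^2(t)\le 3\varrho\bar V(\check t_k^p)e^{\varrho^\star(t-\check t_k^p)}=H(\check t_k^p)e^{\varrho^\star(t-\check t_k^p)}$. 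This also explains the factor $3$ in the definition \eqref{zmlsbnj} of $H$.
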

\begin{proof}[\rm \textbf{Proof}] Let us consider the following positive definite function
\begin{align}\label{dcmlspt}
    \bar{V}(t)\hspace{-3pt}:=& \hspace{-2pt}\int_{0}^{\ell}\hspace{-5pt}\Big(\frac{1}{v^{\star}}\alpha^2(x,t)e^{-\frac{\mu x}{v^{\star}}}+\frac{r_0^2}{(\gamma p^{\star}-v^{\star})}\beta^2(x,t)e^{\frac{\mu x}{(\gamma p^{\star}-v^{\star})}}\Big)dx,
\end{align}
\noindent where $\mu>0$. Taking the time derivative of \eqref{dcmlspt}, integrating by parts, and recalling \eqref{eq:sys-ab-d-bd-cond-b}, we can obtain
%\vaspace{5pt}
\begin{align}\label{dfg1df}
\begin{split}
    \dot{\bar{V}}(t) =& -\mu \bar{V}(t)-e^{-\frac{\mu\ell}{v^\star}}\alpha^2(\ell,t)+r_0^2r_1^2e^{\frac{\mu\ell}{(\gamma p^{\star}-v^{\star})}}d^2(t)\\\leq & r_0^2r_1^2e^{\frac{\mu\ell}{(\gamma p^{\star}-v^{\star})}}d^2(t)
\end{split}
\end{align}
for all $t\in(\check{t}^p_k,\check{t}^p_{k+1}),j\in\mathbb{N}$. Using Young's and Cauchy–Schwarz inequalities on \eqref{eq:d(t)} and recalling \eqref{eq:tilde-L-21-L22} and \eqref{dcmlspt}, we obtain that
\begin{align}
\begin{split}\label{kkfmxlprt}
    &d^2(t) \leq \frac{4\tilde{L}^{21}}{r_1^2}\int_{0}^{\ell}\alpha^2(y,\check{t}^p_k)dy+\frac{4\tilde{L}^{22}}{r_1^2} \int_{0}^{\ell}\beta^2(y,\check{t}^p_k)dy\\&\qquad\quad+\frac{4\tilde{L}^{21}}{r_1^2} \int_{0}^{\ell}\alpha^2(y,t)dy+\frac{4\tilde{L}^{22}}{r_1^2} \int_{0}^{\ell}\beta^2(y,t)dy\\&\leq \frac{4v^\star\tilde{L}^{21}e^{\frac{\mu \ell}{v^\star}}}{r_1^2}  \int_{0}^{\ell}\frac{1}{v^\star}\alpha^2(y,\check{t}^p_k)e^{-\frac{\mu y}{v^\star}}dy\\&\quad+\frac{4 (\gamma p^{\star}-v^{\star}) \tilde{L}^{22}}{r_0^2r_1^2}\int_{0}^{\ell}\frac{r_0^2}{(\gamma p^{\star}-v^{\star})}\beta^2(y,\check{t}^p_k) e^{\frac{\mu y}{(\gamma p^{\star}-v^{\star})}}dy\\ &\quad+\frac{4v^\star\tilde{L}^{21}e^{\frac{\mu\ell}{v^\star}}}{r_1^2}  \int_{0}^{\ell}\frac{1}{v^\star}\alpha^2(y,t)e^{-\frac{\mu y}{v^\star}}dy\\&\quad+\frac{4 (\gamma p^{\star}-v^{\star}) \tilde{L}^{22}}{r_0^2r_1^2}\int_{0}^{\ell}\frac{r_0^2}{(\gamma p^{\star}-v^{\star})}\beta^2(y,t) e^{\frac{\mu y}{(\gamma p^{\star}-v^{\star})}}dy\\&\leq \varrho \bar{V}(\check{t}^p_k)+\varrho \bar{V}(t),
\end{split}
\end{align}
for all $t\in(\check{t}^p_k,\check{t}^p_{k+1}),j\in\mathbb{N},$ where $\varrho$ is given by \eqref{zzzmlw2e2}. Thus, it follows from \eqref{dfg1df} that
\begin{align}
    \dot{\bar{V}}(t) \leq \varrho^{\star} \bar{V}(\check{t}^p_k)+\varrho^\star \bar{V}(t),
\end{align}
for all $t\in(\check{t}^p_k,\check{t}^p_{k+1}),j\in\mathbb{N}$, where $\varrho^\star$ is given by \eqref{dxxxml}, from which we obtain
\begin{align}
\begin{split}
    \bar{V}(t)&\leq 2\bar{V}(\check{t}_k^p)e^{\varrho^\star(t-\check{t}^p_k)}-\bar{V}(\check{t}^p_k)\\&\leq 2\bar{V}(\check{t}^p_k)e^{\varrho^\star(t-\check{t}^p_k)},
\end{split}
\end{align}
for all $t\in [\check{t}^p_k,\check{t}^p_{k+1}],j\in\mathbb{N}$. Therefore, considering \eqref{kkfmxlprt}, we obtain
\begin{align}
    d^2(t)&\leq \varrho V(\check{t}^p_k)+2\varrho V(\check{t}^p_k)e^{\varrho^\star (t-\check{t}^p_k)}\\&\leq 3\varrho V(\check{t}^p_k)e^{\varrho^\star (t-\check{t}^p_k)},
\end{align}
which leads to \eqref{bbv111d}. Similar to Lemma \ref{lem:P-PETC-eta}, we can show that $W^p(t)\geq0, i.e., e^{-b^\star t} V_0^p \geq V^p(t)$ for any $t>0$ with $\eta=b^\star$ under the P-STC approach \eqref{eq:U-P-STC}-\eqref{eq:tau_check}. Thus, Considering the dynamics of $m^p(t)$ given by \eqref{eq:ODE-dot-m-p} and the relation \eqref{bbv111d}, we can show
\begin{equation}
\dot{m}^p(t)\geq -\eta m^p(t)-\theta_m H(\check{t}_k^p)e^{\varrho^\star(t-\check{t}_k^p)},
\end{equation}
for $t \in\left(\check{t}_k^p, \check{t}_{k+1}^p\right), k \in \mathbb{N}$ from which we can obtain \eqref{hjbbv211} using the Comparison principle.
\end{proof}

\begin{thm}[Results under P-STC]
    \label{thm:P-STC}
    % Consider the P-STC approach \eqref{eq:U-P-STC}-\eqref{eq:P-STC-trigger-func} with appropriate choices for the event-trigger parameters under Assumption \ref{asm:R-CETC-param}, $c>0$, and $\eta>0$ chosen as in \eqref{eq:eta=b*}, which generates a set of event-times $\check I^p= \left\{\check t_0^p, \check t_1^p, \check t_2^p, \ldots\right\}$ with $\check t_0^p = 0$.
    Let $\check I^p= \left\{\check t_0^p, \check t_1^p, \check t_2^p, \ldots\right\}$ with $\check t_0^p = 0$ be the set of increasing event-times generated by the P-STC approach \eqref{eq:U-P-STC}-\eqref{eq:tau_check} with appropriate choices for the event-trigger parameters under Assumption \ref{asm:R-CETC-param}, $c>0$, and $\eta>0$ chosen as in \eqref{eq:eta=b*}. 
    Then, the following results hold:
    \begin{enumerate}
        \item[R1:] For every $\left(\bar w (\cdot, 0), \bar v (\cdot, 0)\right)^T \in L^2\left((0,\ell); \mathbb{R}^2\right)$, there exists a unique solution $(\bar w, \bar v)^T \in \mathcal{C}^0\left(\mathbb{R}_+ ; L^2\left((0,\ell) ; \mathbb{R}^2\right)\right)$ to the P-STC closed-loop system \eqref{eq:sys-wv-wbar}-\eqref{eq:sys-wv-bd-cond-w},\eqref{cxx}-\eqref{cndtns},\eqref{eq:U-P-STC}-\eqref{eq:tau_check} for all $t>0$.

        \item[R2:] The function $\Gamma^p(t)$ given by \eqref{eq:P-CETC-trigger-func} satisfies $\Gamma^p(t) \leq 0$
        % along the solution of \eqref{eq:sys-wv-wbar}-\eqref{eq:sys-wv-bd-cond-w}, \eqref{eq:d(t)}, \eqref{eq:ODE-dot-m-p}, \eqref{eq:U-P-STC}-\eqref{eq:P-STC-trigger-func}
        for all $t>0$, along the P-STC closed-loop solution. 

        \item[R3:] The dynamic variable $m^p(t)$ governed by \eqref{eq:ODE-dot-m-p} with $m^p(0)=m^r(0)>0$ satisfies $m^p(t)>0$
        % along the solution of \eqref{eq:sys-wv-wbar}-\eqref{eq:sys-wv-bd-cond-w}, \eqref{eq:d(t)}, \eqref{eq:ODE-dot-m-p}, \eqref{eq:U-P-STC}-\eqref{eq:P-STC-trigger-func}
        for all $t > 0$, along the P-STC closed-loop solution. 

        \item[R4:] The Lyapunov candidate $V^p(t)$ given by \eqref{eq:Lyap-V-p} satisfies \eqref{zzzbnmkfb} for all $t\in (\check{t}_k^p,\check{t}_{k+1}^p),k\in\mathbb{N}$ and \eqref{eq:Lyap-V-p<ebt} for all $t>0$ with  $\eta = b^\star$, along the P-STC closed-loop solution.

        \item[R5:] The closed-loop signal $\Vert\bar{w}[t]\Vert+\Vert\bar{v}[t]\Vert$ associated with the P-STC closed-loop system \eqref{eq:sys-wv-wbar}-\eqref{eq:sys-wv-bd-cond-w},\eqref{cxx}-\eqref{cndtns},\eqref{eq:U-P-STC}-\eqref{eq:tau_check}, exponentially converges to zero.
    \end{enumerate}
\end{thm}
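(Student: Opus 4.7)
The plan is to follow the templates of Theorem \ref{thm:P-CETC} and Theorem \ref{thm:P-PETC}, the essential new ingredient being that the event times are \emph{predicted} at the previous event from the state-dependent bounds of Lemma \ref{lem:P-STC-dvm} rather than detected in real time. R1 is immediate from Proposition \ref{prop:wellpose} and Remark \ref{rem:wellpose}, iterated over the successive intervals $[\check t_k^p,\check t_{k+1}^p]$; the uniform lower bound $G^p\geq\tau_d>0$ built into \eqref{eq:P-STC-trigger-func} rules out Zeno behaviour, so the iterative construction covers all of $\mathbb{R}_+$.

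For R2 and R3, I would argue by induction on $k$ under the hypothesis $m^p(\check t_k^p)>0$ and $W^p(\check t_k^p)\geq 0$. The refresh $d(\check t_k^p)=0$ yields $\Gamma^p(\check t_k^p)=-\theta m^p(\check t_k^p)-\frac{c}{\theta_m}W^p(\check t_k^p)<0$. To propagate $\Gamma^p\leq 0$ across the whole $[\check t_k^p,\check t_{k+1}^p)$, I would insert the three estimates of Lemma \ref{lem:P-STC-dvm}---namely $d^2(t)\leq H(\check t_k^p) e^{\varrho^\star\tau}$, the lower bound \eqref{hjbbv211} on $m^p(t)$, and $W^p(t)\geq e^{-(b^\star+c)\tau}W^p(\check t_k^p)$ with $\tau:=t-\check t_k^p$ and $b^\star=\eta$---into the definition \eqref{eq:P-CETC-trigger-func}. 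After multiplication by $e^{\eta\tau}$, rearrangement, and the replacement of $e^{c\tau}$ by its crude lower bound $1$, the sufficient condition collapses to
\begin{equation*}
\Big(1+\tfrac{\theta\theta_m}{\varrho^\star+\eta}\Big) H(\check t_k^p)\, e^{(\varrho^\star+\eta+c)\tau} \leq \theta m^p(\check t_k^p) + \tfrac{\theta\theta_m H(\check t_k^p)}{\varrho^\star+\eta} + \tfrac{c}{\theta_m} W^p(\check t_k^p),
\end{equation*}
which is precisely $\tau\leq \check\tau(\check t_k^p)$ from \eqref{eq:tau_check}. When $G^p=\check\tau$ the interval is thereby handled; when instead $G^p=\tau_d>\check\tau$, the universal growth estimate for $d^2$ after the reset $d(\check t_k^p)=0$---the same estimate that produced \eqref{eq:tau_d}-\eqref{eq:eps-2} in the minimum-dwell-time argument of Theorem \ref{thm:P-CETC}, and which is a statement about the closed-loop dynamics, not about the triggering mechanism---delivers $\Gamma^p(t)\leq 0$ on $[\check t_k^p,\check t_k^p+\tau_d]$ directly. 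With $\Gamma^p\leq 0$ secured on the whole interval, the ODE \eqref{eq:ODE-dot-m-p}, the sign $W^p\geq 0$, and the comparison argument of Lemma \ref{lem:P-CETC-m} give $m^p(t)>0$ on $[\check t_k^p,\check t_{k+1}^p]$, closing the induction.

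R4 is then obtained by replaying the computation \eqref{eq:Lyap-V-p-dot}-\eqref{eq:W-p-proof} from the P-CETC proof: positive definiteness of $V^p$ (from $m^p>0$), the parameter choices of Assumption \ref{asm:R-CETC-param} together with $\eta\leq b$, and $W^p(0)=0$ yield both $\dot V^p\leq -b^\star V^p+cW^p$ on each $(\check t_k^p,\check t_{k+1}^p)$ and $W^p(t)\geq 0$ globally, which is exactly \eqref{eq:Lyap-V-p<ebt} with $b^\star=\eta$. R5 follows from R4 and the bounded invertibility of the backstepping transformations \eqref{eq:K1}-\eqref{eq:K2},\eqref{eq:L1}-\eqref{eq:L2}, as in the P-CETC case. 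The main obstacle is the induction step for R2: one must verify that the closed-form $\check\tau$ in \eqref{eq:tau_check}---which packages the three distinct exponential rates $\varrho^\star$, $\eta$, and $b^\star+c$ coming from the three Lemma \ref{lem:P-STC-dvm} bounds---really does enforce $\Gamma^p\leq 0$ once substituted back, and that the simplification $e^{c\tau}\geq 1$ does not erode the $\tau_d$ dwell floor. The $\max$ in \eqref{eq:P-STC-trigger-func} is the device that absorbs the regime where the Lemma bounds are too pessimistic to beat $\tau_d$, and gluing the two regimes into a single induction step is the central bookkeeping task.
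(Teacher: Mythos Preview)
Your proposal is correct and follows essentially the same approach as the paper's own proof: R1 via Proposition~\ref{prop:wellpose} and the built-in dwell floor $G^p\geq\tau_d$; R2--R3 by inductively showing that the Lemma~\ref{lem:P-STC-dvm} bounds substituted into $\Gamma^p$ yield $\Gamma^p\leq 0$ on $[\check t_k^p,\check t_k^p+\check\tau(\check t_k^p)]$, with the $\tau_d$ branch covered by the trigger-independent minimal-dwell argument of Theorem~\ref{thm:P-CETC}; and R4--R5 by replaying the P-CETC Lyapunov computation. Your algebraic route to $\check\tau$ (multiply by $e^{\eta\tau}$, then by $e^{c\tau}$, then use $e^{c\tau}\geq 1$) is equivalent to the paper's direct weakening of the $m^p$ lower bound from $e^{-\eta\tau}$ to $e^{-(\eta+c)\tau}$, and both land on the same closed form \eqref{eq:tau_check}.
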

\begin{proof}[\rm \textbf{Proof}]
    R1 follows from Proposition \ref{prop:wellpose} and Remark \ref{rem:wellpose}. Lemma \ref{lem:P-STC-dvm} ensures that $W^p(t)  \geq 0$ for all $t > 0$ under the chosen parameters listed in Assumption \ref{asm:R-CETC-param}, $c>0$ and $\eta = b^\star$. Assume that an event has triggered at $t=\check {t}_k^p$ and $m^p\left(\check{t}_k^p\right)>0$. 
    Then, let us analyze the behavior of $\Gamma^p(t)$ in $t \in\left[\check{t}_k^p, \check{t}_{k+1}^p\right)$ along the solution of \eqref{eq:sys-wv-wbar}-\eqref{eq:sys-wv-bd-cond-w},\eqref{cxx}-\eqref{cndtns},\eqref{eq:U-P-STC}-\eqref{eq:tau_check}. After the event at $t=\check{t}_k^p$, the control law is updated, and $d(\check{t}^p_k)=0$. Then we have from \eqref{eq:P-CETC-trigger-func} that 
    \begin{equation}
        \Gamma^p\left(\check{t}_k^p\right)= -\theta m\left(\check{t}_k^p\right)- \frac{c}{\theta_m} W^p(\check{t}_k^p)<0.
        % \left(e^{-b^\star \check{t}_k^p} V_0^p-V^p\left(\check{t}_k^p\right)\right)<0.
    \end{equation}
    Consequently, $\Gamma^p(t)$ will definitely remain non-positive until $t=\check{t}_k^p+\tau_d$, where $\tau_d$ is the R/P-CETC minimal dwell-time given by \eqref{eq:tau_d}. Further, recalling \eqref{akpgh} and \eqref{hjbbv211}, we can obtain
    \begin{align}
    \begin{split}
    & \theta m^p(t) + \frac{c}{\theta_m} W^p(t) \\
    &\geq \theta m^p(\check{t}_k^p)e^{-(\eta+c)(t-\check{t}_k^p)}+\frac{c}{\theta_m}e^{-(\eta+c)(t-\check{t}_k^p)}W^p(\check{t}_k^p)\\&\quad -\frac{\theta\theta_m H(\check{t}_k^p)}{\varrho^\star+\eta}e^{\varrho^\star(t-\check{t}_k^p)}+\frac{\theta\theta_m H(\check{t}_k^p)}{\varrho^\star+\eta}e^{-(\eta+c)(t-\check{t}_k^p)}, \\
    &:=  F(t-\check{t}_k^p) \label{eq:P-STC-bd-mp-2},
    \end{split}
    \end{align}
    % \begin{equation}
    % \label{eq:P-STC-bd-mp-2}
    % \begin{aligned}
    % & m^p(t) - \theta c W^p(t) \\
    % &\leq  
    %  m^p(\check{t}_k^p) e^{-\eta\left(t-\check{t}_k^p\right)} 
    % \!+\!\check H \!\!\left(\check{t}_k^p\right) \! e^{-\eta\left(t-\check{t}_k^p\right)}\!\!\left(\!e^{(\omega +\eta)\!(t-\check{t}_k^p)}\!-\!1\!\right) \\
    % & \quad- \theta c  e^{-\left(\eta+c\right)(t-\check{t}_k^p)} W^p(t-\check{t}_k^p) \\
    % % \leq
    % % & m^p(\check{t}_k^p) e^{-(\eta+c)\left(t-\check{t}_k^p\right)} \\
    % % & +\check H\left(\check{t}_k^p\right) e^{\omega \left(t-\check{t}_k^p\right)}
    % % -\check H\left(\check{t}_k^p\right) e^{-(\eta+c)\left(t-\check{t}_k^p\right)}\\
    % % & \textcolor{cyan}{- \theta c  e^{-\left(\eta+c\right)(t-\check{t}_k^p)} W^p(t-\check{t}_k^p)} \\
    % &:=  F(t-\check{t}_k^p),
    % \end{aligned}
    % \end{equation}
    for $t \in\left[\check{t}_k^p, \check{t}_{k+1}^p\right), k \in \mathbb{N}$. Suppose there exists a positive solution $t^{\dagger}>\check{t}_k^p$ such that 
    \begin{equation}
    \label{eq:P-STC-bd-mp-3}
        \begin{aligned}
 H(\check{t}^p_k)e^{\varrho^\star(t^\dagger-\check{t}^p_k)} =  F(t^\dagger-\check{t}_k^p).
    \end{aligned}
    \end{equation}
    From \eqref{bbv111d}, we know the L.H.S of \eqref{eq:P-STC-bd-mp-3} is an increasing upper-bound for $d^2(t)$, and from \eqref{eq:P-STC-bd-mp-2}, the R.H.S of \eqref{eq:P-STC-bd-mp-3} is a decreasing lower-bound for $\theta m^p(t) +  \frac{c}{\theta_m} W^p(t)$.
    Consequently, we can be certain that $ d^2(t)\leq \theta m^p(t) +  \frac{c}{\theta_m} W^p(t)$, i.e., $\Gamma^p(t) \leq 0$ for $t \in\left[\check{t}_k^p, t^{\dagger}\right)$.
    The solution of \eqref{eq:P-STC-bd-mp-3} is 
    \begin{equation}
    t^{\dagger}=\check{t}_k^p+\check \tau (\check{t}_k^p),
    \end{equation}
    where $\check \tau (\check{t}_k^p)$ is given by \eqref{eq:tau_check}. If $t^{\dagger}>\check{t}_k^p+\tau_d$, the next event can be chosen as $\check{t}_{k+1}^p=t^{\dagger}$.
    If $t^{\dagger} \leq \check{t}_k^p+\tau_d$, the next event can be chosen as $\check{t}_{k+1}^p=$ $\check{t}_k^p+\tau_d$. In this way, from \eqref{eq:P-STC-trigger-t}-\eqref{eq:tau_check}, it is ensured that $\Gamma^p(t) \leq 0$ for $t \in\left[\check{t}_k^p, \check{t}_{k+1}^p\right)$, as stated in R2. Applying the similar analysis in the proof of Theorem \ref{thm:P-PETC}, we can obtain $m^p(t)>0$ for all $t>0$ as stated in R3, which implies the positive definiteness of $V^p(t)$ and leads to results stated in R4 and R5.
\end{proof}

\section{Simulation}
\label{sec:sim}

\subsection{Simulation Setup}
We consider the ARZ model with $\gamma=1$, $c_0=0.396$, and the steady-state in congested regime is $(\rho^\star, v^\star)= ($120 vehicles/km, 36 km/h$)$. We use sinusoid initial conditions given by $\rho(x, 0)=0.1 \sin \left({3 \pi x}/{\ell}\right) \rho^{\star}+\rho^{\star}$ and $v(x, 0)=-0.1 \sin \left({3 \pi x}/{\ell}\right) v^{\star}+v^{\star}$. The length of the freeway section is $\ell = 1$ km. The free speed is $v_f = 144$ km/h and the maximum density is $\rho_m = 160$ vehicles/km. The relaxation time is $\tau=2$ minutes. We perform the simulation on a time horizon of 60 minutes.

The parameters for the triggering mechanisms are chosen as follows: The parameters \(\theta\) and \(\sigma\) are set to \(\theta = 1\) and \(\sigma = 0.9\), respectively. The parameters \(\kappa_i\) are calculated from \eqref{betas} as \(\kappa_1 = 280.76\), \(\kappa_2 = 807.29\), and \(\kappa_3 = 2416.5\). The parameter \(\mu\) is set to \(\mu = 11.5\), and the parameter \(C\) is chosen as \(C = 8897.4\) to satisfy \eqref{CC}. The parameter \(\theta_m\), calculated from \eqref{vvbnml}, is \(\theta_m = 1.8705 \times 10^6\). The parameter \(\eta\) is chosen as \(\eta = 1.293\), ensuring that \(\eta = b\), where \(b\) is given by \eqref{bbcfgj}. The initial condition of $m(t)$ is chosen as $m(0)=0.1$. The MDT $\tau_d$ computed from \eqref{eq:tau_d} is \(4.7305 \times 10^{-6}\) hours. Thus, we use \(\Delta t = 4 \times 10^{-6}\) hours to time-discretize the plant and observer dynamics. Following \eqref{eq:h-def}, we also select \(4 \times 10^{-6}\) hours as the sampling period for the PETC approach. Space discretization is performed using a step size of \(\Delta x = 0.005\) km.

\subsection{Comparison of System Behavior}

In Fig. \ref{fig:cmp-all}, we compare R-ETC with P-ETC in terms of the behavior of the Lyapunov function, the control updates, and the dwell times. We set $c=10$ for the P-ETC approaches. It can be observed that the Lyapunov functions for R-ETC are monotonically decreasing. Conversely, the Lyapunov functions for P-ETC approaches sometimes increase, illustrating the flexibility of this approach. Notably, even though the Lyapunov functions under P-ETC approaches converge to zero slower than their regular counterparts, they remain below the performance barrier at all times, thereby meeting the nominal performance. Due to the flexibility of P-ETC Lyapunov functions, control updates under P-CETC and P-PETC are sparser than their regular counterparts. Meanwhile, P-STC slightly outperforms R-STC in terms of update sparsity.

We present the evolution of the system's density, \( \rho(x,t) \), and velocity, \( v(x,t) \), under open-loop, R-CETC, and P-CETC configurations in Fig.~\ref{fig:cmp-rhov}.  
The open-loop system exhibits unstable density-velocity oscillations, as shown in Fig.~\ref{fig:openloop-rho2} and Fig.~\ref{fig:openloop-v2}, indicating that vehicles enter acceleration-deceleration cycles influenced by stop-and-go waves. R-CETC effectively suppresses these oscillations by applying VSL boundary control at the end of the road segment. In contrast, P-CETC sacrifices some suppression of oscillations in favor of sparser triggering, as it deviates from the strict decrease of the Lyapunov function.

Both R-ETC and P-ETC approaches, derived using the linearized ARZ model, are susceptible to Zeno behavior when applied to the nonlinear ARZ model because the nonlinear dynamics were not considered in the exclusion of Zeno behavior. Furthermore, P-ETC approaches (P-CETC, P-PETC, and P-STC) applied to the nonlinear ARZ model is prone to performance barrier violations since nonlinear components are not accounted for in the definition of the performance barrier, \( e^{-b^\star t} V_0^p \), the performance residual, \( W^p(t) := e^{-b^\star t} V_0^p - V^p(t) \), or in the Lyapunov function $V^p(t)$.

\begin{figure}[thbp]
   \centering
    \begin{subfigure}{0.46\textwidth}
        \centering
        {\includegraphics[width=1.1\linewidth]{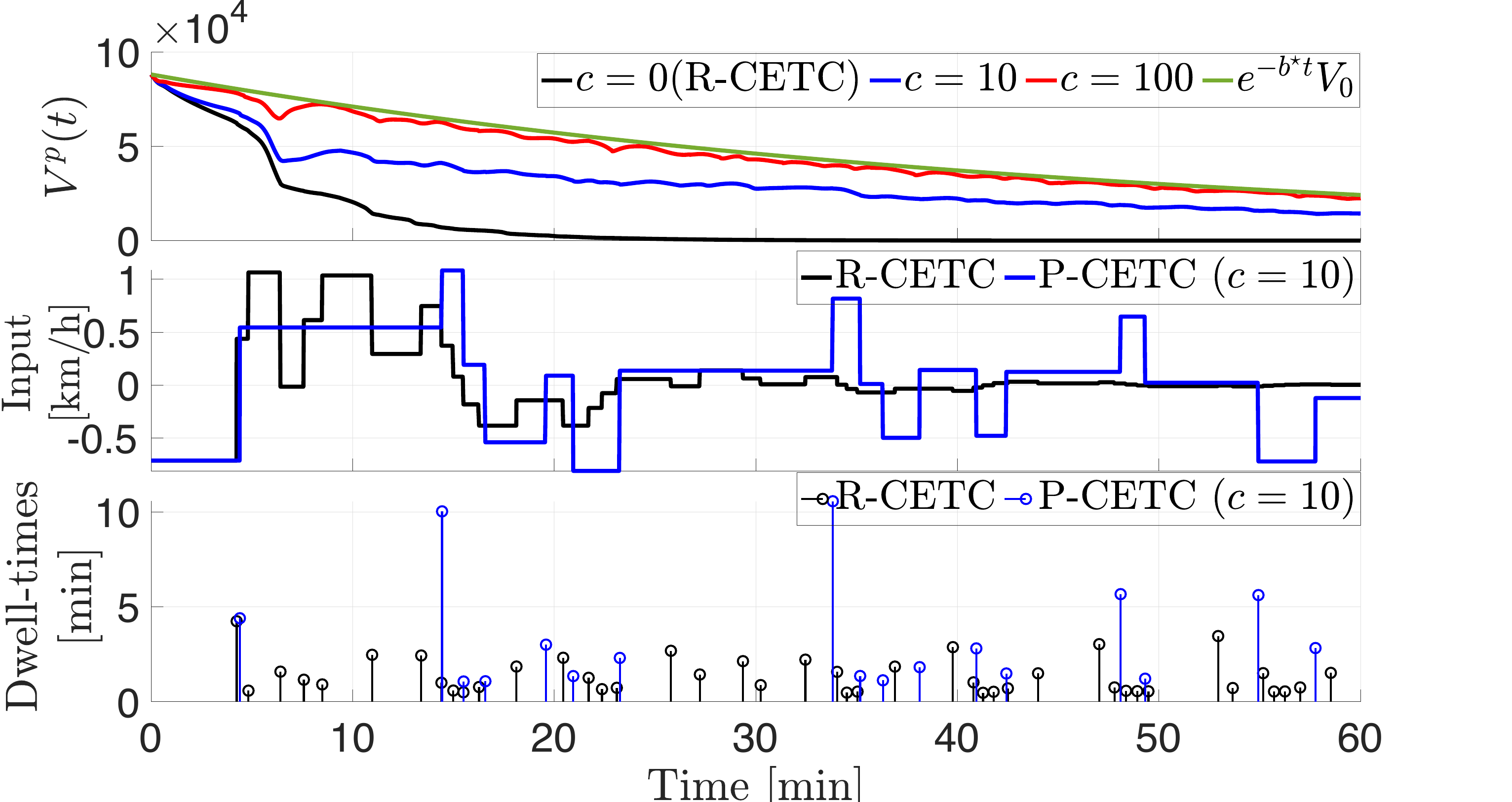}
        }
        %\vspace*{-15pt}
        \subcaption{Results under R/P-CETC}
        \label{fig:cmp-CETC}
    \end{subfigure}

    \begin{subfigure}{0.46\textwidth}
        \centering
        {\includegraphics[width=1.1\linewidth]{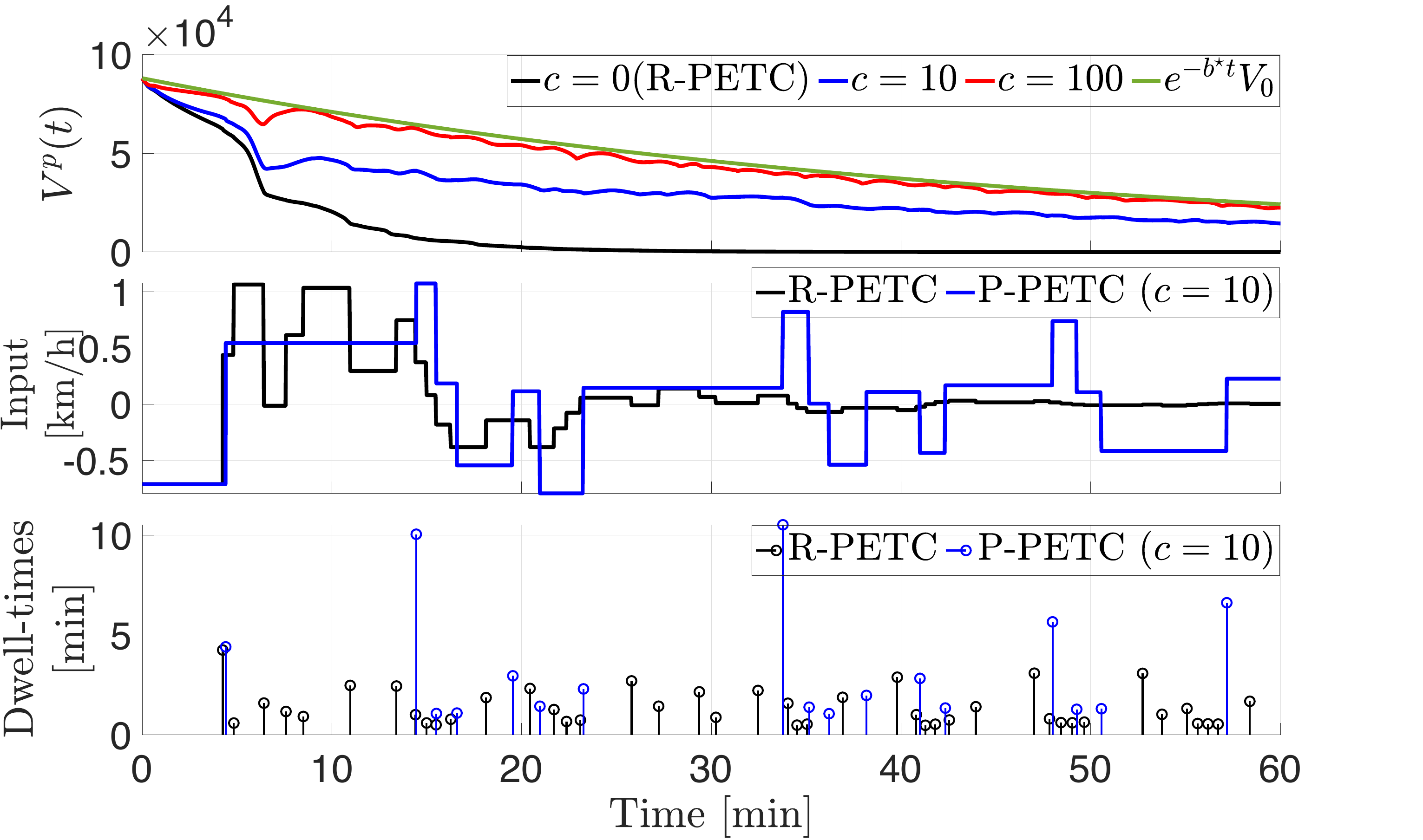}
        }
        %\vspace*{-15pt}
        \subcaption{Results under R/P-PETC}
        \label{fig:cmp-PETC}
    \end{subfigure}
    
    \begin{subfigure}{0.46\textwidth}
        \centering
        {\includegraphics[width=1.1\linewidth]{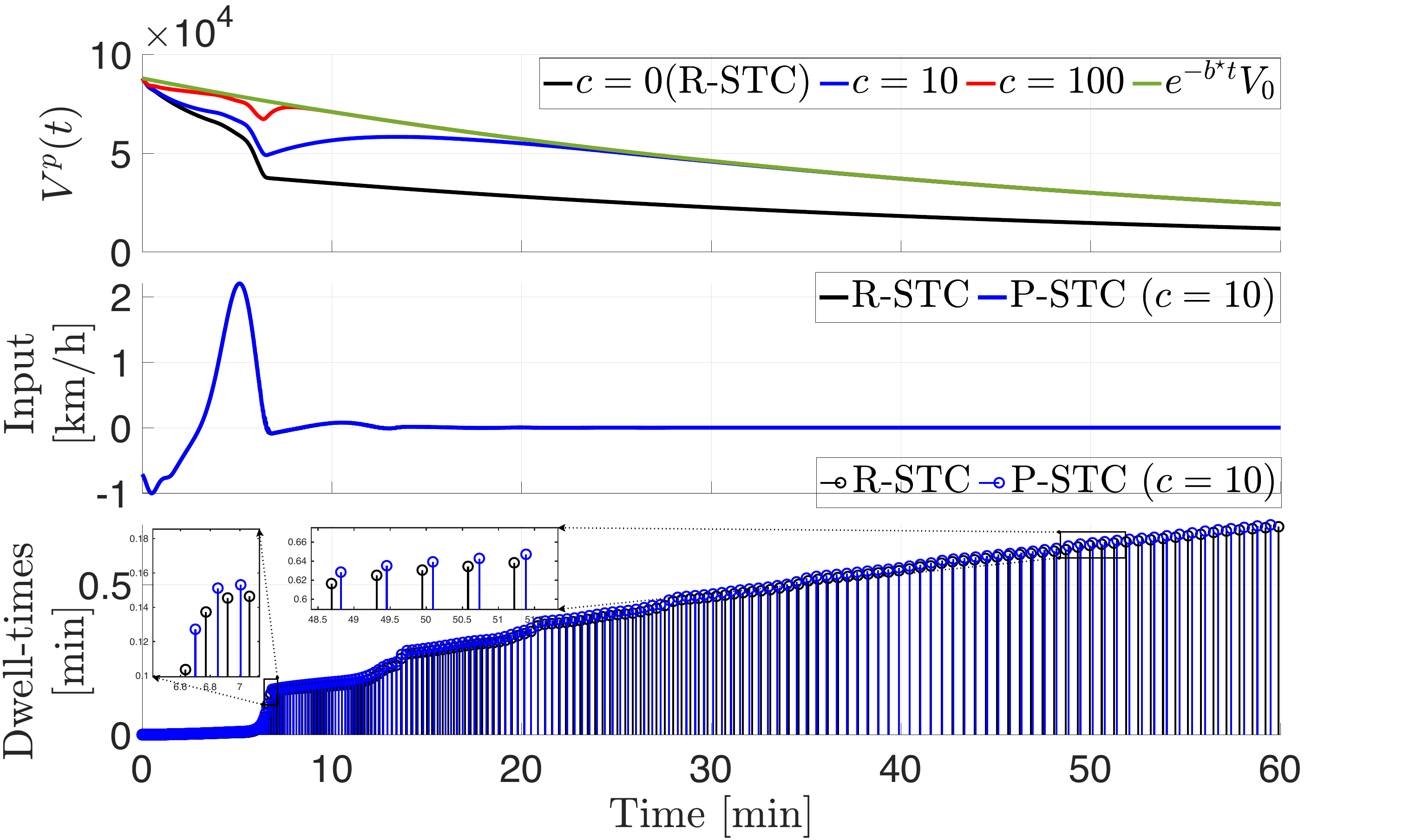}
        }
        %\vspace*{-15pt}
        \subcaption{Results under R/P-STC}
        \label{fig:cmp-STC}
    \end{subfigure}
    \caption{Comparison of the Lyapunov function, control update and dwell-times under the R/P-CETC, R/P-PETC and  R/P-STC.}
    \label{fig:cmp-all}
\end{figure}

\begin{figure}[htbp]
\centering
\begin{subfigure}{0.24\textwidth}
    \centering
    {\includegraphics[width=1.05\linewidth]{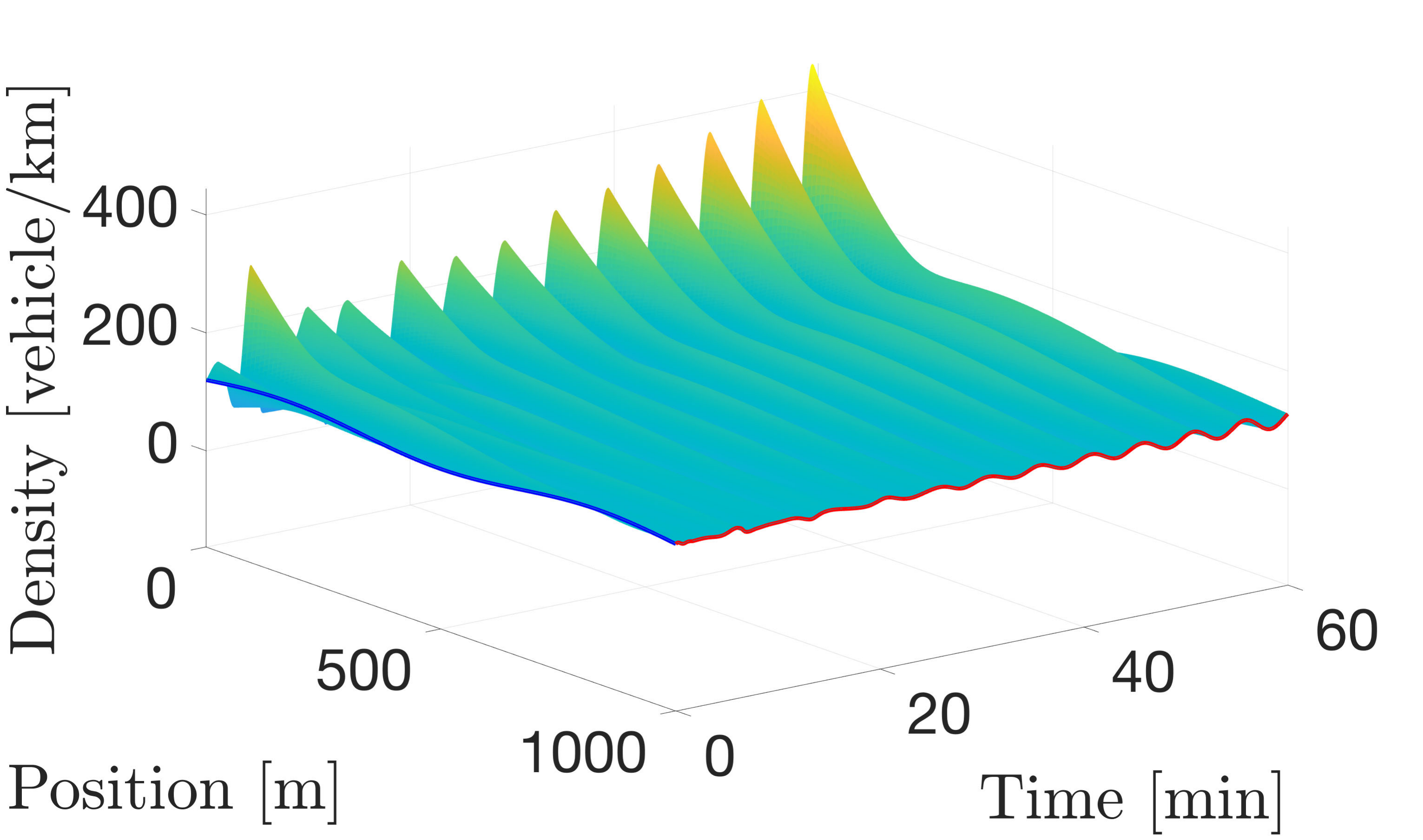}
    }
    %\vspace*{-15pt}
    \subcaption{$\rho(x,t)$ of open-loop}
    \label{fig:openloop-rho2}
\end{subfigure}
\begin{subfigure}{0.24\textwidth}
    \centering
    {\includegraphics[width=1.05\linewidth]{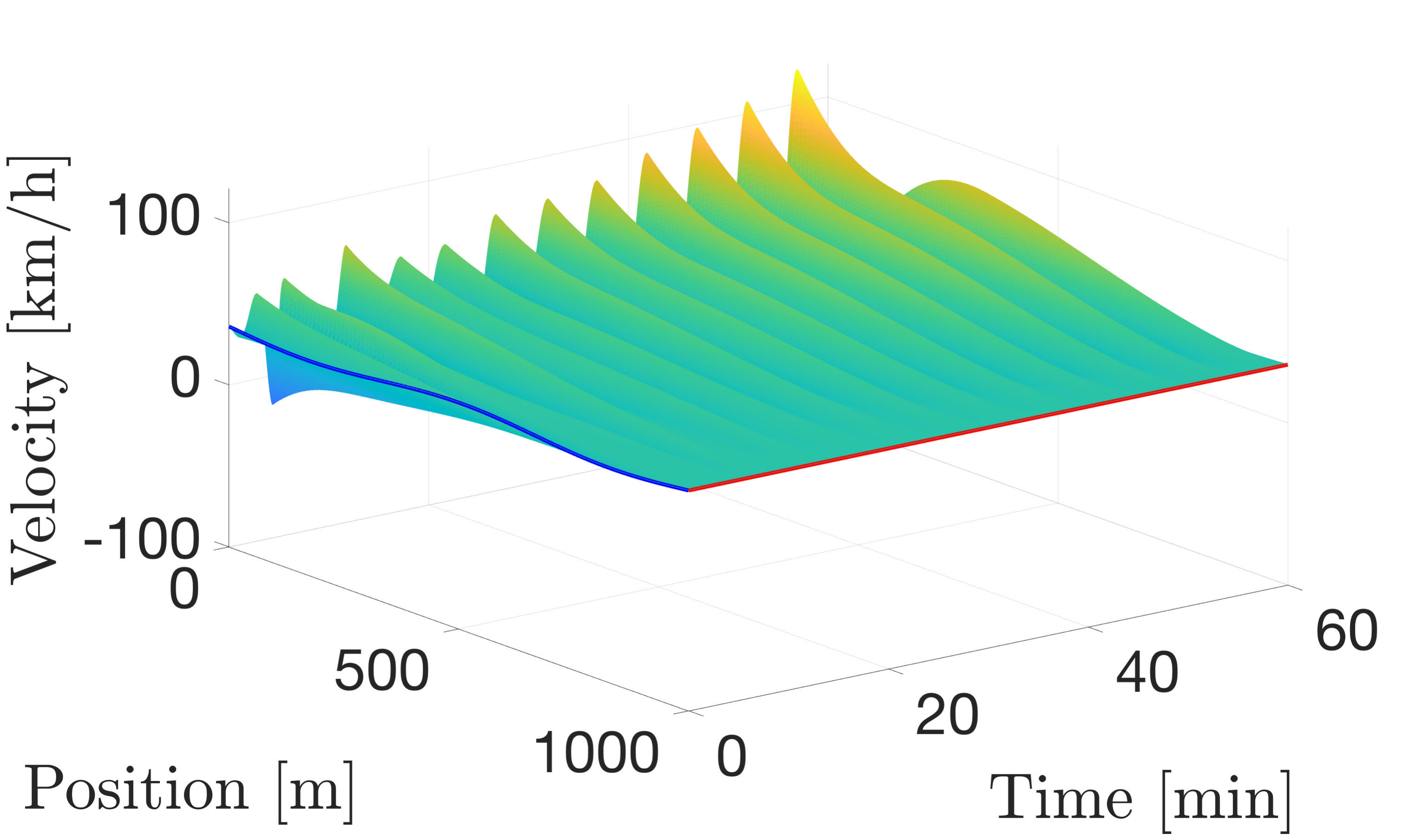}
    }
    %\vspace*{-15pt}
    \subcaption{$v(x,t)$ of open-loop}
    \label{fig:openloop-v2}
\end{subfigure}

\begin{subfigure}{0.24\textwidth}
    \centering
    {\includegraphics[width=1.05\linewidth]{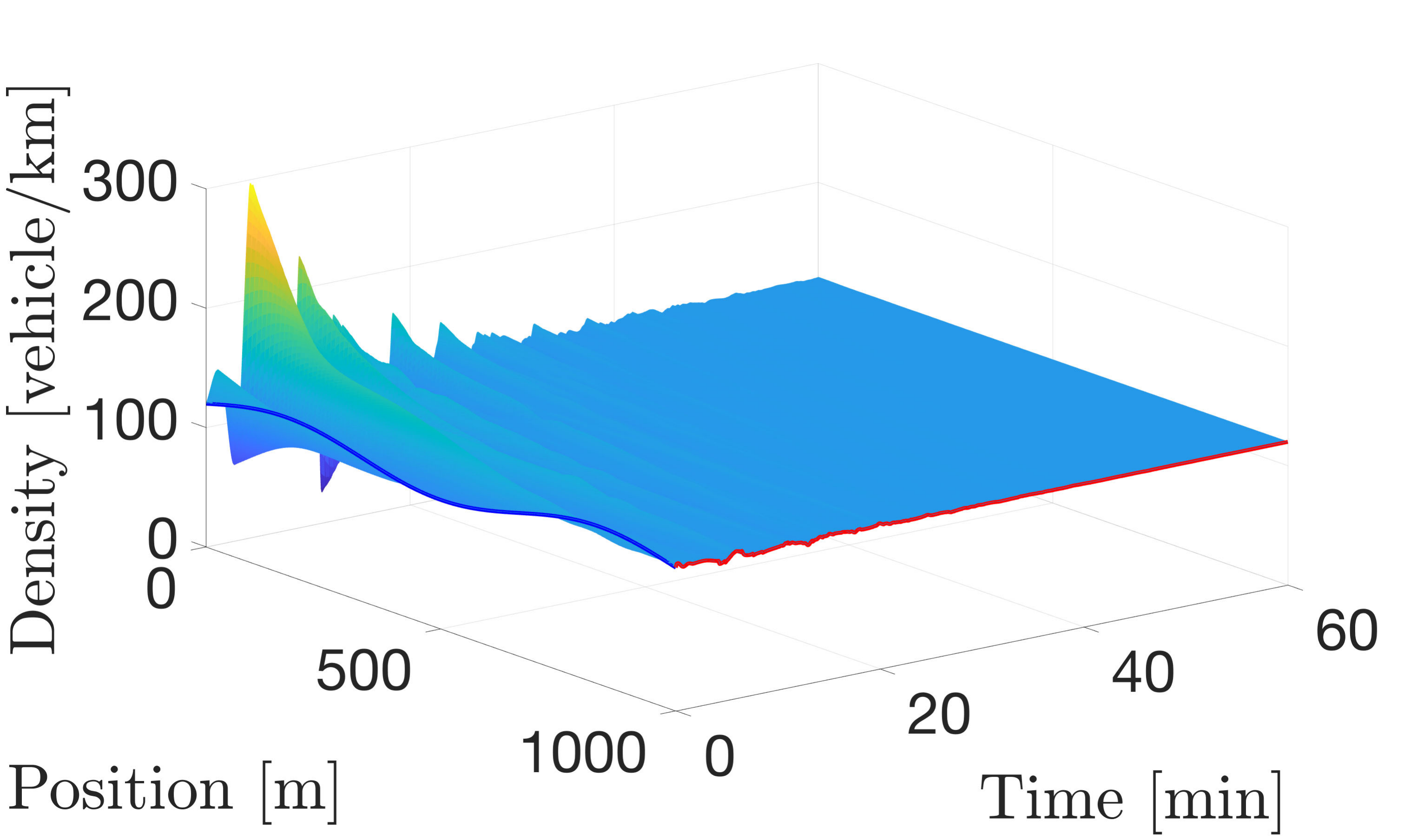}
    }
    %\vspace*{-15pt}
    \subcaption{$\rho(x,t)$ of R-CETC}
    \label{fig:R-PETC-rho2}
\end{subfigure}
\begin{subfigure}{0.24\textwidth}
    \centering
    {\includegraphics[width=1.05\linewidth]{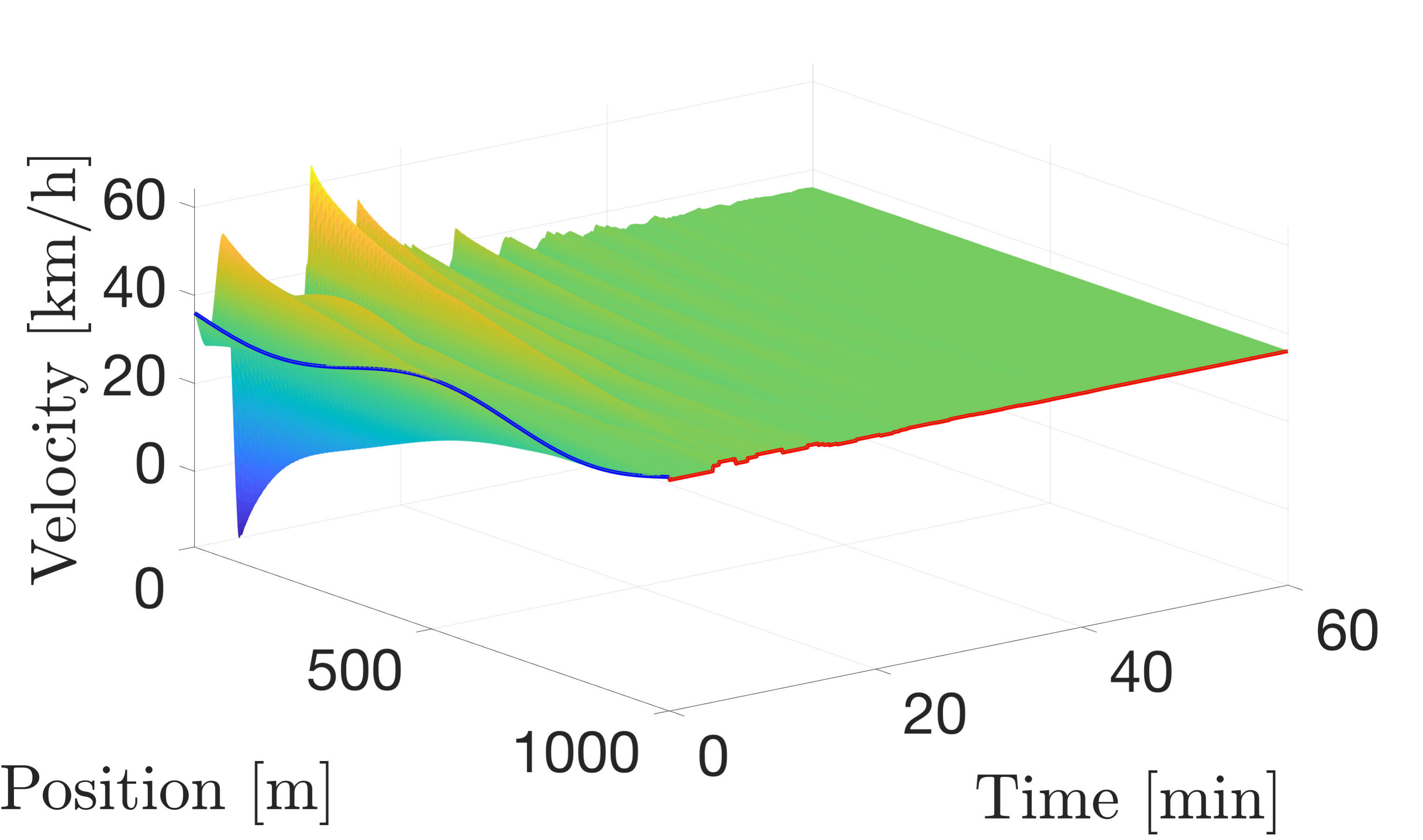}
    }
    %\vspace*{-15pt}
    \subcaption{$v(x,t)$ of R-CETC}
    \label{fig:R-PETC-v2}
\end{subfigure}

\begin{subfigure}{0.24\textwidth}
    \centering
    {\includegraphics[width=1.05\linewidth]{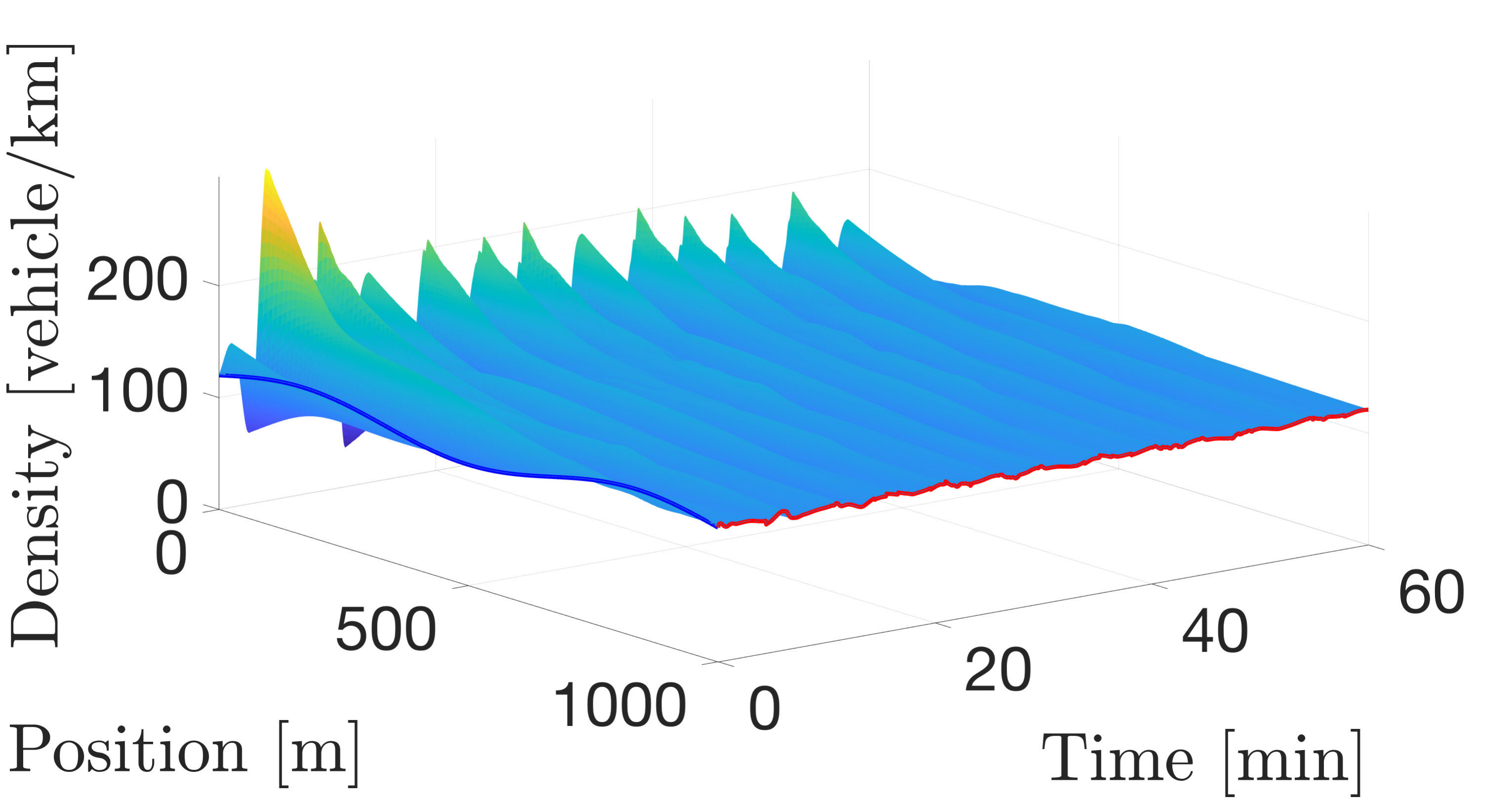}
    }
    %\vspace*{-15pt}
    \subcaption{$\rho(x,t)$ of P-CETC}
    \label{fig:P-PETC-rho2}
\end{subfigure}
\begin{subfigure}{0.24\textwidth}
    \centering
    {\includegraphics[width=1.05\linewidth]{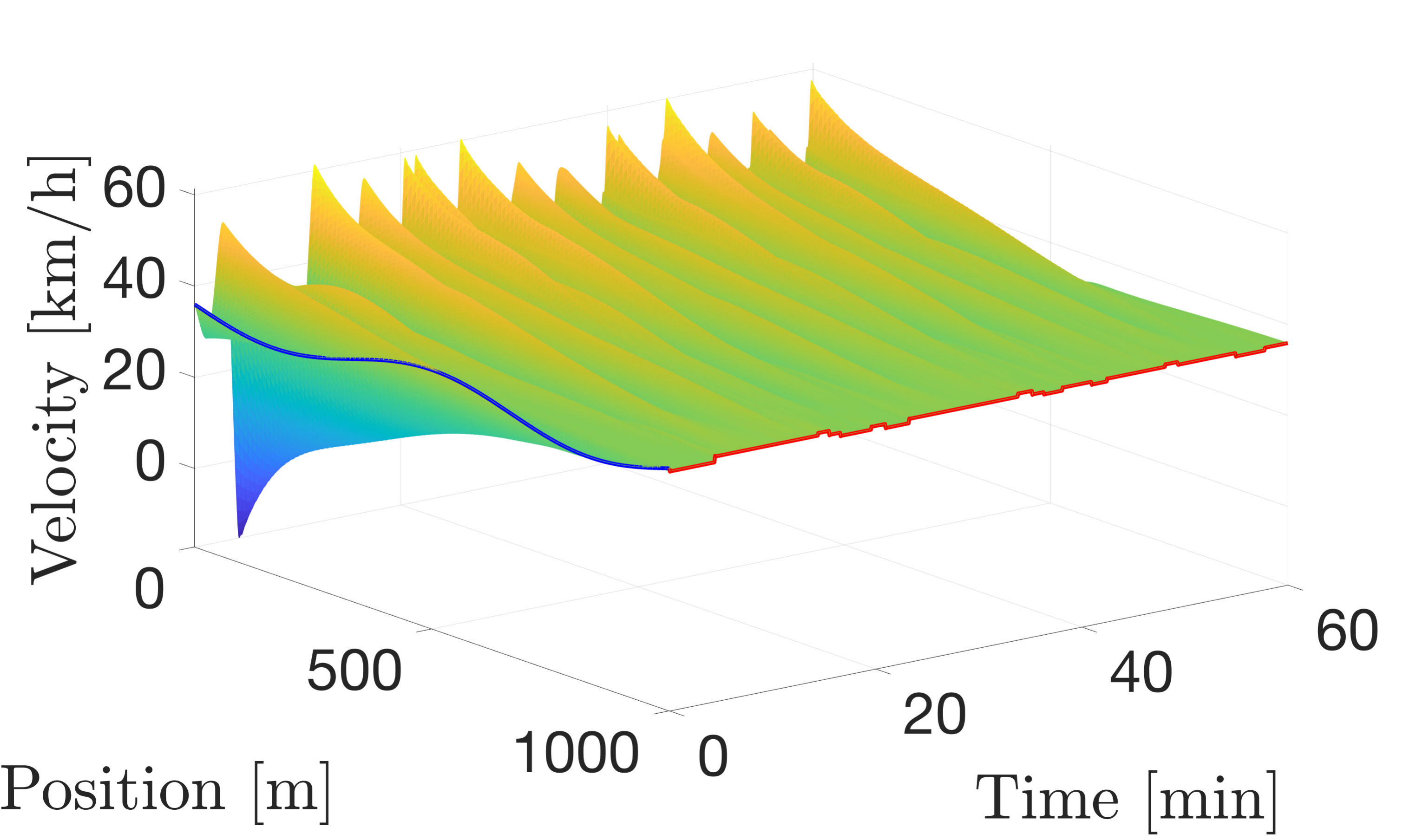}
    }
    %\vspace*{-15pt}
    \subcaption{$v(x,t)$ of P-CETC}
    \label{fig:P-PETC-v2}
\end{subfigure}

\caption{Comparison of $\rho(x,t)$ and $v(x,t)$.}
\label{fig:cmp-rhov}
\end{figure}

\subsection{Comparison of System Performance}

In TABLE \ref{tab:all-cmp}, we provide a quantitative comparison of the total triggering number $N_t$, the average dwell-time $\bar {\Delta t_k} $ in minutes and three traffic performance metrics of the different approaches at different $c$ values. We adopt traffic performance metrics used in \cite{yu2021reinforcement}, including total travel time $J_{\text{TTT}}$, fuel consumption $J_{\text {fuel }}$ and travel discomfort $J_{\text {D}}$ to evaluate the proposed methods. These performance metrics are given by
\begin{align} 
J_{\text{TTT}}= 
& \int_0^T \int_0^\ell \rho(x, t) d x d t ,\\ 
J_{\text {fuel }}= 
& \int_0^T \int_0^\ell \max \left\{0, b_0+b_1 v(x, t)+b_3 v^3(x, t)\right. \nonumber\\ 
& \left.+b_4 v(x, t) a(x, t)\right\} \rho(x, t) d x d t \label{eq:J-fuel},\\ 
J_{\text {D}}= 
& \int_0^T \int_0^\ell\left(a(x, t)^2+a_t(x, t)^2\right) \rho(x, t) d x d t ,
\end{align}
where $a(x, t)$ is defined as the local acceleration $a(x, t)=$ $v_t(x, t)+v(x, t) v_x(x, t)$ and $b_i$ are constant coefficients chosen as $b_0=25 \cdot 10^{-3}[1 / \mathrm{s}], b_1=24.5 \cdot 10^{-6}[1 / \mathrm{m}], b_3=$ $32.5 \cdot 10^{-9}\left[\mathrm{1s}^3 / \mathrm{m}^2\right], b_4=125 \cdot 10^{-6}\left[\mathrm{1s}^2 / \mathrm{m}^2\right]$.
Note that higher values for the three performance metrics correspond to increased traffic costs, consequently indicating worse traffic performance. For further details on these traffic metrics, see \cite{treiber2013traffic}.

The parameter \( c > 0 \) can be chosen in P-CETC, P-PETC, and P-STC to achieve a smaller total triggering number \( N_t \) compared to their regular counterparts. Generally, \( N_t \) decreases as \( c \) increases because a larger \( c \) provides the Lyapunov function with greater flexibility to deviate from a monotonic decrease, making control updates less likely to be triggered.

The average dwell time, denoted as $\bar{\Delta t_k}$, serves as a \emph{Safety Index (SI).} For small values of $\bar{\Delta t_k}$, drivers exiting the VSL zone are likely to be required to check and adjust their speed more frequently in response to rapidly changing VSL signs. As a consequence, an increased cognitive burden puts safety at risk and is likely to cause the drivers' speed adjustment errors. In light of the SI measure, P-ETC approaches are safer than their regular counterparts. 
In summary, TABLE \ref{tab:all-cmp} demonstrates a correlation between enhanced safety and an increase of the \emph{resource-aware parameter} $c$. Here, the `resource' being saved is (ironically) the risk inflicted upon the safety of the drivers. Remarkably,  the safety index improvement is 2.9$\times$, 2.9$\times$, and 1.3$\times$  for P-CETC, P-PETC, and P-STC, respectively, relative to their regular counterparts.

\begin{table}
\centering
\caption{Comparison of total triggering number $N_t$, average dwell-times $\bar {\Delta t_k} $ in minutes (safety index), and three traffic performance metrics between open-loop and R/P-ETCs within 60 minutes. A negative percentage implies lower traffic cost and better performance compared with open-loop scenario.}
\begin{tblr}{
  width = \linewidth,
  colspec = {Q[120]Q[50]Q[50]Q[60]Q[100]Q[100]Q[100]},
  cells = {c},
  cell{3}{5} = {r},
  cell{3}{6} = {r},
  cell{3}{7} = {r},
  cell{4}{1} = {r=5}{},
  cell{4}{5} = {r},
  cell{4}{6} = {r},
  cell{4}{7} = {r},
  cell{5}{5} = {r},
  cell{5}{6} = {r},
  cell{5}{7} = {r},
  cell{6}{5} = {r},
  cell{6}{6} = {r},
  cell{6}{7} = {r},
  cell{7}{5} = {r},
  cell{7}{6} = {r},
  cell{7}{7} = {r},
  cell{8}{5} = {r},
  cell{8}{6} = {r},
  cell{8}{7} = {r},
  cell{9}{5} = {r},
  cell{9}{6} = {r},
  cell{9}{7} = {r},
  cell{10}{1} = {r=5}{},
  cell{10}{5} = {r},
  cell{10}{6} = {r},
  cell{10}{7} = {r},
  cell{11}{5} = {r},
  cell{11}{6} = {r},
  cell{11}{7} = {r},
  cell{12}{5} = {r},
  cell{12}{6} = {r},
  cell{12}{7} = {r},
  cell{13}{5} = {r},
  cell{13}{6} = {r},
  cell{13}{7} = {r},
  cell{14}{5} = {r},
  cell{14}{6} = {r},
  cell{14}{7} = {r},
  cell{15}{5} = {r},
  cell{15}{6} = {r},
  cell{15}{7} = {r},
  cell{16}{1} = {r=5}{},
  cell{16}{5} = {r},
  cell{16}{6} = {r},
  cell{16}{7} = {r},
  cell{17}{5} = {r},
  cell{17}{6} = {r},
  cell{17}{7} = {r},
  cell{18}{5} = {r},
  cell{18}{6} = {r},
  cell{18}{7} = {r},
  cell{19}{5} = {r},
  cell{19}{6} = {r},
  cell{19}{7} = {r},
  cell{20}{5} = {r},
  cell{20}{6} = {r},
  cell{20}{7} = {r},
  vline{2} = {1-Z}{},
  hline{1-3,9,15,21} = {-}{},
  hline{4,10 ,16} = {-}{dashed},
}
          & $c$  & $N_t$ & $\bar {\Delta t_k}$ & $J_{\text{TTT}}$   & $J_{\text{fuel}}$  & $J_{\text{D}}$     \\
\!\!Open-loop\! & -    & -     & -                   & 4.41$\times10^{5}$ & 1.11$\times10^{4}$ & 4.14$\times10^{5}$ \\
R-CETC    & 0    & 43   & 1.361               & \!\!\!-1.52\%      & \!\!\!-1.40\%      & \!\!\!-80.34\%     \\
P-CETC    & 0.01 & 29    & 1.955               & \!\!\!-1.54\%      & \!\!\! -1.42\%      & \!\!\! -80.26\%     \\
          & 0.1  & 26    & 2.159               & \!\!\! -1.57\%      & \!\!\!-1.44\%      & \!\!\!-79.66\%     \\
          & 1    & 17    & 3.370              & \!\!\!-1.38\%      & \!\!\!-1.26\%      & \!\!\!-74.49\%     \\
          & 10   & 17    & 3.398               & \!\!\! -1.27\%      & \!\!\!-1.16\%      & \!\!\! -56.51\%     \\
          & 100  & 15    & 3.831               & \!\!\!-1.09\%      & \!\!\!-0.98\%      & \!\!\! -45.73\%     \\
R-PETC    & 0    & 43    & 1.358               & \!\!\!-1.52\%      & \!\!\! -1.40\%       & \!\!\!-80.34\%     \\
P-PETC    & 0.01 & 29    &  1.955               & \!\!\! -1.54\%      & \!\!\!-1.42\%      & \!\!\!-80.26\%     \\
          & 0.1  & 26    & 2.159               & \!\!\!-1.56\%      & \!\!\!-1.44\%      & \!\!\!-79.67\%     \\
          & 1    & 17    & 3.370               & \!\!\!-1.38\%      & \!\!\!-1.26\%      & \!\!\!-74.50\%     \\
          & 10   & 17    & 3.363               & \!\!\!-1.26\%      & \!\!\!-1.14\%      & \!\!\!-57.36\%     \\
          & 100  & 15    & 3.832              & \!\!\!-1.08\%      & \!\!\!-0.97\%      & \!\!\! -45.72\%     \\
R-STC     & 0    & 4420   & 0.0137               & \!\!\! -1.74\%      & \!\!\!-1.61\%      & \!\!\!-92.41\%     \\
P-STC     & 0.01 & 4420  & 0.0137              & \!\!\! -1.74\%      & \!\!\!-1.61\%      & \!\!\!-92.41\%     \\
          & 0.1  & 4416   &  0.0136               & \!\!\!-1.74\%      & \!\!\! -1.61\%      & \!\!\!-92.41\%     \\
          & 1    & 4384   &  0.0137               & \!\!\!-1.74\%      & \!\!\!-1.61\%      & \!\!\!-92.41\%     \\
          & 10   &  4148   &  0.0145               & \!\!\! -1.74\%      & \!\!\!-1.61\%      & \!\!\!-92.41\%     \\
          & 100  & 3345   & 0.018              & \!\!\! -1.74\%      & \!\!\! -1.61\%      & \!\!\! -92.41\%     
\end{tblr}
\label{tab:all-cmp}
\end{table}

The comparison of the three performance metrics of the proposed methods is listed in the TABLE \ref{tab:all-cmp}.
The total travel time ($J_{\text{TTT}}$) and fuel consumption ($J_{\text{fuel}}$) for the three P-ETC approaches are reduced by at least 1\% compared to the open-loop system, with the influence of $c$ being negligible, which is consistent with their R-ETC counterparts. Hence, reducing stop-and-go traffic has almost no effect on fuel consumption or travel time, at least for the linear ARZ model. The travel discomfort ($J_{\text{D}}$) is significantly reduced by both the R/P-CETC and R/P-PETC approaches compared to the open-loop system, with reductions ranging between 45\% and 80\%, as the controls suppress stop-and-go oscillations in the closed-loop system. The largest decrease in travel discomfort ($J_{\text{D}}$) is observed with the R/P-STC approach, which achieves a reduction of around 92\% compared to the open-loop system. This is because R/P-STC results in frequent control updates, closely emulating continuous-time control.

Even though the travel discomfort somewhat increases with larger $c$ for P-CETC and P-PETC compared with their regular counterparts, they still achieve a considerable discomfort reduction compared with the open-loop control. Therefore, when selecting the parameter $c$ for P-ETC, a balance should be sought between system performance metrics. The increasing $c$ has no adverse effect on the traffic metrics when contrasting R-STC with P-STC. In the cases of P-CETC and P-PETC, they may accomplish fewer $N_t$ and better safety at the expense of somewhat increased but still satisfactory travel discomfort.

In conclusion, the proposed P-ETCs reduce travel discomfort by 45\%-92\% relative to driver's natural behavior (open-loop) and increase driver safety, measured by the average dwell time, by as much as 2.9$\times$ relative to their regular counterparts with the frequent-switching VSL schedule.

\section{Conclusions}
\label{sec:Conclusions}
This paper has employed the recently introduced ETC approach known as performance-barrier ETC (P-ETC) to control the linearized Aw-Rascle-Zhang traffic model, a $2\times 2$ coupled hyperbolic PDE equipped with a varying speed limit. We have explored P-ETC across three configurations: continuous-time ETC (P-CETC), periodic ETC (P-PETC), and self-triggered control (P-STC). Unlike the existing regular ETC (R-ETC), where the closed-loop system's Lyapunov function is forced to decrease constantly, the proposed P-ETC allows the Lyapunov function of the closed-loop system to deviate from strict monotonic decrease, provided it remains below an acceptable performance barrier. This flexibility results in extended dwell times between events compared with R-ETC.
% To avoid the need for continuous monitoring of the triggering function in P-CETC, we have developed P-PETC and P-STC approaches. In P-PETC, the triggering function is evaluated only periodically to decide whether a control update is required, whereas in P-STC, the time of the next event is determined at the current event time.
We have also presented PETC and STC variants of R-ETC, which have not been previously explored for coupled hyperbolic PDEs. We have demonstrated that all proposed control approaches guarantee exponential convergence to zero in the spatial $L^2$ norm while ensuring Zeno-free behavior. The performance of the proposed methods has been illustrated through numerical simulations, and extensive comparisons between different methods have been provided, focusing on triggering number, driver's safety and traffic metrics such as vehicle fuel consumption, total travel time, and driver comfort.

For future work, we aim to investigate event-triggered control under quantization effects for traffic phenomena, which may enable more practical advisory speeds with variable speed limits (VSLs). Furthermore, event-triggered control for nonlinear ARZ PDEs is of interest, as nonlinear effects in traffic flow phenomena were not considered in this work.

\section*{Appendix: Proof of Theorem \ref{thm:R-CETC} (R-CETC)}

To streamline the proof of Theorem \ref{thm:R-CETC}, we first present Lemmas \ref{lem1} and \ref{mdt_lem}.  

\begin{lem}\label{lem1}Under the R-CETC approach \eqref{eq:U-R-CETC}-\eqref{eq:ODE-dot-m-r}, it holds that $\Gamma^r(t):=d^2(t)-\theta m^r(t)\leq 0$ and $m^r(t)> 0,$ for all $t\in [0,\sup(I^r))$, where $I^r=$ $\left\{t_0^r, t_1^r, t_2^r, \ldots\right\}$.
\end{lem}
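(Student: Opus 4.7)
\textbf{Proof plan for Lemma \ref{lem1}.} The plan is to run an induction over the event intervals $[t_k^r, t_{k+1}^r)$, simultaneously tracking the signs of $\Gamma^r(t)$ and $m^r(t)$, with the positivity of $m^r$ being the key that feeds back into the non-positivity of $\Gamma^r$ at the next event. The non-positivity of $\Gamma^r$ on each interval is essentially tautological: the triggering rule \eqref{eq:R-CETC-trigger-t} defines $t_{k+1}^r$ as the infimum of $t>t_k^r$ with $\Gamma^r(t)>0$, so by definition $\Gamma^r(t)\le 0$ on $[t_k^r,t_{k+1}^r)$. What must be shown is that this non-positivity, together with the initial condition $m^r(0)>0$, keeps $m^r(t)>0$ on the entire interval $[0,\sup(I^r))$.

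The main work is a single-interval Gronwall-type estimate. On $(t_k^r,t_{k+1}^r)$, the bound $d^2(t)\le \theta m^r(t)$ (coming from $\Gamma^r(t)\le 0$) substituted into the ODE \eqref{eq:ODE-dot-m-r}, combined with the obvious non-negativity of the three terms $\kappa_1\|\alpha[t]\|^2$, $\kappa_2\|\beta[t]\|^2$, $\kappa_3\alpha^2(\ell,t)$, gives
\begin{equation*}
\dot{m}^r(t) \;\ge\; -\eta\, m^r(t) - \theta_m\, d^2(t) \;\ge\; -\big(\eta+\theta\theta_m\big)\, m^r(t).
\end{equation*}
The comparison principle then yields $m^r(t)\ge m^r(t_k^r)\, e^{-(\eta+\theta\theta_m)(t-t_k^r)}$, so $m^r$ inherits strict positivity from $m^r(t_k^r)>0$ throughout $[t_k^r,t_{k+1}^r]$ (using the continuity condition $m^r(t_k^{r-})=m^r(t_k^{r})=m^r(t_k^{r+})$ to pass to the next interval).

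The induction is then: start at $t_0^r=0$ with $m^r(0)>0$ by assumption; the trigger definition forces $\Gamma^r\le 0$ on $[t_0^r,t_1^r)$; the Gronwall bound gives $m^r(t)>0$ on $[t_0^r,t_1^r]$; hence $m^r(t_1^r)>0$, and the argument repeats on $[t_1^r,t_2^r)$, and so on through all $k$ with $t_k^r<\sup(I^r)$. The one subtle point worth spelling out is the base of the induction at each event instant: because the control is updated at $t=t_k^r$, we have $d(t_k^r)=0$, so $\Gamma^r(t_k^r)=-\theta m^r(t_k^r)<0$, which both confirms non-positivity at the left endpoint and explains why the trigger cannot fire immediately.

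I do not anticipate a major obstacle here; the only item requiring care is the continuity of $\Gamma^r$ across the event times (used to conclude $\Gamma^r\le 0$ on the closed-on-the-left interval from the definition of the infimum), and this is provided by the jump-free convention on $m^r$ stated just after \eqref{eq:ODE-dot-m-r} together with the continuity of $d(t)$ between events that follows from Proposition \ref{prop:wellpose}.
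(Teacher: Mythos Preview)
Your proposal is correct and follows the standard argument that the paper defers to (Lemma 1 of \cite{espitiaEventBasedBoundaryControl2018}): on each $[t_k^r,t_{k+1}^r)$ the bound $\Gamma^r\le 0$ is immediate from the definition of the trigger, and then $d^2\le\theta m^r$ together with the nonnegative $\kappa_i$-terms in \eqref{eq:ODE-dot-m-r} yields $\dot m^r\ge -(\eta+\theta\theta_m)m^r$, so positivity of $m^r$ propagates by the comparison principle. One small clarification: the non-positivity of $\Gamma^r$ on the open interval $(t_k^r,t_{k+1}^r)$ really is tautological from the infimum definition and does not require continuity; continuity (via Proposition \ref{prop:wellpose} and the jump-free convention on $m^r$) is only needed to pass the strict positivity of $m^r$ across the event time $t_{k+1}^r$.
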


The proof is similar to that of Lemma 1 of \cite{espitiaEventBasedBoundaryControl2018}, and is hence omitted.

\begin{lem}\label{mdt_lem}Under the R-CETC approach \eqref{eq:U-R-CETC}-\eqref{eq:ODE-dot-m-r}, with $\kappa_{1},\kappa_{2},\kappa_{3}>0$ chosen as in \eqref{betas}-\eqref{gg3},  there exists a uniform minimal dwell-time $\tau_d>0$, given by \eqref{eq:tau_d}-\eqref{eq:eps-2}, between two triggering times, \textit{i.e.,} there exists a constant $\tau_d>0$ such that $t_{k+1}^r-t_{k}^r\geq\tau_d,$ for all $j\in\mathbb{N}$.\end{lem}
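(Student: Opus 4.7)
The plan is to monitor the scalar ratio $\phi(t):= d^2(t)/m^r(t)$ between consecutive triggers. At every triggering instant $t_k^r$ the control is refreshed, so $d(t_k^r)=0$ and therefore $\phi(t_k^r)=0$; a new event can fire only once $\phi$ has climbed to $\theta$, since this is precisely when $\Gamma^r = d^2 - \theta m^r$ reaches zero. Producing a strictly positive, state-independent lower bound on the time it takes $\phi$ to traverse $[0,\theta]$ will deliver the uniform minimal dwell-time $\tau_d$.

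First I would differentiate $\phi$ on an inter-event interval, use Young's inequality $2d\dot d \leq d^2 + (\dot d)^2$, plug in the estimate $(\dot d(t))^2 \leq \varepsilon_0 d^2(t) + \varepsilon_1\|\alpha[t]\|^2 + \varepsilon_2\|\beta[t]\|^2 + \varepsilon_3 \alpha^2(\ell,t)$ already supplied by Lemma \ref{lem:P-PETC-d}, and substitute the ODE \eqref{eq:ODE-dot-m-r} for $\dot m^r$. A careful regrouping using the tailored gains $\kappa_i = \varepsilon_i/[\theta(1-\sigma)]$ from \eqref{betas} merges the three spatial-norm contributions coming from $(\dot d)^2$ and from $\dot m^r$ into a single bracketed factor and yields
\begin{equation*}
\dot\phi(t) \leq a\phi(t) + \theta_m\phi(t)^2 + \frac{\varepsilon_1\|\alpha[t]\|^2 + \varepsilon_2\|\beta[t]\|^2 + \varepsilon_3\alpha^2(\ell,t)}{m^r(t)}\left(1 - \frac{\phi(t)}{\theta(1-\sigma)}\right),
\end{equation*}
with $a=1+\varepsilon_0+\eta$ as in \eqref{eq:a-def}.

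The crucial observation is that the bracket $1-\phi/[\theta(1-\sigma)]$ is non-positive once $\phi\geq\theta(1-\sigma)$, so on that regime the sign-indefinite remainder drops and the clean Bernoulli inequality $\dot\phi \leq a\phi + \theta_m\phi^2$ holds. Because $\phi$ starts at $0$ and is continuous, it must pass through $\theta(1-\sigma)$ before reaching $\theta$. From that moment onward the comparison principle identifies the worst-case trajectory with the separable ODE $\dot y = ay + \theta_m y^2$, $y(0)=\theta(1-\sigma)$; the partial-fraction identity $\tfrac{1}{y(a+\theta_m y)} = \tfrac{1}{a}\bigl(\tfrac{1}{y} - \tfrac{\theta_m}{a+\theta_m y}\bigr)$ integrates in closed form and the first hitting time of the level $\theta$ coincides exactly with the expression $\tau_d$ in \eqref{eq:tau_d}--\eqref{eq:eps-2}. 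The bound depends only on the design constants $a,\theta,\theta_m,\sigma$, hence it is uniform in $k$.

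The hard part will be the first step: the two differential inequalities for $d^2$ and for $m^r$ each carry three spatial-norm terms with no a priori relationship, and a naive ratio estimate would leave an unbounded residual of the form $\|\alpha[t]\|^2/m^r(t)$. The prescription $\kappa_i/\varepsilon_i = 1/[\theta(1-\sigma)]$ in \eqref{betas} is engineered precisely so that the weights of those norm terms collapse into the common factor $[1-\phi/(\theta(1-\sigma))]$, which changes sign exactly at the threshold $\phi = \theta(1-\sigma)$ that the argument exploits. The small-$\phi$ regime $\phi \in [0,\theta(1-\sigma))$ is deliberately not analyzed, because the dwell-time can be extracted purely from the large-$\phi$ regime $\phi \in [\theta(1-\sigma),\theta]$ by continuity.
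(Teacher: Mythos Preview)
Your proposal is correct and follows essentially the same route as the argument the paper defers to (Theorem~1 of \cite{espitiaEventBasedBoundaryControl2018}): track the ratio $\phi=d^2/m^r$, combine Young's inequality with the bound on $(\dot d)^2$ and the $\dot m^r$-dynamics, exploit the gain choice $\kappa_i=\varepsilon_i/[\theta(1-\sigma)]$ so that the norm residuals collapse into a factor that is non-positive once $\phi\geq\theta(1-\sigma)$, and then integrate the scalar comparison ODE $\dot y=ay+\theta_m y^2$ from $\theta(1-\sigma)$ to $\theta$ to recover exactly \eqref{eq:tau_d}. The paper itself omits the details, so your write-up is a faithful reconstruction; just note that the $(\dot d)^2$ estimate you cite from Lemma~\ref{lem:P-PETC-d} is stated there for the P-PETC setting but is really a property of the target-system dynamics \eqref{eq:sys-ab-d-a}--\eqref{eq:sys-ab-d-bd-cond-b} and applies verbatim on any inter-event interval of R-CETC.
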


The proof is similar to that of Theorem 1 of \cite{espitiaEventBasedBoundaryControl2018}, and is hence omitted. 

Due to the existence of a minimal dwell time $\tau_d$ guaranteed by Lemma \ref{mdt_lem}, Zeno behavior is absent, thereby proving R1. Next, R2 follows directly from Proposition 1 and Remark 1. Furthermore, the existence of solutions for all $t > 0$ implies that R3 follows from Lemma \ref{lem1}.

Now, let us proceed with the proofs of R4 and R5. Taking the time derivative of $V_1(t)$ given by \eqref{eq:Lyap-V} for all $t\in(t_k^r,t_{k+1}^r),j\in\mathbb{N}$ and integrating by parts, we obtain that
\begin{align}\label{dfg1dfnmhj}
\begin{split}
  \dot{V}_1(t) = & -\mu V_1(t)-C\alpha^2(\ell,t)e^{-\frac{\mu\ell}{v^{\star}}}+Cr_0^2r_1^2d^2(t)e^{\frac{\mu\ell}{(\gamma p^{\star}-v^{\star})}}.
\end{split}
\end{align}
Then, considering \eqref{eq:Lyap-V-r} and \eqref{eq:ODE-dot-m-r}, we can write
\begin{align}\label{fgshlhj}
\begin{split}
    \dot{V}^r(t) &= \dot{V}_1(t)+\dot{m}^r(t)\\
    &=-\mu V_1(t)-C\alpha^2(\ell,t)e^{-\frac{\mu\ell}{v^{\star}}}+Cr_0^2r_1^2d^2(t)e^{\frac{\mu\ell}{(\gamma p^{\star}-v^{\star})}}\\&\quad-\eta m^r(t)- \theta_m d^2(t)+\kappa_1\Vert\alpha[t]\Vert^2+\kappa_2\Vert\beta[t]\Vert^2\\&\quad+\kappa_3{\alpha}^2(\ell, t),
\end{split}
\end{align}
for $t\in(t_k^r,t_{k+1}^r),j\in\mathbb{N}$. Note from \eqref{eq:Lyap-V} that
\begin{align}\label{xxnmkihj}
    \Vert \alpha[t]\Vert^2+\Vert\beta[t]\Vert^2 \leq \frac{r}{C}V_1(t),
\end{align}
for all $t\geq 0$, where $r$ is given by \eqref{rr}. Then, we can obtain from \eqref{fgshlhj} that
\begin{align}
    \begin{split}
        \dot{V}^r(t) \leq&\hspace{-5pt}-\Big(\mu-\hspace{-3pt}\frac{\max\{\kappa_1,\kappa_2\}r}{C}\Big)V_1(t)-\hspace{-5pt}\Big(Ce^{-\frac{\mu\ell}{v^\star}}-\kappa_3\Big)\alpha^2(\ell,t)\\&-\Big(\theta_m-Cr_0^2r_1^2e^{\frac{\mu\ell}{(\gamma p^{\star}-v^{\star})}}\Big)d^2(t)-\eta m^r(t),
    \end{split}
\end{align}
for $t\in(t_k^r,t_{k+1}^r),j\in\mathbb{N}$. Recalling that $C>0$ is chosen such that \eqref{CC} is satisfied and $\theta_m>0$ is chosen as in \eqref{vvbnml}, we can obtain that
\begin{equation}
    \dot{V}^r(t) \leq -b^\star V^r(t),
\end{equation}
for $t\in(t_k^r,t_{k+1}^r),j\in\mathbb{N}$, where $b^\star>0$ is given by \eqref{eq:b*},\eqref{bbcfgj}. Then, considering the time continuity of $V^r(t)$, we can obtain \eqref{eq:Lyap-V-r-bd} valid for all $t>0$. Further, by following classical arguments involving the bounded invertibility of the backstepping transformations \eqref{eq:K1},\eqref{eq:K2},\eqref{eq:L1},\eqref{eq:L2}, we obtain R5.

%\section*{References}
\bibliographystyle{IEEEtranS}
\bibliography{main.bib}

\end{document}